\theoremstyle{plain} \newtheorem{defi}{Definition}[section]
\theoremstyle{plain} \newtheorem{prop}{Proposition}[section]
\theoremstyle{plain} \newtheorem{theor}{Theorem}[section] 
\theoremstyle{plain} \newtheorem{lemma}{Lemma}[section]
\theoremstyle{plain}  
\theoremstyle{remark} \newtheorem*{rem}{Remark}
\theoremstyle{plain} 
\theoremstyle{remark}
\newcommand\CROSS[1]{%
  \hbox{%
    \vbox{
      \hrule
      \kern2.5pt
      \hbox{$#1$\,\strut}
    }%
  \vrule
  }\mskip\thickmuskip
}
\begin{document}
\small{}
\begin{center}
\Large{\textbf{Digital calculus and finite groups \\ in quantum mechanics}}\\ 
~\\

\large{Vladimir Garc\'{\i}a-Morales}\\

\normalsize{}
~\\
Departament de Termodin\`amica, Universitat de Val\`encia, \\ E-46100 Burjassot, Spain
\\ garmovla@uv.es
\end{center}
\small{}
\noindent By means of a digit function that has been introduced in a recent formulation of classical and quantum mechanics, we provide a new construction of some infinite families of finite groups, both abelian and nonabelian, of importance for theoretical, atomic and molecular physics. Our construction is not based on algebraic relationships satisfied by generators, but in establishing the appropriate law of composition that induces the group structure on a finite set of nonnegative integers (the cardinal of the set being equal to the order of the group) thus making computations with finite groups quite straightforward. We establish the abstract laws of composition for infinite families of finite groups including all cyclic groups (and any direct sums of them), dihedral, dicyclic and other metacyclic groups, the symmetric groups of permutations of $p$ symbols and the alternating groups of even permutations. Specific examples are given to illustrate the expressions for the law of composition obtained in each case. 

\pagebreak

%
%

~ \\

\normalsize{}

\section{Introduction}

Symmetry plays a fundamental role in quantum mechanics and its study is the goal of group theory \cite{Weyl, WignerBOOK,Tinkham}. Groups occurring in physics are usually infinite: The standard model of particle physics is described by the unitary product of the infinite groups $SU(3)\times SU(2)\times U(1)$. Finite, discrete groups are also important \cite{Fairbairn, Kornyak, Ishimori, Kornyak1, Smirnov, Harrison1, Joshipura} since they occur as subgroups of the infinite ones.  For example, the alternating groups $A_{4}$ and $A_{5}$, the symmetric group $\text{Sym}_{4}$ and the dihedral groups $D_{2q}$ arise as subgroups of $SU(3)$  \cite{Fairbairn, Ludl, Grimus} and have been studied, for example, in connection with the invariance of lepton masses and mixing matrices \cite{Ludl}.  

Operations on finite groups are also more naturally connected to computation, where one usually replaces the real line by the unit circle $S^{1}$ or the real Fourier transform by the fast Fourier transform \cite{Terras}. In Chemistry, they capture the structural symmetries of molecules. For example, buckminsterfullerene $C_{60}$ \cite{Baggott} is described by the icosahedral symmetry group $A_{5}\times \mathbb{Z}_{2}$, which is the direct product of the two finite groups $A_{5}$  the alternating group of 5 symbols and $\mathbb{Z}_{2}$ the cyclic group of order 2: A knowledge of $A_{5}$ (of order 60) is necessary to understand the spectral lines of $C_{60}$ \cite{Terras}. 

All finite groups are subgroups of the symmetric group of $p$ symbols $\text{Sym}_{p}$ and, hence, can be represented in terms of permutations. They are usually constructed in terms of algebraic relationships obeyed by a few group generators \cite{Coxeter2}. In this article, we present a new construction, which provides explicit expressions for the laws of composition that induce the corresponding group structure in the finite set $S$ of the nonnegative integers in the interval $[0, p-1]$, with $p >1 \in \mathbb{N}$ being the order of the group. Thus, for infinite families of finite groups, we provide an explicit mathematical expression that allows their Cayley tables to be obtained in a straightforward manner. This is accomplished for all abelian finite groups (cyclic groups and direct sums of them) as well as some important families of nonabelian ones as the metacyclic groups (including the dihedral and dicyclic groups), the symmetric groups of permutations of $p$ elements (of order $p!$) and the alternating groups of even permutations of $p$ symbols (of order $p!/2$).  

Our constructive approach makes use of a digit function which is the central concept of a formulation of classical and quantum mechanics that we have recently introduced, based on the principle of least radix economy \cite{QUANTUM}. The digit function has been further shown to naturally describe fractals \cite{arxiv3} and cellular automata \cite{comphys, VGM1, VGM2, VGM3}, leading to a closed mathematical expression for the dynamics of these systems \cite{arxiv2}. Because finite groups have found an unexpected privileged place in this mathematical approach to quantum physics \cite{arxiv3}, let us first summarize our previous results to better motivate the present effort.

The principle of least radix economy \cite{QUANTUM} has been introduced as a generalization/reinterpretation of the principle of least action, so that quantum mechanics (in its orthodox Copenhagen interpretation) is naturally encompassed together with classical physics. The unconventional idea behind this principle comes from the connection made between physics and computation: it is assumed that physical laws and physical numbers are represented by a finite (or countably infinite) alphabet. We must, however, emphasize that \emph{this does not mean finiteness of spacetime or of any physical quantity involved}: Our approach is indifferent to the question whether nature is ultimately finite or not. What we mean by a finite alphabet of symbols can be made clear if we think in our representation of numbers in the decimal base: a finite alphabet containing 10 symbols is used to represent any number. Thus, the radix fixes the alphabet size, being synonymous with it. We claim in \cite{QUANTUM} that nature works always in the appropriate, most economical radix, in each situation (independently of our choice to represent the physical numbers involved) and that this choice has dynamical consequences. In fact, we ascribe the existence of classical and quantum physics to that very fact. \emph{The alphabet size has a physical meaning} \cite{QUANTUM}. Such physical alphabet size $\eta$ (a natural number) is specified by the floor (lower closest integer) function of the dimensionless Lagrangian action, i.e $\eta=\lfloor S/h \rfloor$, where $h$ is Planck's constant. Thus, in the classical limit, the alphabet size tends to infinity and in the quantum regime only a finite number of symbols is relevant. Physical laws are then derived in a unified manner by demanding that the following quantity, called radix economy
\begin{equation}
\mathcal{C}(\eta, S/h)=\eta \lfloor 1+\log_{\eta}(S/h)\rfloor
\end{equation}
attains a minimum \cite{QUANTUM}. When $S/h$ is infinitely large, this principle reduces to the principle of least action. Thus, the physical radix $\eta$ is the appropriate radix to represent any physical real-valued physical quantity $x$, which has then a radix-$\eta$ expansion given by \cite{QUANTUM, arxiv3}
\begin{equation}
x=\text{sign}(x)\sum_{k=-\infty}^{\lfloor \log_{\eta}|x|\rfloor} \eta^{k} \mathbf{d}_{\eta}(k,|x|) \label{idenintro}
\end{equation}
Note that this is an expansion for the \emph{numerical value} of $x$. (In \cite{QUANTUM} $x$ was always considered nonnegative.) This value explicitly depends on powers of the radix $\eta$ and on the digits $\mathbf{d}_{\eta}(k,|x|)$ that accompany those powers and which are nonnegative integer numbers in the interval $[0,\eta-1]$. (For the definition of the digit function see Eq. (\ref{cucuA}) below.) \emph{Freeing physics of the convention of an artificial choice of a fixed radix brings new degrees of freedom which, constrained by the principle of least radix economy, lead to a natural physical foundation for both classical and quantum mechanics} \cite{QUANTUM}.

The digit function entering in Eq. (\ref{idenintro}) has many interesting properties. For example, if $f$ and $g$ are functions over the finite set $S$, we have $f(g(k))=\mathbf{d}_{\eta}\left(g(k),\sum_{n=0}^{\eta-1}\eta^{n}f(n)\right)$ (see Theorem II.2 below, as well as \cite{arxiv2}). This property, called the composition-decomposition theorem allows a concise general expression for all deterministic cellular automata maps  to be found in terms of the digit function \cite{arxiv2} as
\begin{equation}
x_{t+1}^{j}=\mathbf{d}_{\eta}\left(\sum_{k=-r}^{l}\eta^{k+r}x_{t}^{j+k} , R \right)
\end{equation}
where $x_{t}^{j} \in [0,\eta-1]$ is a quantity that is updated at time $t$ on a site with label $j$. $l$ and $r$ denote the number of neighbors to the left and to the right, respectively and $R$ is the Wolfram code \cite{Wolfram} of the cellular automaton rule \cite{arxiv2}. Deterministic cellular automata play a crucial role in the formulation of quantum mechanics by Gerard t'Hooft \cite{hooft, hooftNEW}.

We have also recently shown \cite{arxiv3} that \emph{any} quantity $x$ (an scalar, vector, matrix, etc.) can be further partitioned/quantized (regardless of the variables on which it depends) in terms of a linear superposition of $\lambda$ different fractal objects $x_{n}$ so that $x=\sum_{n=0}^{\lambda-1}x_{n}$. Each $x_{n}$ is given by
\begin{eqnarray}
&&x_{n}= \frac{2\ \text{sign}x}{\lambda(\lambda-1)}\sum_{k=-\infty}^{\lfloor \log_{p}|x| \rfloor}\eta^{k}\mathbf{d}_{\eta}(k,|x|)g_{\lambda}(|k|,n) \label{lampm}
\end{eqnarray}
where $g_{\lambda}(|k|,n): \mathbb{Z} \times \mathbb{Z} \to S$ is a function that maps the integers $k$ and $n$ to the finite set $S$ of the non-negative integers $\in [0,\lambda-1]$ and whose action can be specified by a table with $|k|$ given in the rows and $n$ in the columns, so that $g_{\lambda}(|k|+\lambda,n)=g_{\lambda}(|k|,n+\lambda)=g_{\lambda}(|k|,n)$ is $\lambda$-periodic on both $k$ and $n$ and such that it has the latin square property when restricted to the $p$ elements within one $\lambda$-period of both variables. \emph{The Cayley tables of finite groups, loops and semigroups of order $\lambda$ are all valid instances of $g_{\lambda}(|k|,n)$}. Thus, the fractal objects are connected by symmetry relationships and one further has $x=\sum_{n=0}^{\lambda-1}x_{n}$ for every value of the variables on which $x$ may depend. Thus \emph{any conserved quantity} (energy, charge, additive quantum numbers, etc.) \emph{can be partitioned in nontrivial ways through the mathematical methods opened by the digit function}. Since such partitioning is connected to symmetry relationships induced by the function $g_{\lambda}(|k|,n)$ in Eq. (\ref{lampm}) we undertake in this article the task of establishing the form of this function for most prominent families of finite groups. This coincides with the problem of finding the function of two integer arguments $m \in S$, $n \in S$ that gives the Cayley table of the finite group in question. We \emph{establish} such expression for infinite families of abelian and nonabelian groups. 

In the next section we discuss in more detail the digit function and some of its number theoretic and algebraic properties. Then we proceed to the construction of cyclic groups (that were briefly considered in \cite{arxiv3}), both additive and multiplicative, as well as all their direct sums, metacyclic groups (including the dihedral groups and the dycyclic groups, which themselves include e.g. the quaternion group), symmetric groups and alternating groups.

\section{The digit function and its basic properties}

The basic mathematical properties of the digit function have been already discussed in some very recent works \cite{QUANTUM, comphys, arxiv2, arxiv3}. Since we shall need most of them, we provide them here with a concise proof for the sake of completeness (the reader is also referred to \cite{arxiv2, arxiv3}).

We shall consider in this article $x$ a non-negative integer. To see how the expression that defines the digit function is obtained, we expand the numerical value of $x$ in radix $p$ as
\begin{equation}
x=a_{N}p^{N}+a_{N-1}p^{N-1}+\ldots+a_{1}p+a_{0} \label{bexpanin}
\end{equation}
where the $a_{k}$'s are all nonnegative integers $\in [0, p-1]$. This representation is unique for each nonnegative integer \cite{Andrews}. Now, note that 
\begin{equation}
\left \lfloor \frac{x}{p^{k}} \right \rfloor=a_{N}p^{N-k}+a_{N-1}p^{N-i-1}+\ldots +a_{k+1}p+a_{k} 
\end{equation}
and
\begin{equation}
p\left \lfloor \frac{x}{p^{k+1}} \right \rfloor=a_{N}p^{N-k}+a_{N-1}p^{N-k-1}+\ldots +a_{k+1}p 
\end{equation}
whence, by subtracting both equations, we obtain the digit $a_{k}$. 

\begin{defi} The digit function, for $p \in \mathbb{N}$ and $k, x$ nonnegative integers is defined as
\begin{equation}
\mathbf{d}_{p}(k,x)\equiv \left \lfloor \frac{x}{p^{k}} \right \rfloor-p\left \lfloor \frac{x}{p^{k+1}} \right \rfloor  \qquad (\equiv a_{k} \text{ above})  \label{cucuA}
\end{equation}
and gives the $k$-th digit of $x$ written in radix $p$. If $p=1$ the digit function returns zero.
\end{defi}


With the digit function we can express Eq. (\ref{bexpanin}) as 
\begin{equation}
x=\sum_{k=0}^{N} p^{k} \mathbf{d}_{p}(k,x) \label{idenin}
\end{equation}

\begin{prop} \cite{QUANTUM} \label{totalno} One has $N=\lfloor \log_{p}x \rfloor$ in Eq. (\ref{bexpanin}).  
\end{prop}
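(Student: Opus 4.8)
The plan is to sandwich $x$ between two consecutive powers of $p$ and then take logarithms. The crucial observation that makes this work is that $N$ in Eq.~(\ref{bexpanin}) denotes the index of the \emph{leading} (highest) nonvanishing term, so its coefficient must satisfy $a_N \neq 0$; combined with $a_N \in [0,p-1]$ this forces $1 \le a_N \le p-1$. I would also assume $p \ge 2$ (for $p=1$ the digit function returns zero by definition and the logarithm is degenerate) and $x \ge 1$ (so that $\log_p x$ is defined and the expansion has a genuine leading term).

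First I would establish the lower bound. Since every $a_k \ge 0$ and $a_N \ge 1$, discarding all terms but the leading one gives
\begin{equation}
x = \sum_{k=0}^{N} a_k p^k \ge a_N p^N \ge p^N .
\end{equation}
Next I would establish the upper bound by inserting the maximal admissible digit into every slot. Because each $a_k \le p-1$, summing the resulting geometric series yields
\begin{equation}
x = \sum_{k=0}^{N} a_k p^k \le (p-1)\sum_{k=0}^{N} p^k = (p-1)\,\frac{p^{N+1}-1}{p-1} = p^{N+1}-1 < p^{N+1}.
\end{equation}

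Combining the two estimates places $x$ in the half-open interval $p^{N} \le x < p^{N+1}$. Taking $\log_p$, which is monotone increasing for $p \ge 2$, gives $N \le \log_p x < N+1$, and applying the floor function then yields $\lfloor \log_p x \rfloor = N$, which is the assertion.

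I do not expect a serious obstacle here; the argument is essentially the remark that the radix-$p$ representation confines $x$ to $[p^N, p^{N+1})$. The only points demanding care are recording explicitly that the leading coefficient $a_N$ is nonzero (without which the lower bound collapses and a smaller value of $N$ would also be admissible) and invoking the finite geometric-sum identity cleanly in the upper bound. Everything else is monotonicity of $\log_p$ and the definition of the floor.
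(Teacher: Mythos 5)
Your proof is correct, but it runs in the converse direction to the paper's. You argue from the expansion to the logarithm: the leading coefficient $a_N \ge 1$ gives $x \ge p^N$, the geometric sum gives $x \le p^{N+1}-1 < p^{N+1}$, and monotonicity of $\log_p$ plus the definition of the floor then yields $\lfloor \log_p x\rfloor = N$. The paper argues from the logarithm to the digits: starting from $\lfloor\log_p x\rfloor \le \log_p x < \lfloor\log_p x\rfloor + 1$ it deduces $p^{\lfloor\log_p x\rfloor} \le x < p^{\lfloor\log_p x\rfloor+1}$, hence $\lfloor x/p^k\rfloor = 0$ and so $\mathbf{d}_p(k,x)=0$ for every $k > \lfloor\log_p x\rfloor$ (Eq. (\ref{bound})), and concludes that the digit count in Eq. (\ref{bexpanin}) is $\lfloor\log_p x\rfloor + 1$. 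Both arguments pivot on the same bracketing of $x$ between consecutive powers of $p$, but they buy different things: the paper's route produces the vanishing statement Eq. (\ref{bound}) as a reusable byproduct (it is the form of the fact invoked implicitly elsewhere in the paper), while yours is tighter on the converse half of the equivalence --- you state explicitly that $a_N \neq 0$ is indispensable (with leading zeros allowed, $N$ could exceed $\lfloor\log_p x\rfloor$ and the claim would fail), whereas the paper's argument as written only excludes nonzero digits \emph{above} $\lfloor\log_p x\rfloor$ and leaves implicit that the digit at position $\lfloor\log_p x\rfloor$ itself is nonzero, a point that follows from the lower half $p^{\lfloor\log_p x\rfloor}\le x$ of its own bracketing but is never spelled out.
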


\begin{proof} 
This can be easily seen from the fact that since $p > 1$ and $
\lfloor \log_{p}x \rfloor \le  \log_{p}x  < \lfloor \log_{p}x \rfloor+1$ we have $p^{\lfloor \log_{p}x \rfloor} \le  x  < p^{\lfloor \log_{p}x \rfloor+1}$ and therefore $\left \lfloor \frac{x}{p^{k}} \right \rfloor=0 \qquad \forall k > \lfloor \log_{p}x \rfloor$. From the definition Eq. (\ref{cucuA}) this in turn implies that the sum is bounded from above by 
$\lfloor \log_{p}x \rfloor$. Thus,
\begin{equation}
 \mathbf{d}_p(k,x) = 0 \quad \qquad \forall k > \lfloor \log_{p}x \rfloor  \label{bound}  
\end{equation}
which indicates that the total number of digits is equal to $\lfloor \log_{p}x \rfloor+1=N+1$.
\end{proof}

\begin{lemma} The digit function satisfies (for $n$ and $m$ nonnegative integers)
\begin{eqnarray}
\mathbf{d}_{p}(k,x+np^{k+1}) &=& \mathbf{d}_{p}(k,x) \label{pro1}\\
\mathbf{d}_{p}(k,p^{k}x) &=&\mathbf{d}_{p}(0,x) \label{pro3} \\
\mathbf{d}_{p}(0,\mathbf{d}_{p}(k,x)) &=& \mathbf{d}_{p}(k,x) \label{pro2} \\
\mathbf{d}_{p}(0,n+\mathbf{d}_{p}(0,m)) &=& \mathbf{d}_{p}(0,n+m) \label{pro2b} \\
\mathbf{d}_{p}(0,n\mathbf{d}_{p}(0,m)) &=& \mathbf{d}_{p}(0,nm) \label{pro2c} \\
 \mathbf{d}_p(n,p^{m}) &=& \mathbf{d}_p(m,p^{n})= \delta_{mn}  \label{krone} \\
 \mathbf{d}_p(k,n) &=& \sum_{j=0}^{m>n}\mathbf{d}_p(k,j)\mathbf{d}_p(j,p^{n}) \label{expan}
\end{eqnarray}
where $\delta_{mn}$ is Kronecker's delta (it returns $1$ if $m=n$ and zero otherwise).
\end{lemma}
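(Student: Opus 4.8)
The plan is to derive every identity directly from Definition (\ref{cucuA}) together with two elementary facts about the floor function: that $\lfloor y+M\rfloor=\lfloor y\rfloor+M$ for any real $y$ and any integer $M$, and the nesting rule $\lfloor\lfloor x/p^{k}\rfloor/p\rfloor=\lfloor x/p^{k+1}\rfloor$. As a preliminary step I would record the range property of the digit function: writing $q=\lfloor x/p^{k}\rfloor$ and applying the nesting rule gives $\mathbf{d}_{p}(k,x)=q-p\lfloor q/p\rfloor$, which is precisely $q\bmod p$ and therefore lies in $\{0,1,\dots,p-1\}$. This bound is what makes several of the later identities collapse.

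First I would prove (\ref{pro1}) and (\ref{pro3}) by pure substitution. For (\ref{pro1}), inserting $x+np^{k+1}$ into the definition and pulling the integers $np$ and $n$ out of the two floors via $\lfloor y+M\rfloor=\lfloor y\rfloor+M$ makes the added terms cancel. The same manipulation works verbatim when $n$ is a negative integer, provided the argument stays nonnegative, and I would note this explicitly since it is needed below. For (\ref{pro3}), the substitution $x\mapsto p^{k}x$ reduces the first floor to $\lfloor x\rfloor=x$ and the second to $\lfloor x/p\rfloor$, giving exactly $\mathbf{d}_{p}(0,x)$. Identity (\ref{pro2}) is then immediate from the range property: since $a:=\mathbf{d}_{p}(k,x)\in[0,p-1]$ we have $\lfloor a/p\rfloor=0$, so $\mathbf{d}_{p}(0,a)=a$.

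Next I would treat (\ref{pro2b}) and (\ref{pro2c}) as consequences of (\ref{pro1}) at $k=0$. Writing $\mathbf{d}_{p}(0,m)=m-p\lfloor m/p\rfloor$ shows that replacing $m$ by $\mathbf{d}_{p}(0,m)$ inside $\mathbf{d}_{p}(0,\cdot)$ only subtracts an integer multiple of $p$ from the argument --- namely $p\lfloor m/p\rfloor$ in the additive case and $np\lfloor m/p\rfloor$ in the multiplicative case --- which (\ref{pro1}) with $k=0$ absorbs, leaving $\mathbf{d}_{p}(0,n+m)$ and $\mathbf{d}_{p}(0,nm)$ respectively. For (\ref{krone}) I would use (\ref{pro3}): when $m\ge n$, factor $p^{m}=p^{n}p^{m-n}$ to get $\mathbf{d}_{p}(n,p^{m})=\mathbf{d}_{p}(0,p^{m-n})$, which equals $1$ if $m=n$ and $0$ if $m>n$; when $m<n$ both floors in the definition vanish, giving $0$. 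Since the result is $\delta_{mn}$, the claimed symmetry in $m,n$ is automatic. Finally, (\ref{expan}) follows by substituting $\mathbf{d}_{p}(j,p^{n})=\delta_{jn}$ from (\ref{krone}) into the sum, which projects out the single term $j=n$; the upper limit $m>n$ only needs to be large enough that $j=n$ is included.

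The calculations are all routine, so the only real care is bookkeeping with signs: the cleanest obstacle is justifying the extension of (\ref{pro1}) to negative integer multiples of $p^{k+1}$ used in (\ref{pro2b}) and (\ref{pro2c}), which requires checking that the reduced arguments $n+\mathbf{d}_{p}(0,m)$ and $n\,\mathbf{d}_{p}(0,m)$ remain nonnegative so that the digit function is defined on them. Both are, since $\mathbf{d}_{p}(0,m)\ge0$ and $n\ge0$. With that noted, each identity reduces to a one-line floor computation.
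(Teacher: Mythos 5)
Your proposal is correct and follows essentially the same route as the paper: direct substitution into the definition for (\ref{pro1}) and (\ref{pro3}), the range bound $0\le\mathbf{d}_{p}(k,x)\le p-1$ for (\ref{pro2}), Euclidean division $m=p\lfloor m/p\rfloor+\mathbf{d}_{p}(0,m)$ combined with (\ref{pro1}) for (\ref{pro2b}) and (\ref{pro2c}), and the definition plus (\ref{krone}) for (\ref{expan}). The only cosmetic difference is that the paper applies (\ref{pro1}) by \emph{adding} the multiple $ap$ back to the reduced argument, which sidesteps the negative-multiple/nonnegativity bookkeeping you (correctly) handle.
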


 
\begin{proof} Eq. (\ref{pro1}) and (\ref{pro3}) follow directly from the definition \cite{arxiv3}. 

Eq. (\ref{pro2}) follows by noting that $\mathbf{d}_{p}(k,x)$ is an integer $0 \le \mathbf{d}_{p}(k,x) \le p-1$. Then 
\begin{equation}
\mathbf{d}_{p}(0,\mathbf{d}_{p}(k,x))=\mathbf{d}_{p}(k,x)-p\left \lfloor \frac{\mathbf{d}_{p}(k,x)}{p} \right \rfloor=\mathbf{d}_{p}(k,x)
\end{equation}
since $\left \lfloor \mathbf{d}_{p}(k,x)/p \right \rfloor=0$. Eq. (\ref{pro2b}) is proved by using Euclidean division, since we can always write $m=ap+b$ with $a, b$ integers and $b=\mathbf{d}_{p}(0,m)$. Therefore
\begin{equation}
\mathbf{d}_{p}(0,n+\mathbf{d}_{p}(0,m))=\mathbf{d}_{p}(0,n+ap+\mathbf{d}_{p}(0,m))=\mathbf{d}_{p}(0,n+m)  
\end{equation}
where Eq. (\ref{pro1}) has also been used. Eq. (\ref{pro2c}) is also proved in a similar way, using the distributive property of ordinary addition and multiplication as well. Eq. (\ref{krone}) is simply proved from the definition and Eq. (\ref{expan}) is a simple consequence of it.
\end{proof}

Eq. (\ref{pro2c}) is closely related to the following classical result, found in almost every textbook on elementary number theory.

\begin{theor} \textbf{\emph{(Euclid's Lemma.)}} \label{eulem} If a prime number $p$ divides the product $ab$ of two nonnegative numbers $a$ and $b$ then $p$ divides $a$ or $b$ (or both). 
\end{theor}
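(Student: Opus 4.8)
The plan is to translate divisibility into the language of the digit function and then run a descent (minimal-counterexample) argument built on Euclidean division, exactly the tool already used to establish Eq.~(\ref{pro2c}). First I would record the key dictionary: for a nonnegative integer $x$ one has $p \mid x$ if and only if $\mathbf{d}_{p}(0,x)=0$, since $\mathbf{d}_{p}(0,x)$ is precisely the residue of $x$ modulo $p$. Applying Eq.~(\ref{pro2c}) to each factor in turn gives $\mathbf{d}_{p}(0,ab)=\mathbf{d}_{p}(0,\mathbf{d}_{p}(0,a)\,\mathbf{d}_{p}(0,b))$, so the residue of a product depends only on the residues of the factors. This lets me replace $a$ and $b$ by $\mathbf{d}_{p}(0,a),\mathbf{d}_{p}(0,b)\in\{0,\dots,p-1\}$ without altering any of the three divisibility statements in the theorem, and it disposes immediately of the cases $\mathbf{d}_{p}(0,a)=0$ or $\mathbf{d}_{p}(0,b)=0$, in which $p\mid a$ or $p\mid b$ already holds.

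The remaining task is then purely combinatorial: if $1\le a,b\le p-1$ and $p\mid ab$, I must reach a contradiction. I would argue by minimal counterexample. Among all $a$ with $1\le a\le p-1$ admitting some $b$ with $1\le b\le p-1$ and $p\mid ab$, pick the smallest such $a$. Since $a=1$ would force $p\mid b$, contradicting $b\le p-1$, we have $a\ge 2$. Now apply Euclidean division to write $p=qa+r$ with $0\le r<a$; primality of $p$ together with $2\le a\le p-1$ forces $r\neq 0$, hence $1\le r<a$. Multiplying by $b$ gives $rb=pb-q(ab)$, and since $p$ divides both $pb$ and $ab$, it divides $rb$. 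But $r$ satisfies $1\le r<a\le p-1$, so $p\nmid r$, while $p\nmid b$; thus $r$ is a strictly smaller witness of the same kind as $a$, contradicting the minimality of $a$. This contradiction shows no such pair $a,b$ exists, which proves that $p\mid a$ or $p\mid b$.

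The step I expect to be the crux is resisting the shorter ``cancellation'' route. One is tempted to argue that multiplication by a fixed nonzero residue permutes $\{0,\dots,p-1\}$ modulo $p$ and to conclude directly, but establishing that permutation property already presupposes the absence of zero divisors, i.e. the lemma itself, so that argument is circular. The descent above sidesteps this by grounding everything in Euclidean division, which is logically prior, and by producing a genuinely smaller witness $r$ at each putative failure. An equivalent non-circular route would invoke B\'ezout's identity, writing $1=\alpha p+\beta a$ when $p\nmid a$ and multiplying through by $b$, but that requires first developing the greatest-common-divisor machinery, whereas the minimal-counterexample argument needs only the Euclidean division already in hand from the proofs of Eqs.~(\ref{pro2b})--(\ref{pro2c}).
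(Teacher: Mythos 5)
Your proof is correct, and it shares the paper's opening move but diverges at the decisive step. Like the paper, you use Eq.~(\ref{pro2c}) to reduce everything to the residues $\mathbf{d}_{p}(0,a),\mathbf{d}_{p}(0,b)\in[0,p-1]$, so that the claim becomes: no product of two residues in $[1,p-1]$ is divisible by $p$. The paper finishes this in one line, arguing that the product of two nonnegative integers smaller than $p$ ``cannot be equal to $p$'' since $p$ is prime, and concluding that one residue must vanish. You instead finish with a minimal-counterexample descent via Euclidean division: take the smallest $a\ge 2$ admitting a partner $b$, write $p=qa+r$ with $1\le r<a$, and observe that $p\mid rb$ exhibits a strictly smaller witness. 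What your route buys is significant: the paper's one-liner only excludes the residue product being \emph{exactly} $p$, whereas $\mathbf{d}_{p}(0,xy)=0$ says only that $xy$ is \emph{some} multiple of $p$, and since $xy$ can be as large as $(p-1)^{2}$ the possibilities $2p,3p,\dots$ are not excluded by that reasoning --- excluding them is precisely the content of the lemma, so the paper's proof as written is incomplete at exactly the point your descent treats rigorously. Your closing remarks are also apt: the residue-permutation shortcut is indeed circular here, and the B\'ezout alternative you mention is in fact the device the paper itself uses later, in the proof of Theorem~\ref{prog}, to construct inverses in the multiplicative group $\left(\mathbb{Z}/p\mathbb{Z}\right)^{\times}$, so that route is equally consistent with the paper's toolkit, just not logically prior to it the way Euclidean division is.
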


\begin{proof} If $p$ divides $ab$ we have
\begin{equation}
\mathbf{d}_{p}(0,ab)=0
\end{equation}
from which, by using Eq. (\ref{pro2c}) twice
\begin{equation}
\mathbf{d}_{p}(0,\mathbf{d}_{p}(0,a)\mathbf{d}_{p}(0,b))=0
\end{equation}
Now since both numbers $\mathbf{d}_{p}(0,a)$ and $\mathbf{d}_{p}(0,b)$ are both nonnegative integers lower than $p$, their product cannot be equal to $p$, because $p$ is prime. Therefore, one necessarily has either $\mathbf{d}_{p}(0,a)=0$ or $\mathbf{d}_{p}(0,b)=0$ (or both).
\end{proof}

The following simple result allows to split into parts any composition of functions over finite sets. This is a most important property of the digit function, discovered in \cite{arxiv2}. It brings the dynamics of cellular automata \cite{Wolfram, VGM1, VGM2, VGM3} to a closed form, and it shall be used below in deriving the symmetric group $\text{Sym}_{p}$. 

\begin{theor} \textbf{\emph{(Composition-decomposition theorem. \cite{arxiv3})}} \label{compy} Let  $g:S \to S$ and $f: S \to S$ be two functions on the finite set of the integers in the interval $[0,p-1]$ ($p>1 \in \mathbb{N}$) and $n \in S$. Then
\begin{equation}
f(g(n))=\mathbf{d}_{p}\left(g(n), \sum_{k=0}^{p-1}p^{k}f(k)\right)=\left \lfloor \frac{\sum_{k=0}^{p-1}p^{k}f(k)}{p^{g(n)}} \right \rfloor-p \left \lfloor \frac{\sum_{k=0}^{p-1}p^{k}f(k)}{p^{g(n)+1}} \right \rfloor \label{condecon}
\end{equation}
\end{theor}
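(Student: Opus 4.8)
The plan is to recognize that the integer $X:=\sum_{k=0}^{p-1}p^{k}f(k)$ is nothing but a number whose radix-$p$ digits are precisely the values $f(0),f(1),\dots,f(p-1)$, so that extracting its $g(n)$-th digit returns $f(g(n))$. The whole statement thus reduces to the single identity $\mathbf{d}_{p}(k,X)=f(k)$, valid for every $k\in[0,p-1]$.

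First I would observe that, since $f:S\to S$, each coefficient satisfies $0\le f(k)\le p-1$, i.e. it lies exactly in the admissible range for a radix-$p$ digit. Comparing $X=\sum_{k=0}^{p-1}p^{k}f(k)$ with the expansion (\ref{bexpanin})--(\ref{idenin}) and invoking the uniqueness of the radix-$p$ representation of a nonnegative integer (used just after (\ref{bexpanin})), the $f(k)$ must coincide with the genuine digits of $X$; hence $\mathbf{d}_{p}(k,X)=f(k)$ for all $k\in[0,p-1]$. Since $g(n)\in S=[0,p-1]$ the substitution $k=g(n)$ is legitimate and gives $f(g(n))=\mathbf{d}_{p}(g(n),X)$. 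The second equality in (\ref{condecon}) is then immediate: it is just the defining formula (\ref{cucuA}) of the digit function written out with $x=X$ and $k=g(n)$.

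I expect the one point requiring care — and the main obstacle if one prefers a self-contained argument that does not merely cite uniqueness — to be the direct verification that no carries spoil the identification $\mathbf{d}_{p}(k,X)=f(k)$. To settle it I would compute $\lfloor X/p^{j}\rfloor$ by splitting the sum at $k=j$: the terms with $k\ge j$ contribute the integer $\sum_{k=j}^{p-1}p^{k-j}f(k)$, while the tail with $k<j$ is bounded using $f(k)\le p-1$ by the geometric estimate $\sum_{k=0}^{j-1}(p-1)p^{k-j}=1-p^{-j}<1$, so it contributes nothing to the floor. Therefore $\lfloor X/p^{j}\rfloor=\sum_{k=j}^{p-1}p^{k-j}f(k)$, and substituting this into (\ref{cucuA}) makes the sums for $\lfloor X/p^{j}\rfloor$ and $p\lfloor X/p^{j+1}\rfloor$ telescope, leaving exactly $f(j)$. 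Setting $j=g(n)$ then completes the proof.
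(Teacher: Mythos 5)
Your proof is correct, and it rests on the same key insight as the paper's — that the integer $X=\sum_{k=0}^{p-1}p^{k}f(k)$ encodes $f$ in its radix-$p$ digits — but the execution is genuinely different and, in fact, more rigorous. The paper's proof first shifts the digit index via Eq. (\ref{pro3}), writing $\mathbf{d}_{p}\left(g(n),\sum_{k=0}^{p-1}p^{k}f(k)\right)=\mathbf{d}_{p}\left(0,\sum_{k=0}^{p-1}p^{k-g(n)}f(k)\right)$, and then concludes verbally that the zeroth digit of this quantity is the coefficient whose accompanying power of $p$ has exponent zero, i.e. $f(g(n))$. You instead identify the digits of $X$ directly: since $0\le f(k)\le p-1$, uniqueness of the radix-$p$ expansion forces $\mathbf{d}_{p}(k,X)=f(k)$, and substituting $k=g(n)$ finishes; the second equality in (\ref{condecon}) is just the definition (\ref{cucuA}). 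What your route buys is precision at the two points where the paper is loose: (i) the paper applies Eq. (\ref{pro3}) to the argument $\sum_{k=0}^{p-1}p^{k-g(n)}f(k)$, which is generally not an integer (the terms with $k<g(n)$ are fractional), even though that property was stated and proved only for nonnegative integer arguments; (ii) the paper's claim that the zeroth digit ``extracts'' $f(g(n))$ tacitly assumes that neither the higher terms nor the fractional tail interfere — exactly the carry-free estimate you supply, $\sum_{k=0}^{j-1}(p-1)p^{k-j}=1-p^{-j}<1$, which yields $\left\lfloor X/p^{j}\right\rfloor=\sum_{k=j}^{p-1}p^{k-j}f(k)$ and hence, by telescoping, $\mathbf{d}_{p}(j,X)=f(j)$. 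So: same idea, but your version closes the gaps the paper leaves implicit, at the modest cost of a slightly longer argument.
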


\begin{proof} From Eq. (\ref{pro3}) we have
\begin{equation}
\mathbf{d}_{p}\left(g(n), \sum_{k=0}^{p-1}p^{k}f(k)\right)=\mathbf{d}_{p}\left(0, \sum_{k=0}^{p-1}p^{k-g(n)}f(k)\right)
\end{equation}
This function extracts the zeroth digit of the quantity $\sum_{k=0}^{p-1}p^{k-g(n)}f(k)$, i.e. the value of $f(k)$ for which the exponent of the accompanying power of $p$ within the sum is zero. But this happens when $k=g(n)$ from which the result follows.
\end{proof}

\begin{lemma} Let $n\in \mathbb{N}$. Further properties of the digit function are
\begin{eqnarray}
\mathbf{d}_{np}(k,x)&=&\mathbf{d}_{p}\left(k, \frac{x}{n^{k}} \right)+p\mathbf{d}_{n}\left(k, \frac{x}{p^{k+1}} \right)=\mathbf{d}_{n}\left(k, \frac{x}{p^{k}} \right)+n\mathbf{d}_{p}\left(k, \frac{x}{n^{k+1}} \right)    \label{reldi1} \\
0&=& \prod_{m=0}^{p-1}\left[\mathbf{d}_{p}(k,x)-m\right]    \label{reldi2} \\
\mathbf{d}_{p}(k,x)&=&\left \lfloor p \left \{ \frac{x}{p^{k+1}} \right \} \right \rfloor=  p \left \{ \frac{x}{p^{k+1}} \right \}-\left \{ \frac{x}{p^{k}} \right \}   \label{cucufrac} 
\end{eqnarray}
\begin{eqnarray}
k&=&\mathbf{d}_{p}\left(k,\ p^{p}-\frac{p^{p}-p}{(p-1)^{2}} \right) \label{Galoprop} \\
\sum_{k=0}^{\lfloor \log_{p}n \rfloor} \mathbf{d}_p(k,n)&=&
n-(p-1)\sum_{k=1}^{\lfloor \log_{p}n \rfloor} \left \lfloor \frac{n}{p^{k}} \right \rfloor \qquad \qquad \qquad \qquad \qquad \qquad \qquad \qquad \qquad \label{flory}
\end{eqnarray}
where $\{ \ldots \}$ denotes the fractional part. 
\end{lemma}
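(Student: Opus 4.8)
The plan is to treat the five identities separately, since each reduces to elementary manipulation of the floor-based definition (\ref{cucuA}) and none of them needs any of the others.

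I would dispatch (\ref{reldi2}) and (\ref{cucufrac}) first, as they are immediate. For (\ref{reldi2}), recall that $\mathbf{d}_{p}(k,x)$ is by construction a nonnegative integer with $0\le \mathbf{d}_{p}(k,x)\le p-1$; hence it equals exactly one $m\in\{0,\dots,p-1\}$, so the corresponding factor $[\mathbf{d}_{p}(k,x)-m]$ vanishes and the whole product is zero. For (\ref{cucufrac}) I would substitute $\lfloor y\rfloor=y-\{y\}$ into the two floors appearing in (\ref{cucuA}); the two copies of $x/p^{k}$ cancel, leaving at once the rightmost form $\mathbf{d}_{p}(k,x)=p\{x/p^{k+1}\}-\{x/p^{k}\}$. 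The leftmost form then follows by noting that this quantity is an integer while $\{x/p^{k}\}\in[0,1)$, so $\lfloor p\{x/p^{k+1}\}\rfloor=\lfloor \mathbf{d}_{p}(k,x)+\{x/p^{k}\}\rfloor=\mathbf{d}_{p}(k,x)$.

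For (\ref{flory}) I would set $f_{k}\equiv\lfloor n/p^{k}\rfloor$ and $L\equiv\lfloor\log_{p}n\rfloor$, write $\sum_{k=0}^{L}\mathbf{d}_{p}(k,n)=\sum_{k=0}^{L}(f_{k}-pf_{k+1})$, and reindex the second piece as $p\sum_{k=1}^{L+1}f_{k}$. Using $f_{0}=n$ and the fact that the definition of $L$ forces $n<p^{L+1}$ and hence $f_{L+1}=0$, the terms collect into $n-(p-1)\sum_{k=1}^{L}f_{k}$, which is the claim. For (\ref{reldi1}) I would expand both candidate right-hand members through (\ref{cucuA}), letting the digit function act on the (generally non-integer) arguments $x/n^{k}$ and $x/p^{k+1}$ purely through its floors, and then merge the denominators by $n^{k}p^{k}=(np)^{k}$, $n^{k}p^{k+1}=(np)^{k}p$, and $p^{k+1}n^{k+1}=(np)^{k+1}$. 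The two inner contributions $\pm p\lfloor x/((np)^{k}p)\rfloor$ cancel telescopically, leaving $\lfloor x/(np)^{k}\rfloor-np\lfloor x/(np)^{k+1}\rfloor=\mathbf{d}_{np}(k,x)$; the second equality in (\ref{reldi1}) is the identical computation with the roles of $n$ and $p$ interchanged.

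The step I expect to be the real obstacle is (\ref{Galoprop}), the only identity that needs an idea rather than bookkeeping. The assertion is equivalent to saying that the argument $X\equiv p^{p}-(p^{p}-p)/(p-1)^{2}$ has $k$-th radix-$p$ digit equal to $k$ for every $k$, that is, by uniqueness of the expansion (\ref{idenin}), that $X=\sum_{k=0}^{p-1}kp^{k}$. I would therefore evaluate this arithmetic-geometric sum in closed form (by differentiating the finite geometric series, or by summation by parts) to obtain $\sum_{k=0}^{p-1}kp^{k}=(p^{p+2}-2p^{p+1}+p)/(p-1)^{2}$, and then verify by direct algebra that $X$ collapses to the same rational expression. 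Since the digits so produced all lie in $[0,p-1]$, the expansion is legitimate and (\ref{idenin}) yields $\mathbf{d}_{p}(k,X)=k$. The delicate points are pinning down the summation formula and confirming the algebraic identity; once $X=\sum_{k=0}^{p-1}kp^{k}$ is in hand, the digit statement is automatic.
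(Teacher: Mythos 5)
Your proposal is correct and follows essentially the same route as the paper's own proof: factor-by-factor floor cancellation for (\ref{reldi1}), the range-of-values argument for (\ref{reldi2}), the $\lfloor y\rfloor = y - \{y\}$ substitution for (\ref{cucufrac}), telescoping for (\ref{flory}), and identifying $p^{p}-(p^{p}-p)/(p-1)^{2}$ with the arithmetic-geometric sum $\sum_{k=0}^{p-1}kp^{k}$ for (\ref{Galoprop}). The only differences are cosmetic: you derive the two forms in (\ref{cucufrac}) in the opposite order, and you spell out the closed-form evaluation of $\sum_{k=0}^{p-1}kp^{k}$ that the paper merely asserts.
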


\begin{proof} By using the definition, we have
\begin{eqnarray}
\mathbf{d}_{p}\left(k, \frac{x}{n^{k}} \right)+p\mathbf{d}_{n}\left(k, \frac{x}{p^{k+1}} \right)&=&
\left \lfloor \frac{x}{p^{k}n^{k}} \right \rfloor -p \left \lfloor \frac{x}{p^{k+1}n^{k}} \right \rfloor
+p \left \lfloor \frac{x}{p^{k+1}n^{k}} \right \rfloor-np\left \lfloor \frac{x}{p^{k+1}n^{k+1}} \right \rfloor \nonumber \\
&=& \left \lfloor \frac{x}{(np)^{k}} \right \rfloor -np\left \lfloor \frac{x}{(np)^{k+1}} \right \rfloor =\mathbf{d}_{np}(k,x)
\end{eqnarray}
which proves Eq. (\ref{reldi1}). To prove Eq. (\ref{reldi2}) is also trivial by realizing the fact that 
$\mathbf{d}_{p}(k,x)$ takes only any of the integer values between 0 and $p-1$ and all these values are scanned by the product. 

Eq. (\ref{cucufrac}) follows also from straightforward calculations, by using that $y=\lfloor y \rfloor+\{ y \}$ for any real number $y$. Thus
\begin{eqnarray}
\mathbf{d}_{p}(k,x)&=& \left \lfloor \frac{x}{p^{k}} \right \rfloor -p \left \lfloor \frac{x}{p^{k+1}} \right \rfloor = \left \lfloor p\frac{x}{p^{k+1}} \right \rfloor -\left \lfloor p \left \lfloor \frac{x}{p^{k+1}} \right \rfloor \right \rfloor = \left \lfloor p \left \{ \frac{x}{p^{k+1}} \right \} \right \rfloor \nonumber \\ 
&=& \frac{x}{p^{k}}  - \left \{ \frac{x}{p^{k}} \right \} -p\frac{x}{p^{k+1}}+p \left \{ \frac{x}{p^{k+1}} \right \}=p \left \{ \frac{x}{p^{k+1}} \right \}-\left \{ \frac{x}{p^{k}} \right \}     \nonumber
\end{eqnarray}
To prove Eq. (\ref{Galoprop}) note that
\begin{equation}
\sum_{k=0}^{p-1}kp^{k}= p^{p}-\frac{p^{p}-p}{(p-1)^{2}} 
\end{equation}
This, hence, means that the $k$-th digit of $p^{p}-\frac{p^{p}-p}{(p-1)^{2}}$ is equal to $k$ in radix $p$, which is nothing but Eq. (\ref{Galoprop}). Finally Eq. (\ref{flory}) can also be easily proved because the sum telescopes
\begin{eqnarray}
&&\sum_{k=0}^{\lfloor \log_{p}n \rfloor} \mathbf{d}_p(k,n) = \sum_{k=0}^{\lfloor \log_{p}n \rfloor} \left(\left\lfloor \frac{n}{p^{k}} \right \rfloor -p\left \lfloor \frac{n}{p^{k+1}} \right \rfloor \right)
\nonumber \\
&&= 
\left \lfloor \frac{n}{p^{0}} \right \rfloor-p\left \lfloor \frac{n}{p^{\lfloor \log_{p}n \rfloor+1}} \right \rfloor-(p-1)\sum_{k=1}^{\lfloor \log_{p}n \rfloor} \left \lfloor \frac{n}{p^{k}} \right \rfloor=n-(p-1)\sum_{k=1}^{\lfloor \log_{p}n \rfloor} \left \lfloor \frac{n}{p^{k}} \right \rfloor \nonumber
\end{eqnarray} 
When $p$ is prime the l.h.s. of Eq. (\ref{flory}) corresponds to the exponent with which $p$ appears in the prime factorization of $A!$ \cite{Andrews}.
\end{proof}

We shall use the following result in constructing direct sums of finite groups.

\begin{theor} \label{numeralsys}
Let $p_{0}$, $p_{1}$,... $p_{N-1} \in \mathbb{N}$. The following relationship holds 
\begin{equation}
\mathbf{d}_{p_{0}p_{1}\ldots p_{N-1}}\left(k, x \right)=\sum_{h=0}^{N-1}\mathbf{d}_{p_{h}}\left(k, \frac{x}{(\prod_{m=0}^{h-1}p_{m})^{k+1}(\prod_{n=h+1}^{N-1}p_{n})^{k}}\right)\prod_{j=0}^{h-1}p_{j} \label{numeralgen}
\end{equation}
\end{theor}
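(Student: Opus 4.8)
The plan is to prove the identity by induction on the number of factors $N$, using the two-factor relation in Eq.~(\ref{reldi1}) as both the base case and the engine of the inductive step. First I would observe that Eq.~(\ref{reldi1}), read in its second form with $n=p_{0}$ and $p=p_{1}$, is precisely Eq.~(\ref{numeralgen}) for $N=2$ (while $N=1$ is the trivial identity, since every product over an empty index range equals $1$). It is also worth noting at the outset that, although the digit function was defined on nonnegative integers, its defining formula Eq.~(\ref{cucuA}) makes sense verbatim for any real argument because it involves only floor functions; since the rational arguments appearing in Eq.~(\ref{numeralgen}) are handled through the same floor-based manipulations already used to establish Eq.~(\ref{reldi1}), no integrality assumption is needed.

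For the inductive step I would assume Eq.~(\ref{numeralgen}) for any $N-1$ factors and factor the full radix as $p_{0}p_{1}\cdots p_{N-1}=p_{0}\,P'$ with $P'\equiv p_{1}\cdots p_{N-1}$. Applying Eq.~(\ref{reldi1}) with $n=p_{0}$ and $p=P'$ gives
\begin{equation}
\mathbf{d}_{p_{0}P'}(k,x)=\mathbf{d}_{p_{0}}\!\left(k,\frac{x}{(P')^{k}}\right)+p_{0}\,\mathbf{d}_{P'}\!\left(k,\frac{x}{p_{0}^{k+1}}\right).
\end{equation}
The first term is immediately the $h=0$ summand of Eq.~(\ref{numeralgen}): the empty products give $\prod_{j=0}^{-1}p_{j}=1$ and $\prod_{m=0}^{-1}p_{m}=1$, while $\prod_{n=1}^{N-1}p_{n}=P'$, so this term equals $\mathbf{d}_{p_{0}}(k,\,x/(P')^{k})$.

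The remaining work is to expand the second term by applying the induction hypothesis to the $N-1$ bases $p_{1},\ldots,p_{N-1}$ and to the argument $y\equiv x/p_{0}^{k+1}$, then reconcile indices. After relabelling the hypothesis via $q_{i}=p_{i+1}$ so that its running index becomes $h$ with $1\le h\le N-1$, each summand carries an argument denominator $(\prod_{m=1}^{h-1}p_{m})^{k+1}(\prod_{n=h+1}^{N-1}p_{n})^{k}$ and a prefactor $\prod_{j=1}^{h-1}p_{j}$. Multiplying through by the external $p_{0}$ and absorbing the $p_{0}^{k+1}$ hidden inside $y$, the two decisive simplifications are $p_{0}^{k+1}(\prod_{m=1}^{h-1}p_{m})^{k+1}=(\prod_{m=0}^{h-1}p_{m})^{k+1}$ in the denominator and $p_{0}\prod_{j=1}^{h-1}p_{j}=\prod_{j=0}^{h-1}p_{j}$ in the prefactor; these convert the expanded second term into exactly the $1\le h\le N-1$ part of Eq.~(\ref{numeralgen}). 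Combining it with the $h=0$ term closes the induction.

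The only genuine obstacle is this index bookkeeping: tracking which products start at $0$ versus $1$, the shift of the tail product $\prod_{n=h+1}^{N-1}$ under relabelling, and the $k+1$ versus $k$ exponents so that the stray $p_{0}$ factors merge cleanly into the lower-indexed products. Nothing here is deep, but it is the step where an error is most likely to creep in, so I would carry out the relabelling $q_{i}=p_{i+1}$ \emph{explicitly} rather than by inspection.
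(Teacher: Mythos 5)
Your proof is correct and follows essentially the same route as the paper: induction on the number of factors $N$, with the two-factor identity Eq.~(\ref{reldi1}) serving as both the base case and the engine of the inductive step. The only difference is cosmetic---you peel off the first factor $p_{0}$ and apply the induction hypothesis to the tail product $p_{1}\cdots p_{N-1}$ (hence the explicit relabelling), whereas the paper peels off the last factor and applies the hypothesis to the head product, so the absorbed factor enters the summand denominators as $p_{N}^{k}$ in the tail product rather than as $p_{0}^{k+1}$ in the head product and prefactor.
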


\begin{proof} For $N=1$ the result is trivially valid and for $N=2$ it reduces to Eq. (\ref{reldi1}). Let us assume the result valid for $N$ factors. Then, for $N+1$ factors we have, by using Eq. (\ref{reldi1})
\begin{eqnarray}
&&\mathbf{d}_{p_{0}p_{1}\ldots p_{N}}\left(k, x \right)=\mathbf{d}_{p_{0}p_{1}\ldots p_{N-1}}\left(k, \frac{x}{p_{N}^{k}} \right)+p_{0}p_{1}\ldots p_{N-1}\mathbf{d}_{p_{N}}\left(k, \frac{x}{(p_{0}p_{1}\ldots p_{N-1})^{k+1}} \right) \nonumber \\
&&=\sum_{h=0}^{N-1}\mathbf{d}_{p_{h}}\left(k, \frac{x/p_{N}^{k}}{(\prod_{m=0}^{h-1}p_{m})^{k+1}(\prod_{n=h+1}^{N-1}p_{n})^{k}}\right)\prod_{k=0}^{h-1}p_{k}+\mathbf{d}_{p_{N}}\left(k, \frac{x}{(\prod_{m=0}^{N-1}p_{k})^{k+1}} \right)\prod_{j=0}^{N-1}p_{j} \nonumber \\
&&=\sum_{h=0}^{N}\mathbf{d}_{p_{h}}\left(k, \frac{x}{(\prod_{m=0}^{h-1}p_{m})^{k+1}(\prod_{n=h+1}^{N}p_{n})^{k}}\right)\prod_{j=0}^{h-1}p_{j} \qedhere \nonumber
\end{eqnarray}
\end{proof}
Theorem \ref{numeralsys} systematically allows mixed-radix numeral systems to be constructed. 
In order to see how, let us consider, for simplicity, $x=n$ be an integer number of $N$ digits, such that $n \in [0,P-1]$ where $P=p_{0}\cdot p_{1}\cdot p_{N-1}$. Then $p_{h}$ can be chosen as radix for the $h$-th digit of $n$. Therefore, from the theorem above we have, since $n < P$
\begin{equation}
n=\mathbf{d}_{P}\left(0, n \right)=
\mathbf{d}_{p_{0}p_{1}\ldots p_{N-1}}\left(0, n \right)=\sum_{h=0}^{N-1}\mathbf{d}_{p_{h}}\left(0, \frac{n}{\prod_{m=0}^{h-1}p_{m}}\right)\prod_{k=0}^{h-1}p_{k}
\end{equation}
and, thus, the $h$-th digit of $n$ in this mixed-radix numeral system is given by
\begin{equation}
\mathbf{d}_{p_{h}}\left(0, \frac{n}{\prod_{m=0}^{h-1}p_{m}}\right)
\end{equation}
An example of mixed-radix numeral system is provided by the factorial number system \cite{KnuthII, Knuth}, also called factoradic. Such system provides a Gray code that can be put in a one-to-one correspondence with the permutations of $N$ distinct elements \cite{Knuth}. In this system, the less significant digit is always 0, and $p_{0}$ is simply taken as $1$ and we have $p_{h}=h+1$ for $h \in [0,N-1]$. Therefore, in this numeral system the $h$-th digit is given by
\begin{equation}
\mathbf{d}_{h+1}\left(0, \frac{n}{h!}\right)
\end{equation}

The fundamental theorem of arithmetic \cite{Hardy}  is usually proved by invoking Euclid's lemma  above. We give here another simple proof based in Theorem \ref{numeralsys}. Although the existence part is not original and invokes induction as usual, the uniqueness part constitutes a successful application of Theorem \ref{numeralsys}.

\begin{theor} \textbf{\emph{(Fundamental theorem of arithmetic.)}} \label{fundari} Every integer $n$ greater than 1 either is prime itself or is the product of prime numbers. Although the order of the primes in the second case is arbitrary, the primes themselves are not. 
\end{theor}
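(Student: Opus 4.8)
The plan is to prove the two halves separately: existence (that $n$ is a product of primes) and uniqueness (that the multiset of prime factors is determined by $n$). For existence I would argue by strong induction on $n$. The base case $n=2$ is prime. For the inductive step, either $n$ is prime, in which case there is nothing to do, or $n=ab$ with $1<a,b<n$; the induction hypothesis applied to $a$ and $b$ expresses each as a product of primes, and concatenating these two factorizations expresses $n$ as a product of primes. This half is routine and is not where the digit-function machinery enters.

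For uniqueness, the first ingredient I would establish is the multi-factor form of Euclid's lemma: if a prime $p$ divides a product $a_0 a_1 \cdots a_{r-1}$, then $p$ divides at least one factor $a_i$. In the language of the digit function this reads $\mathbf{d}_p(0, a_0\cdots a_{r-1})=0 \Rightarrow \mathbf{d}_p(0,a_i)=0$ for some $i$, and it follows from Theorem \ref{eulem} by induction on $r$, splitting $a_0\cdots a_{r-1}=a_0\,(a_1\cdots a_{r-1})$ and applying Eq. (\ref{pro2c}) exactly as in the proof of Euclid's lemma. Next I would observe that the smallest divisor $s>1$ of $n$ is \emph{intrinsic} to $n$ --- it is characterized, factorization-independently, as the least $s>1$ with $\mathbf{d}_s(0,n)=0$ --- and that $s$ is necessarily prime, since a proper nontrivial factor of $s$ would be an even smaller divisor of $n$, contradicting minimality.

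The descent then proceeds as follows. Suppose $n=p_0 p_1\cdots p_{N-1}=q_0 q_1 \cdots q_{M-1}$ are two prime factorizations written in nondecreasing order. Since $s$ divides $n=\prod_i p_i$, the multi-factor Euclid's lemma forces $s=p_i$ for some $i$, and minimality gives $s=p_0$; the same argument applied to the $q$-factorization gives $s=q_0$, so $p_0=q_0=s$. Cancelling this common factor passes to $n/s<n$, whose factorization is unique by the induction hypothesis, and therefore the two factorizations of $n$ coincide as multisets. The role of Theorem \ref{numeralsys} here is that, read at $k=0$, it supplies a \emph{closed formula} for the digits of any integer in the mixed-radix numeral system built from the ordered prime factors, so the representation of each integer is manifestly unique; this is what makes the cancellation step $n\mapsto n/s$ bookkeeping-consistent and underlies the claim that the ordered string of primes is an invariant of $n$. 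I expect the genuine obstacle to be precisely the multi-factor Euclid step --- lifting the two-factor Theorem \ref{eulem} to an arbitrary product and pinning down that $s$ must actually \emph{occur} among the $p_i$ rather than merely divide their product --- together with verifying that the induction on $n$ terminates; once those are in hand the remainder is formal.
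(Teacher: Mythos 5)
Your proof is correct, but it takes a genuinely different route from the paper's. The paper explicitly advertises its uniqueness argument as one that \emph{avoids} Euclid's lemma: it sets $P=n=p_{0}p_{1}\cdots p_{N-1}$ in Theorem \ref{numeralsys}, notes $\mathbf{d}_{n}(0,n)=0$, concludes that every (non-negative) term in the resulting mixed-radix sum of Eq. (\ref{numeralproof}) must vanish separately, i.e. $\mathbf{d}_{p_{h}}\bigl(0,\, n/\prod_{m=0}^{h-1}p_{m}\bigr)=0$ for each $h$, and then substitutes the second factorization $n=p_{0}'\cdots p_{N-1}'$ into these conditions to argue that each $p_{h}$ must occur among the $p_{j}'$. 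Your proof instead runs on exactly the lemma the paper claims to bypass: you lift the two-factor Theorem \ref{eulem} to products of arbitrarily many factors by induction, characterize the least divisor $s>1$ of $n$ as a prime equal to the least prime in either factorization, cancel it, and descend by strong induction. (Your appeal to Theorem \ref{numeralsys} for the ``bookkeeping'' of the cancellation step is decorative --- ordinary division suffices --- so that theorem plays no real role in your argument, whereas it is the centerpiece of the paper's.) As for what each buys: yours is the airtight classical argument, and it supplies precisely the inference the paper leaves implicit --- the paper's final step, ``there is a $p_{j}'=p_{h}$ for every $h$,'' is exactly the statement that a prime dividing a product of primes equals one of them, i.e. multi-factor Euclid, which the paper asserts from Eq. (\ref{mandator1}) rather than derives from its digit identities. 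The paper's version, in exchange, recasts uniqueness as a statement about vanishing digits in a mixed-radix expansion, which is the advertised point of that section, but at the cost of that gap. Your own diagnosis of where ``the genuine obstacle'' lies was accurate, and your proof closes it; it simply does so by the route the paper set out to avoid.
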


\begin{proof} We need to show that the decomposition exists, i.e. that every integer $n >1$ is a product of primes, and that this product is unique. The first part is proved by induction. We first assume it is true for all numbers between $1$ and $n$. If $n$ is prime, there is nothing more to prove and we have just a 'product' of one prime factor. If $n$ is not a prime, there are integers $a$ and $b$, such that $n = ab$ and $1 < a \le b < n$. By the induction hypothesis, $a = p_{0}p_{1}\ldots p_{N-1}$ and $b = q_{0}q_{1}\ldots q_{M-1}$ are products of primes. But then $n = ab = p_{0}p_{1}\ldots p_{N-1}q_{0}q_{1}\ldots q_{M-1}$ is also a product of primes.

We now show that the prime number decomposition is unique. Let us assume that $n$ has decomposition $p_{0}p_{1}\ldots p_{N-1}$ and that there exists another possible decomposition $p_{0}'p_{1}'\ldots p_{N-1}'$ as well. From Eq. (\ref{numeralgen}), by taking $P=n=p_{0}p_{1}\ldots p_{N-1}$, we have  
 \begin{equation}
\mathbf{d}_{p_{0}p_{1}\ldots p_{N-1}}\left(0, n \right)=\sum_{h=0}^{N-1}\mathbf{d}_{p_{h}}\left(0, \frac{n}{\prod_{m=0}^{h-1}p_{m}}\right)\prod_{j=0}^{h-1}p_{j}=0 \label{numeralproof}
\end{equation}
since $\mathbf{d}_{p_{0}p_{1}\ldots p_{N-1}}\left(k, n \right)=\mathbf{d}_{n}\left(k, n \right)=0$. This then necessarily means that all prefactors in the terms within the sum of the right hand side must be zero separately (otherwise the corresponding term in the sum would contribute a positive quantity), i.e.
\begin{equation}
\mathbf{d}_{p_{h}}\left(0, \frac{n}{\prod_{m=0}^{h-1}p_{m}}\right)=0 \label{mandator1}
\end{equation}
Let us now replace in this expression the other representation $n=p_{0}'p_{1}'\ldots p_{N-1}'$. We observe, however, that in order for Eq. (\ref{mandator1}) to hold, $n$ must contain all factors $p_{h}$, so that $\frac{n}{\prod_{m=0}^{h-1}p_{m}}$ is an integer multiple of $p_{h}$. Then necessarily, there is a $p_{j}'=p_{h}$ for every $h$. Which, in turn, means that the decomposition $n=p_{0}p_{1}\ldots p_{N-1}$ is unique (up to reordering of the factors). 
\end{proof}

\section{Construction of the abelian finite groups} \label{finagrou}

\subsection{The cyclic group $C_{p}$}

We now focus in the finite set $S$ of nonnegative integers $0,\ 1,\ \ldots,\ p-1$ with $p>1 \in \mathbb{N}$ and construct operations such that this finite set is endowed with the group structure. 

%

The following theorem makes explicit an important consequence of the fact that the first integer digit $\mathbf{d}_{p}(0,x)$ governs the divisibility of a nonnegative integer $x$, giving the remainder upon division modulo $p$. Thus, such function implements the rules of modular arithmetic 


\begin{theor} \cite{arxiv3} \label{cyc} Let $S$ be the finite set of nonnegative integers $0,\ 1,\ \ldots,\ p-1$ with $p>1 \in \mathbb{N}$ and let $m, n \in S$. Under the operation
\begin{equation}
m+_{p}n \equiv \mathbf{d}_{p}\left(0, m+n \right)=g_{p}(m,n)  \label{adimod}
\end{equation}  
the integers in $S$ constitute the finite cyclic group $C_{p}$. Furthermore, the function $f: x\in \mathbb{Z} \to S$, $f(x)=\mathbf{d}_{p}(0,x)$ is also a group homomorphism between all integers $\mathbb{Z}$ under ordinary addition and the group $C_{p}$. 
\end{theor}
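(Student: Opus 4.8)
The plan is to verify the group axioms for $(S,+_{p})$ directly from the properties collected in Lemma II.1, and then to obtain the homomorphism statement as an almost immediate corollary. The key preliminary observation, which I would record first, is that by the definition (\ref{cucuA}) one has $\mathbf{d}_{p}(0,x)=x-p\lfloor x/p\rfloor$, so that $\mathbf{d}_{p}(0,x)$ reduces $x$ modulo $p$ and always lands in $[0,p-1]$; in particular $\mathbf{d}_{p}(0,m)=m$ for every $m\in S$. This yields \emph{closure} at once, since $m+_{p}n\in[0,p-1]=S$, and it identifies $0$ as the neutral element, as $m+_{p}0=\mathbf{d}_{p}(0,m)=m$.

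For associativity I would evaluate both bracketings and show each equals $\mathbf{d}_{p}(0,m+n+l)$. The only tool needed is Eq. (\ref{pro2b}), which strips an inner digit function off the second summand: writing $(m+_{p}n)+_{p}l=\mathbf{d}_{p}\!\left(0,\,l+\mathbf{d}_{p}(0,m+n)\right)$ and applying Eq. (\ref{pro2b}) collapses it to $\mathbf{d}_{p}(0,l+m+n)$, while the opposite bracketing collapses the same way. Commutativity is immediate from the commutativity of ordinary addition inside the digit function, and for inverses I would exhibit $p-m$ as the inverse of each nonzero $m$ (with $0$ self-inverse), verifying $m+_{p}(p-m)=\mathbf{d}_{p}(0,p)=0$.

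To show the group is \emph{cyclic} of order $p$, I would note that the $p$ elements of $S$ are distinct by construction, so the order is $p$, and then exhibit $1$ as a generator: the $k$-fold composition satisfies $1+_{p}\cdots+_{p}1=\mathbf{d}_{p}(0,k)$, which runs through all of $S$ as $k=0,1,\ldots,p-1$. This identifies $(S,+_{p})$ with $C_{p}$. For the homomorphism claim I would check $f(x+y)=f(x)+_{p}f(y)$ for arbitrary $x,y\in\mathbb{Z}$: unwinding the right-hand side to $\mathbf{d}_{p}\!\left(0,\,\mathbf{d}_{p}(0,x)+\mathbf{d}_{p}(0,y)\right)$ and applying Eq. (\ref{pro2b}) twice, using commutativity of ordinary addition to reposition the nested digit function as needed, reduces it to $\mathbf{d}_{p}(0,x+y)=f(x+y)$.

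I do not anticipate a genuine obstacle; every axiom reduces to an identity about $\mathbf{d}_{p}(0,\cdot)$ already packaged in Lemma II.1. The only point demanding care is the orientation of Eq. (\ref{pro2b}): it carries the inner digit function on the \emph{second} summand, so before an application where the nesting sits on the first summand I must use commutativity of ordinary addition to move it into position.
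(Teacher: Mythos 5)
Your proposal is correct and follows essentially the same route as the paper's own proof: direct verification of the group axioms with Eq.~(\ref{pro2b}) doing the work for associativity, $0$ as the neutral element, $p-m$ as the inverse, the orbit of $1$ establishing cyclicity, and Eq.~(\ref{pro2b}) again giving the homomorphism property. Your extra remark about the orientation of Eq.~(\ref{pro2b}) (inner digit on the second summand, fixed by commutativity of ordinary addition) is a point the paper glosses over silently, but it does not change the argument.
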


\begin{proof} The operation Eq. (\ref{adimod}) is nothing but the addition modulo $p$ of $n$ and $m$ which is well known to have the cyclic group structure and proving it amounts to a simple exercise in elementary algebra. Although the proof can be found in \cite{arxiv3}, because of its importance for what follows we give it here explicitly also for completeness.
We show that all group axioms are satisfied, together with the properties that the group is abelian and that is generated by a single element, of order $p$. 
\begin{itemize}
\item \emph{1. Closure}: The digit function yields the remainder of $m+n$ under division by $p$ which is, of course, an integer $\in [0, p-1]$ as well.
\item \emph{2. Associative property}: We have 
\begin{equation}
\mathbf{d}_{p}\left(0, m+\mathbf{d}_{p}\left(0, n+k \right)\right)=\mathbf{d}_{p}\left(0, m+n+k \right)=\mathbf{d}_{p}\left(0, \mathbf{d}_{p}\left(0, m+n \right)+k \right) \nonumber
\end{equation}
where Eq. (\ref{pro2b}) has been used twice.
\item \emph{3. Neutral element}: The neutral element is 0, as in the ordinary sum.
\item \emph{4. Inverse element}: The inverse element of $m$ is $p-m$, since $\mathbf{d}_{p}\left(0, m+p-m \right)=0$.
\end{itemize}
Thus, the group axioms are satisfied. Furthermore
\begin{itemize}
\item $\mathbf{d}_{p}\left(0, m+n \right)=\mathbf{d}_{p}\left(0, n+m \right)$, the group is abelian.
\item A single element generates the whole group. Let us just consider $n=1$ and operate repeatedly with it. For a group of order $p$ we have, from Eq. (\ref{pro1}) 
\begin{equation}
\mathbf{d}_{p}\left(0, 1+1 \right)=2, \ \ldots, \ \mathbf{d}_{p}\left(0, 1+p-2 \right)=p-1, \ \mathbf{d}_{p}\left(0, 1+p-1 \right)=0,\  \mathbf{d}_{p}\left(0, 1+p \right)=1  \nonumber
\end{equation} 
The orbit of the element $1$ has order $p$ equal to the one of the group and the group is cyclic. 
\end{itemize}
Let now $x$ and $y$ be any integers. We have, from Eq. (\ref{pro2b})
\begin{equation}
\mathbf{d}_{p}\left(0, x+y \right)=\mathbf{d}_{p}\left(0, \mathbf{d}_{p}\left(0, x \right)+\mathbf{d}_{p}\left(0, y \right) \right)=\mathbf{d}_{p}\left(0, x \right)+_{p}\mathbf{d}_{p}\left(0, y \right)
\end{equation} 
and this expression explicitly shows that the function constitutes the group homomorphism  stated in the theorem.
\end{proof}

The following remarkable facts are a consequence of the latin square property of the Cayley table.

\begin{prop} The following identities hold $\forall m \in \mathbb{Z}$, 
\begin{eqnarray}
\sum_{n=0}^{p-1}\mathbf{d}_{p}\left(0, m+n \right)&=&\frac{p(p-1)}{2} \label{simplead} \\
\prod_{\forall n \in S,\ \mathbf{d}_{p}\left(0, m+n \right) \ne 0}\mathbf{d}_{p}\left(0, m+n \right)&=&(p-1)! \label{simpleprod}
\end{eqnarray}
\end{prop}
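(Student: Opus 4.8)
The plan is to exploit the latin square property of the Cayley table of $C_p$ that is already implicit in Theorem \ref{cyc}: for fixed $m$, the map $n \mapsto \mathbf{d}_p(0,m+n)$ is a bijection of $S$ onto itself. First I would reduce to the case $m \in S$. By Eq. (\ref{pro2b}) we have $\mathbf{d}_p(0,m+n)=\mathbf{d}_p(0,\mathbf{d}_p(0,m)+n)$, so replacing $m$ by its residue $\mathbf{d}_p(0,m)\in S$ alters neither the left-hand sides nor the index set of the product; hence it suffices to treat $m\in S$, and both identities then hold for all $m\in\mathbb{Z}$.

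The core step is to show that as $n$ runs through $S=\{0,1,\ldots,p-1\}$, the values $\mathbf{d}_p(0,m+n)$ range over all of $S$, each exactly once, i.e. that each row of the Cayley table of $C_p$ is a permutation of $S$. I would establish this by injectivity on a finite set: if $\mathbf{d}_p(0,m+n_1)=\mathbf{d}_p(0,m+n_2)$, then $m+n_1$ and $m+n_2$ share the same residue modulo $p$, so $n_1$ and $n_2$ do as well, and since both lie in $\{0,\ldots,p-1\}$ we conclude $n_1=n_2$. (Equivalently, left translation $n\mapsto m+_p n$ in the group of Theorem \ref{cyc} is invertible, its inverse being translation by $p-m$.) An injection of the finite set $S$ into itself is a bijection, so the image is all of $S$.

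With the bijection in hand, Eq. (\ref{simplead}) is immediate: summing a permutation of $\{0,1,\ldots,p-1\}$ gives the same value as summing those integers directly, so $\sum_{n=0}^{p-1}\mathbf{d}_p(0,m+n)=\sum_{j=0}^{p-1}j=\frac{p(p-1)}{2}$, the Gauss sum. For Eq. (\ref{simpleprod}), the bijection guarantees that exactly one value $\mathbf{d}_p(0,m+n)$ equals $0$ (attained at the $n$ inverse to $m$), while the remaining $p-1$ values are precisely $1,2,\ldots,p-1$ in some order. The product over the nonzero entries is therefore $1\cdot 2\cdots(p-1)=(p-1)!$.

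I expect no serious obstacle here: once the latin square / bijection property is secured, the two identities are just the elementary sum and product of the set $\{0,1,\ldots,p-1\}$. The only points requiring care are justifying the bijectivity cleanly from cancellation modulo $p$ (or from the group structure of Theorem \ref{cyc}), and noting precisely that exactly one summand vanishes, so that the product genuinely runs over $\{1,\ldots,p-1\}$ rather than some other $(p-1)$-element subset.
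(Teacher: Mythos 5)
Your proof is correct and takes essentially the same approach as the paper: both identities are reduced to the latin square property of the Cayley table of $C_{p}$, i.e. that for fixed $m$ the map $n\mapsto\mathbf{d}_{p}(0,m+n)$ is a bijection of $S$, after which the sum and the product over nonzero entries collapse to $\sum_{k=0}^{p-1}k$ and $\prod_{k=1}^{p-1}k$. The only difference is that you spell out the bijectivity (via cancellation modulo $p$) and the reduction from $m\in\mathbb{Z}$ to $m\in S$, which the paper simply cites as known properties of the group established in Theorem \ref{cyc}.
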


\begin{proof} Since the integers in $[0, p-1]$ (the set $S$) constitute a cyclic group under $\mathbf{d}_{p}\left(0, m+n \right)$, by fixing $m\in [0, p-1]$ the sum in Eq. (\ref{simplead}) runs over all elements in $S$: Each column and row of the Cayley table of a group  contains each of its distinct elements only once (latin square property). Therefore
\begin{equation}
\sum_{n=0}^{p-1}\mathbf{d}_{p}\left(0, m+n \right)=\sum_{k=0}^{p-1}k=\frac{p(p-1)}{2} \nonumber 
\end{equation}
Eq. (\ref{simpleprod}) is also trivial because of the same reason 
\begin{equation}
\prod_{\forall n \in S,\ \mathbf{d}_{p}\left(0, m+n \right) \ne 0}\mathbf{d}_{p}\left(0, m+n \right)=\prod_{k=1}^{p-1}k=(p-1)! \nonumber \qedhere 
\end{equation}
\end{proof}

%

The Cayley tables of $C_{2}$, $C_{3}$ and $C_{4}$ are provided below as an example. The values of $m$ increase from top to bottom and the values of $n$ from left to right, in both cases from $0$ to $p-1$. Thus, the table provides $\mathbf{d}_{p}\left(0, m+n \right)$ for $p=2, 3$ and $4$ respectively.  The latter can also be seen as a function on the integer lattice $\mathbb{Z}\times \mathbb{Z}$ by virtue of the existing homomorphism (we shall discuss this point in detail later)

\begin{center}
\begin{tabular}{c|cc}
$\ \ C_{2}  \ \ $ & $ \ 0  \ $ & $\ \  1 \ \ $ \\
\hline
$\ 0 \ $ & $\ 0 \ $ & $\  1 \ $ \\
$\ 1 \ $ & $\ 1 \ $ & $ \ 0 \ $ \\
\end{tabular}
\qquad 
\begin{tabular}{c|ccc}
$\ C_{3}  \ $ & $ 0 \ $ & $\ 1 \ $ & $\   2 \  $ \\
\hline
$\ 0 $ & $\ 0 \ $ & $\ 1 \ $ & $\ 2 \ $ \\
$\ 1 $ & $\ 1 \ $ & $\ 2 \ $ & $\  0 \ $ \\
$\ 2 $ & $\ 2 \ $ & $\ 0 \ $ & $ \ 1 \ $ \\
\end{tabular}
\qquad 
\begin{tabular}{c|cccc}
$\ C_{4}  \ $ & $ 0 \ $ & $\ 1 \ $ & $\   2 \ $ & $ \ 3 \ $\\
\hline
$\ 0 $ & $\ 0 \ $ & $\ 1 \ $ & $\ 2 \ $ & $\ 3 \ $\\
$\ 1 $ & $\ 1 \ $ & $\ 2 \ $ & $\  3 \ $ & $\ 0 \ $\\
$\ 2 $ & $\ 2 \ $ & $\ 3 \ $ & $ \ 0 \ $ & $\ 1 \ $\\
$\ 3 $ & $\ 3 \ $ & $\ 0 \ $ & $ \ 1 \ $ & $\ 2 \ $\\
\end{tabular}
\end{center}

Note that, because of Eqs. (\ref{reldi1}), by also using Eq. (\ref{pro3}) we have
\begin{eqnarray}
\mathbf{d}_{4}(0,m+n)&=&\mathbf{d}_{2}\left(0, m+n \right)+2\mathbf{d}_{2}\left(0, \frac{m+n}{2} \right) \nonumber \\  
&=&\mathbf{d}_{2}\left(0, m+n \right)+2\mathbf{d}_{2}\left(1, m+n \right)
\label{C4}
\end{eqnarray}
as can be easily checked in the tables. This is of course different to the direct sum $C_{2}\oplus C_{2}$ which is provided by the operation
\begin{eqnarray}
\mathbf{d}_{2}\left(0, m+n \right)+2\mathbf{d}_{2}\left(0, m+n \right)
\label{C2oC2}
\end{eqnarray}
with Cayley table
\begin{center}
\begin{tabular}{c|cccc}
$\ C_{2}\oplus C_{2}  \ $ & $ 0 \ $ & $\ 1 \ $ & $\   2 \ $ & $ \ 3 \ $\\
\hline
$\ 0 $ & $\ 0 \ $ & $\ 1 \ $ & $\ 2 \ $ & $\ 3 \ $\\
$\ 1 $ & $\ 1 \ $ & $\ 0 \ $ & $\  3 \ $ & $\ 2 \ $\\
$\ 2 $ & $\ 2 \ $ & $\ 3 \ $ & $ \ 0 \ $ & $\ 1 \ $\\
$\ 3 $ & $\ 3 \ $ & $\ 2 \ $ & $ \ 1 \ $ & $\ 0 \ $\\
\end{tabular}
\end{center}

\subsection{Direct sums of cyclic groups}

The following theorem establishes a general result to systematically construct any direct sum of cyclic groups.

\begin{theor} \label{dsumcyc} Let $P=p_{0}\cdot p_{1}\cdot \ldots \cdot p_{N-1}$ with $p_{h\in \mathbb{Z}} \ge 1 $ ($h \in [0,N-1]$), $p_{h} \in \mathbb{N}$. Let $m, n \in \mathbb{Z}$ be in the interval $[0,P-1]$. The digits $h$ of $m$ and $n$ under the operation
\begin{equation}
\mathbf{d}_{p_{h}}\left(0, \mathbf{d}_{p_{h}}\left(0, \frac{m}{\prod_{j=0}^{h-1}p_{j}}\right)
+\mathbf{d}_{p_{h}}\left(0, \frac{n}{\prod_{j=0}^{h-1}p_{j}}\right)\right) \label{adimodsubpro}
\end{equation}  
constitute the subgroup $C_{p_{h}}$ of the direct sum $C_{p_{0}}\oplus C_{p_{1}}\oplus\ldots \oplus C_{p_{N-1}}$ formed by all integers in the interval $[0,P-1]$ under the operation
\begin{equation}
g_{P}(m,n)\equiv  \sum_{h=0}^{N-1}\mathbf{d}_{p_{h}}\left(0, \mathbf{d}_{p_{h}}\left(0, \frac{m}{\prod_{j=0}^{h-1}p_{j}}\right)
+\mathbf{d}_{p_{h}}\left(0, \frac{n}{\prod_{j=0}^{h-1}p_{j}}\right)\right)\prod_{j=0}^{h-1}p_{j}
 \label{adimodpro}
\end{equation} 
\end{theor}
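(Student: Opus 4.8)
The plan is to reduce everything to the mixed-radix machinery set up after Theorem \ref{numeralsys} together with the cyclic-group result of Theorem \ref{cyc}. Write $d_{h}(x)\equiv \mathbf{d}_{p_{h}}\bigl(0,\, x/\prod_{j=0}^{h-1}p_{j}\bigr)$ for the $h$-th mixed-radix digit of $x$. The commentary following Theorem \ref{numeralsys} shows that every $n\in[0,P-1]$ admits the unique expansion $n=\sum_{h=0}^{N-1} d_{h}(n)\prod_{j=0}^{h-1}p_{j}$ with each $d_{h}(n)\in[0,p_{h}-1]$. Hence the map $\phi:[0,P-1]\to \prod_{h=0}^{N-1}\{0,\dots,p_{h}-1\}$ given by $\phi(n)=(d_{0}(n),\dots,d_{N-1}(n))$ is well defined and injective (equal tuples force equal expansions, hence equal $n$), and since both sides have cardinality $P=\prod_{h}p_{h}$ it is a bijection of sets. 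The whole theorem will follow once I show that $\phi$ turns $g_{P}$ into the componentwise operation.

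First I would dispose of the subgroup claim. The operation (\ref{adimodsubpro}) is literally $\mathbf{d}_{p_{h}}\bigl(0,\,d_{h}(m)+d_{h}(n)\bigr)$, i.e. the addition modulo $p_{h}$ of the $h$-th digits, which in the notation of Theorem \ref{cyc} is $d_{h}(m)+_{p_{h}}d_{h}(n)$. By that theorem, the set $\{0,\dots,p_{h}-1\}$ equipped with this operation is exactly the cyclic group $C_{p_{h}}$ (the degenerate case $p_{h}=1$ giving the trivial group, since $\mathbf{d}_{1}(0,\cdot)=0$ by definition), so each digit coordinate already carries the claimed structure.

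Next I would verify that $g_{P}$ is the image of the componentwise product under $\phi$. Set $c_{h}\equiv\mathbf{d}_{p_{h}}\bigl(0,\,d_{h}(m)+d_{h}(n)\bigr)$, so that by construction $g_{P}(m,n)=\sum_{h=0}^{N-1}c_{h}\prod_{j=0}^{h-1}p_{j}$. The crucial observation is that $c_{h}\in[0,p_{h}-1]$ for every $h$, since the outer digit function $\mathbf{d}_{p_{h}}(0,\cdot)$ always returns a value in that range (Eq. (\ref{pro2}) and the definition). Therefore this sum is a \emph{legitimate} mixed-radix numeral — all coefficients lie in their permitted ranges, so no carrying occurs — and by the uniqueness of the representation (Theorem \ref{numeralsys}) reading off its $h$-th digit recovers $d_{h}\bigl(g_{P}(m,n)\bigr)=c_{h}=\mathbf{d}_{p_{h}}\bigl(0,\,d_{h}(m)+d_{h}(n)\bigr)$. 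This says precisely $\phi\bigl(g_{P}(m,n)\bigr)=\bigl(d_{0}(m)+_{p_{0}}d_{0}(n),\dots,d_{N-1}(m)+_{p_{N-1}}d_{N-1}(n)\bigr)$, so $\phi$ intertwines $g_{P}$ with the componentwise operation.

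Finally, since $\phi$ is a bijection carrying $g_{P}$ into the componentwise product of the groups $C_{p_{h}}$, the pair $([0,P-1],g_{P})$ inherits all group axioms coordinatewise from Theorem \ref{cyc} and is isomorphic to $C_{p_{0}}\oplus\cdots\oplus C_{p_{N-1}}$, which is the direct sum by definition. The main obstacle — really the only nonroutine point — is the third step: one must check that each $c_{h}$ stays within $[0,p_{h}-1]$ so that $\sum_{h}c_{h}\prod_{j=0}^{h-1}p_{j}$ is a genuine mixed-radix numeral, and only then may one invoke the uniqueness half of Theorem \ref{numeralsys} to conclude that digit extraction returns $c_{h}$ rather than a value contaminated by carries from lower positions. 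Everything else is bookkeeping inherited from the cyclic case.
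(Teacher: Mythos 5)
Your proposal is correct and follows essentially the same route as the paper's own proof: both reduce the statement to the mixed-radix representation provided by Theorem \ref{numeralsys} together with the cyclic-group result of Theorem \ref{cyc}, identifying the operation $g_{P}$ as digitwise addition modulo the radix at each position. Your write-up is in fact tighter than the paper's, which asserts closure and the direct-sum structure as ``easily proved''; your explicit bijection $\phi$ and the verification that each $c_{h}\in[0,p_{h}-1]$ (so that no carries occur and digit extraction recovers $c_{h}$) are precisely the details that claim suppresses.
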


\begin{figure*} 
\includegraphics[width=0.5 \textwidth]{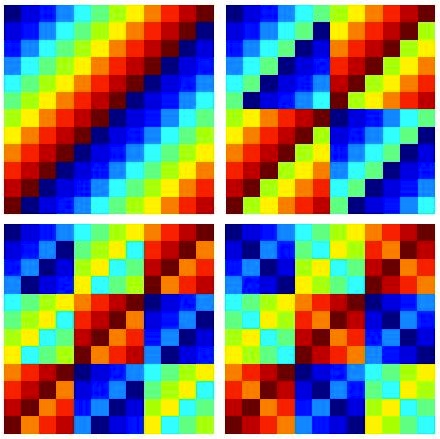}
\caption{\scriptsize{Cayley tables of the finite cyclic groups $C_{12}$ (top left), $C_{2}\oplus C_{6}$ (top right), $C_{3}\oplus C_{4}$ (bottom left), $C_{2}\oplus C_{2}\oplus C_{3}$ (bottom right), all generated by Eq. (\ref{adimodpro}). The first row in each panel ranges from 0 to 11 (from left to right) establishing the integer numerical values of the color code.}} \label{dsums}
\end{figure*}

\begin{proof} The proof of this theorem constitutes an application of Theorem \ref{numeralsys}. Since both $m$ and $n$ are integers in the interval $[0,P-1]$ with $P=p_{0}\cdot p_{1}\cdot p_{N-1}$ we can choose each $p_{h}$ for the $h$-th digit of these integers and represent them in a mixed-radix numeral system. Thus, we have
\begin{eqnarray}
n&=&\mathbf{d}_{P}\left(0, n \right)=
\mathbf{d}_{p_{0}p_{1}\ldots p_{N-1}}\left(0, n \right)=\sum_{h=0}^{N-1}\mathbf{d}_{p_{h}}\left(0, \frac{n}{\prod_{j=0}^{h-1}p_{j}}\right)\prod_{j=0}^{h-1}p_{j} \nonumber \\
m&=&\mathbf{d}_{P}\left(0, n \right)=
\mathbf{d}_{p_{0}p_{1}\ldots p_{N-1}}\left(0, n \right)=\sum_{h=0}^{N-1}\mathbf{d}_{p_{h}}\left(0, \frac{m}{\prod_{j=0}^{h-1}p_{j}}\right)\prod_{j=0}^{h-1}p_{j} \nonumber
\end{eqnarray}
from which we realize that Eq. (\ref{adimodsubpro}) is the addition modulo $p_{h}$ of the $h$-th digits of $m$ and $n$ when written in this mixed-radix system. Such digits are independent of any other digits of $m$ and $n$ and by Theorem \ref{cyc} such operation has the structure of a cyclic group.

Thus we, also realize that Eq. (\ref{adimodpro}) constitutes the bitwise addition of $m$ and $n$ modulo the radix at the position of each digit in the mixed-radix numeral system, where each digit is an independent cyclic subgroup under such operation. Eq. (\ref{adimodpro}) yields a number in $[0,P-1]$ (closure) and the set of such numbers is easily proved to have the group structure of the direct sum in the statement of the theorem.
\end{proof}

In Fig. (\ref{dsums}), Eq. (\ref{adimodpro}) is plotted in the plane on a $12 \times 12$ square for 
the finite cyclic groups $C_{12}$ (top left), $C_{2}\oplus C_{6}$ (top right), $C_{3}\oplus C_{4}$ (bottom left), $C_{2}\oplus C_{2}\oplus C_{3}$ (bottom right). These plots correspond to the  Cayley tables of the above mentioned groups. The first row in each panel ranges from 0 to 11 (from left to right) establishing the integer numerical values of the color code. 

The so-called Fundamental Theorem of Finite Cyclic Groups \cite{Gallian} establishes that any finite abelian group is a direct sum of finite cyclic groups (up to isomorphism). Such theorem has in this work the following important implication, which we state without further proof. 

\begin{theor} \label{fundamentalcy} Eq. (\ref{adimodpro}) gives the Cayley table of any finite abelian group up to permutation of rows and columns, once the  coefficients $p_{0}, \ldots, \ p_{N-1}$ $\in \mathbb{N}$ are specified.
\end{theor}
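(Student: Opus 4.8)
The plan is to reduce the statement to the already-proven Theorem \ref{dsumcyc} by invoking the structure theorem for finite abelian groups quoted just above (the ``Fundamental Theorem of Finite Cyclic Groups''). That theorem guarantees that every finite abelian group $G$ is isomorphic to a direct sum of finite cyclic groups, $G\cong C_{p_{0}}\oplus C_{p_{1}}\oplus\ldots\oplus C_{p_{N-1}}$, for a suitable choice of orders $p_{0},\ldots,p_{N-1}\in\mathbb{N}$ (which may be taken as the invariant factors, or equivalently as the prime-power elementary divisors, of $G$). I would take this decomposition as the starting point and let the coefficients $p_{h}$ appearing in Eq. (\ref{adimodpro}) be exactly these orders.

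First I would apply Theorem \ref{dsumcyc}: with this choice of coefficients, Eq. (\ref{adimodpro}) is the law of composition $g_{P}(m,n)$ on the set $\{0,1,\ldots,P-1\}$ (where $P=p_{0}p_{1}\ldots p_{N-1}$) whose Cayley table is precisely that of $C_{p_{0}}\oplus\ldots\oplus C_{p_{N-1}}$. Hence the only remaining task is to translate the \emph{abstract} isomorphism $G\cong C_{p_{0}}\oplus\ldots\oplus C_{p_{N-1}}$ into the asserted statement about Cayley tables. The key step is the standard observation that a bijection $\phi:G\to G'$ between two finite groups is a group isomorphism if and only if relabelling the elements of $G$ by $\phi$ carries the Cayley table of $G$ onto that of $G'$. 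Concretely, fixing an ordering $g_{0},\ldots,g_{P-1}$ of the elements of $G$ and writing $\sigma$ for the permutation of $\{0,\ldots,P-1\}$ that $\phi$ induces once the direct sum is labelled by integers, the entry of the Cayley table of $G$ in position $(i,j)$ is carried to the entry of the table of $g_{P}$ in position $(\sigma(i),\sigma(j))$, after $\sigma$ is applied to the recorded values as well. This establishes the theorem once Theorem \ref{dsumcyc} is granted.

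The main obstacle is therefore not algebraic but a matter of being precise about the phrase ``up to permutation of rows and columns'': strictly speaking one needs a \emph{single} permutation $\sigma$ applied \emph{consistently} to the row index, the column index, and the value recorded in each cell, rather than independent permutations of rows and of columns. Spelling out this simultaneous relabelling, and noting that the nonuniqueness of the decomposition $G\cong C_{p_{0}}\oplus\ldots\oplus C_{p_{N-1}}$ (for example $C_{6}\cong C_{2}\oplus C_{3}$) corresponds merely to different but equally admissible choices of the $p_{h}$, is the only delicate point. The existence of \emph{some} valid tuple of coefficients is handed to us outright by the structure theorem, so no further construction is required.
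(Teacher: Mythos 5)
Your proposal is correct and is exactly the deduction the paper has in mind: the paper states this theorem \emph{without proof} as an immediate consequence of the classical structure theorem for finite abelian groups combined with Theorem \ref{dsumcyc}, which is precisely your argument. Your added clarification that ``up to permutation of rows and columns'' means a \emph{single} relabelling $\sigma$ applied consistently to row indices, column indices, and cell values is a worthwhile precision that the paper's terse statement glosses over, but it does not constitute a different route.
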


We also note now that $m$ and $n$ in Eq. (\ref{adimodpro}) can indeed be replaced by any integers in the plane. Then, the whole plane is filled with the Cayley table produced in the domain $0 \le m \le p-1$, $0 \le n \le p-1$ repeated as a motif in both directions an infinite number of times, as shown in Fig. \ref{homo}.  This simply follows from the construction, by using Eq. (\ref{pro1}) in Eq. (\ref{adimodpro}).


\begin{figure*} 
\includegraphics[width=0.7 \textwidth]{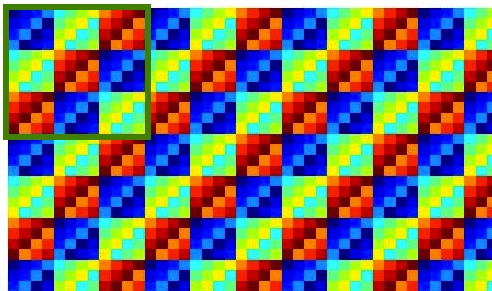}
\caption{\scriptsize{By replacing $m$ and $n$ in Eq. (\ref{adimodpro}) by any integers an infinite lattice can be constructed so that any Cayley table is used as motif in both directions. Here, an example is given by using $C_{3}\oplus C_{4}$ as base motif (in the green box).}} \label{homo}
\end{figure*}

\subsection{The multiplicative group and the finite Galois field $\mathbb{F}_{p}$}

We have considered above the groups under addition modulo $p$. We now consider the multiplicative groups. We remove the element $0$ out of the group and consider now multiplication modulo $p$.

\begin{figure*} 
\includegraphics[width=0.5 \textwidth]{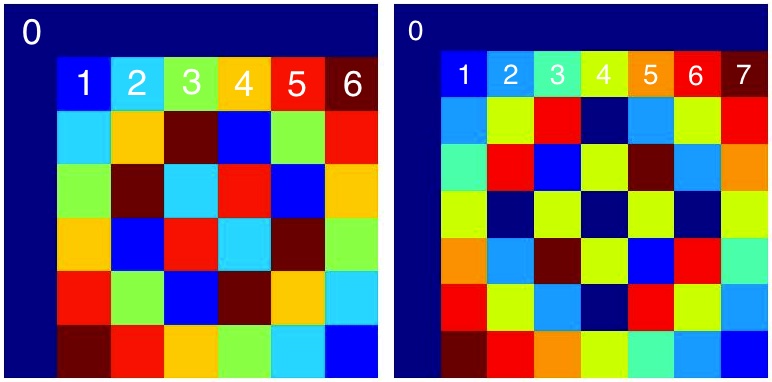}
\caption{\scriptsize{Cayley tables of $\left(\mathbb{Z}/7\mathbb{Z}\right)^{\times}$ (left) and $\left(\mathbb{Z}/8\mathbb{Z}\right)^{\times}$ (right), obtained from Eq. (\ref{prodmod}) for $p=7$ and $p=8$ on squares of $7 \times 7$ and $8 \times 8$ respectively. The numerical values of the color code are indicated in the figure.}} \label{mult}
\end{figure*}

\begin{theor} \label{prog} Let $m, n \in [0,p-1]$ be integers. Under the operation
\begin{equation}
\mathbf{d}_{p}\left(0, mn \right) \label{prodmod}
\end{equation}  
the subset of $S$, $S^{*}$, where $0$ is excluded, constitute a finite abelian group (called the multiplicative group $\left(\mathbb{Z}/p\mathbb{Z}\right)^{\times}$) if and only if $p$ is prime.
\end{theor}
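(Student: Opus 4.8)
The plan is to establish the two directions of the equivalence separately, handling the group axioms for $S^{*}$ under the operation (\ref{prodmod}) in close parallel with the additive case of Theorem \ref{cyc}, while isolating the single point at which primality becomes indispensable. Throughout, commutativity is immediate, since $\mathbf{d}_{p}(0,mn)=\mathbf{d}_{p}(0,nm)$, so the group will automatically be abelian once the remaining axioms are in place.

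First I would treat the forward direction, assuming $p$ prime. Associativity follows exactly as in Theorem \ref{cyc}, now invoking (\ref{pro2c}) twice in place of (\ref{pro2b}):
\[
\mathbf{d}_{p}(0,\,m\,\mathbf{d}_{p}(0,nk))=\mathbf{d}_{p}(0,mnk)=\mathbf{d}_{p}(0,\,\mathbf{d}_{p}(0,mn)\,k).
\]
The neutral element is $1$, since $\mathbf{d}_{p}(0,1\cdot m)=\mathbf{d}_{p}(0,m)=m$ for every $m\in S^{*}$. Closure is the first place primality enters: for $m,n\in S^{*}$ neither factor is divisible by $p$, so by Euclid's Lemma (Theorem \ref{eulem}) $p$ does not divide $mn$, whence $\mathbf{d}_{p}(0,mn)\neq 0$ and the product again lies in $S^{*}$.

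The existence of inverses is the main obstacle, and it is precisely here that primality does the essential work. Rather than constructing the inverse explicitly, I would argue by cancellation and counting. Fixing $m\in S^{*}$, consider the self-map $\phi_{m}:S^{*}\to S^{*}$ given by $\phi_{m}(x)=\mathbf{d}_{p}(0,mx)$, which is well defined by the closure just proved. If $\mathbf{d}_{p}(0,mx)=\mathbf{d}_{p}(0,my)$, then $p$ divides $m(x-y)$; since $p$ is prime and $p\nmid m$, Euclid's Lemma forces $p\mid(x-y)$, and as $x,y\in[1,p-1]$ this gives $x=y$. Hence $\phi_{m}$ is injective on the finite set $S^{*}$, therefore bijective, so some $n\in S^{*}$ satisfies $\mathbf{d}_{p}(0,mn)=1$, which is the sought inverse. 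With commutativity this completes the proof that $S^{*}$ is an abelian group when $p$ is prime.

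For the converse I would argue by contraposition. If $p$ is composite, write $p=ab$ with $1<a\le b<p$; then $a,b\in S^{*}$, yet $\mathbf{d}_{p}(0,ab)=\mathbf{d}_{p}(0,p)=0\notin S^{*}$, so closure fails and $S^{*}$ cannot be a group under (\ref{prodmod}). This closes the equivalence. I expect the inverse-existence step to be the crux: closure, associativity, the identity and commutativity are formal consequences of the digit-function identities already established, whereas invertibility is exactly the feature separating prime from composite moduli, and it is channeled entirely through Euclid's Lemma.
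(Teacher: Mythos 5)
Your proposal is correct, and it diverges from the paper's proof at exactly the step you identified as the crux. The paper also splits the equivalence the same way (composite $p$ gives $a,b \in S^{*}$ with $\mathbf{d}_{p}(0,ab)=0$; associativity via Eq.~(\ref{pro2c}); neutral element $1$), but for the existence of inverses it does \emph{not} use your counting argument: it invokes B\'ezout's identity, writing $\gcd(m,p)=mx+py=1$ and then computing $1=\mathbf{d}_{p}(0,mx+py)=\mathbf{d}_{p}(0,mx)$ via Eq.~(\ref{pro1}), so that $\mathbf{d}_{p}(0,x)$ is the explicit inverse of $m$. Your route instead proves injectivity of $\phi_{m}(x)=\mathbf{d}_{p}(0,mx)$ by cancellation through Euclid's Lemma and concludes surjectivity by finiteness. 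The trade-off: the paper's argument is constructive (the inverse is computable by the extended Euclidean algorithm) but imports B\'ezout's identity as an external fact, whereas yours is non-constructive but stays entirely inside the paper's own toolkit, since Euclid's Lemma is Theorem~\ref{eulem}, proved there from the digit-function identities --- arguably a tighter fit with the paper's program. A further point in your favor: your closure argument (via Euclid's Lemma, $p\nmid mn$) is actually more watertight than the paper's, which only remarks that ``it is not possible to have $mn=p$,'' leaving the case of $mn$ equal to a higher multiple of $p$ implicit. One cosmetic caveat: you apply Euclid's Lemma to $m(x-y)$, where $x-y$ may be negative, while Theorem~\ref{eulem} is stated for nonnegative numbers; ordering $x$ and $y$ (or replacing $x-y$ by $|x-y|$) removes the issue.
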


\begin{proof} We prove first that $S^{*}$ being a group under Eq. (\ref{prodmod}) implies $p$ prime. Let us assume that $p$ is non prime. Then we can find two integers $m$, $n$ $\in [1, p-1]$ such that $mn=p$. This means $\mathbf{d}_{p}\left(0, mn \right)=\mathbf{d}_{p}\left(0, p \right)=0$, which clearly fails to provide inverses for certain elements. Thus the premise is false and $p$ must be prime.

Now we prove that $p$ prime implies that $S^{*}$ is a finite abelian group under the operation in Eq. (\ref{prodmod}). The closure property is trivial since $\mathbf{d}_{p}\left(0, mn \right)$ yields an integer number $\in [0,p-1]$ and zero is ruled out by the fact that it is not possible to have $mn=p$ ($p$ is prime) and neither $m$ nor $n$ can be zero. The associative property follows trivially from the one of the ordinary multiplication and Eq. (\ref{pro2c}). The neutral element is $1$. Finally, each element has a unique inverse: note that $\gcd (m,p)=1$ for any $m \in S$ since $p$ is prime. This means that, from B\'ezout's identity, one can always find integers $x$ and $y$ such that
\begin{equation} 
\gcd (m,p)=mx+py=1
\end{equation}
Thus, we have
\begin{equation} 
1=\mathbf{d}_{p}(0,1)=\mathbf{d}_{p}(0,\gcd (m,p))=\mathbf{d}_{p}(0,mx+py)=\mathbf{d}_{p}(0,mx) \label{multinvers}
\end{equation}
which means that for each $m \in S$ there exists $x=\mathbf{d}_{p}(0,x) \in S$ such that $\mathbf{d}_{p}(0,mx)=1$. Such $x$ is the unique inverse element of $m$ and hence $x=m^{-1}$. The operation is also trivially abelian. This completes the proof.
\end{proof}

For prime $p$, the additive group $C_{p}$ and the multiplicative one $\left(\mathbb{Z}/p\mathbb{Z}\right)^{\times}$ define the Galois field $\mathbb{F}_{p}$. The number $p$ is called the characteristic of the field. 


In Fig. \ref{mult} the Cayley tables of $\left(\mathbb{Z}/7\mathbb{Z}\right)^{\times}$ (left) and $\left(\mathbb{Z}/8\mathbb{Z}\right)^{\times}$ (right) are shown, obtained from Eq. (\ref{prodmod}) for $p=7$ and $p=8$. The tables are thus square of sizes $7 \times 7$ and $8 \times 8$ respectively. The numerical values of the color code are indicated in the figure. Clearly, while $\left(\mathbb{Z}/7\mathbb{Z}\right)^{\times}$ has a group structure if one excludes $0$, $\left(\mathbb{Z}/8\mathbb{Z}\right)^{\times}$ has not: For the pairs of integers $(2,4)$, $(4,2)$, $(4,4)$, $(4,6)$ and $(6,4)$, Eq. (\ref{prodmod}) returns zero.

\section{Construction of  nonabelian finite groups}

We have found above the law of composition for all abelian finite groups, Eq. (\ref{adimodpro}). We now study the construction of some important infinite families of nonabelian groups that occur in theoretical physics. Some other families of great interest, as the Chevalley groups, modular groups and sporadic simple groups (Mathieu groups) are under construction and shall be discussed elsewhere.

\subsection{The dihedral groups $D_{2q}$}

Let $p=2q	$ be an even number. We can construct in $S$ the direct sum $C_{q} \oplus C_{2}$ by means of Eq. (\ref{adimodpro}). This results in the operation
\begin{equation}
\mathbf{d}_{q}\left(0, m+n\right)+q\mathbf{d}_{2}\left(0, \mathbf{d}_{2}\left(0,\frac{m}{q}\right)+\mathbf{d}_{2}\left(0,\frac{n}{q}\right)\right) \label{disum2}
\end{equation}
This operation involves the two numbers $m$ and $n$ $\in S$. From the construction, such numbers are each written in the mixed radix numeral system with radices $q$ and $2$ by means of two digits, the most significant one being '0' or '1' and the less significant one being between 0 and $q-1$. Then Eq. (\ref{disum2}) produces a number $\in S$ with  each of its two digits resulting from the addition modulo 2 and modulo $q$ of the most and less significant digits of $m$ and $n$ respectively.

By noting that the inverse of an element $n \in S$ is $p-n$ also $\in S$ (because of the group structure of the direct sum) we can now construct a non-abelian group in the following straightforward way. Let $*$ denote the operation that induces in $S$ the nonabelian group structure. Now let $m*n$ represent the result of this operation acting on $m$ and $n$ both in $S$. If the most significant digit of $m$ is zero, we can take $m*n$ as given by Eq. (\ref{disum2}) above. However, if the most significant digit of $m$ is 1, we take the additive inverse $p-\mathbf{d}_{q}(0,n)$ of the least significant digit of $n$ and add it to the less significant digit of $m$ modulo $q$ (the most significant digits being added modulo 2 as in the direct sum). With this operation, we break the abelian character of the direct sum Eq. (\ref{disum2}) above while still preserving the group axioms. These considerations give an sketch of the proof of the following theorem.

\begin{theor} \label{dih} Let $S$ be the finite set of nonnegative integers $0,\ 1,\ \ldots,\ p-1$ with $p=2q$, $q  \in \mathbb{N}$ and let $m, n \in S$. Under the operation
\begin{equation}
g_{p}(m,n) \equiv \mathbf{d}_{q}\left(0, m+n\cdot(-1)^{\mathbf{d}_{2}\left(0,\frac{m}{q}\right)}\right)+q\mathbf{d}_{2}\left(0, \mathbf{d}_{2}\left(0,\frac{m}{q}\right)+\mathbf{d}_{2}\left(0,\frac{n}{q}\right)\right) \label{dihsum}
\end{equation}  
the integers in $S$ constitute a finite nonabelian group: the \emph{dihedral group} $D_{2q}$. 
\end{theor}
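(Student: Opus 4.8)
The plan is to avoid checking the group axioms directly --- associativity in particular --- and instead exhibit an explicit isomorphism between $(S,g_{p})$ and the abstract dihedral group $D_{2q}=\langle r,\sigma\mid r^{q}=\sigma^{2}=e,\ \sigma r\sigma^{-1}=r^{-1}\rangle$; once $g_{p}$ is identified with genuine multiplication in $D_{2q}$ through a bijection, all group properties are inherited for free and only nonabelianness needs a final one-line check. Concretely, I would first record the mixed-radix decomposition of Theorem \ref{numeralsys}: every $m\in S$ writes uniquely as $m=a+qb$ with rotation digit $a=\mathbf{d}_{q}(0,m)\in\{0,\ldots,q-1\}$ and reflection bit $b=\mathbf{d}_{2}(0,m/q)\in\{0,1\}$ (the bit lies in $\{0,1\}$ since $0\le m<2q$ forces $\lfloor m/q\rfloor\in\{0,1\}$ and $\lfloor m/2q\rfloor=0$). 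Defining $\phi:S\to D_{2q}$ by $\phi(m)=r^{a}\sigma^{b}$ gives a bijection, because $(a,b)$ ranges over $\{0,\ldots,q-1\}\times\{0,1\}$ and the $2q$ words $r^{a}\sigma^{b}$ are pairwise distinct.

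Second I would reduce the two digits of $g_{p}(m,n)$ using the digit identities. Writing $m=a_{1}+qb_{1}$, $n=a_{2}+qb_{2}$, the second summand is immediately $q\,\mathbf{d}_{2}(0,b_{1}+b_{2})$, i.e. $q$ times $(b_{1}+b_{2})\bmod 2$, since $\mathbf{d}_{2}(0,m/q)=b_{1}$ and $\mathbf{d}_{2}(0,n/q)=b_{2}$. For the first summand, expanding $m+(-1)^{b_{1}}n=\big(a_{1}+(-1)^{b_{1}}a_{2}\big)+q\big(b_{1}+(-1)^{b_{1}}b_{2}\big)$ and discarding the multiple of $q$ by Eq. (\ref{pro1}) yields $\mathbf{d}_{q}\!\big(0,a_{1}+(-1)^{b_{1}}a_{2}\big)$, i.e. $(a_{1}+(-1)^{b_{1}}a_{2})\bmod q$. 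Hence $g_{p}(m,n)$ is the element of $S$ whose rotation digit is $(a_{1}+(-1)^{b_{1}}a_{2})\bmod q$ and whose reflection bit is $(b_{1}+b_{2})\bmod 2$.

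Third, I would compute the product in $D_{2q}$ directly: applying the defining relation $\sigma^{b_{1}}r^{a_{2}}=r^{(-1)^{b_{1}}a_{2}}\sigma^{b_{1}}$ gives $\phi(m)\phi(n)=r^{a_{1}}\sigma^{b_{1}}r^{a_{2}}\sigma^{b_{2}}=r^{\,a_{1}+(-1)^{b_{1}}a_{2}}\sigma^{\,b_{1}+b_{2}}$, with the $r$-exponent read mod $q$ and the $\sigma$-exponent mod $2$. Comparing with the previous step shows $\phi(m)\phi(n)=\phi\big(g_{p}(m,n)\big)$, so $\phi$ is a group isomorphism and $(S,g_{p})$ inherits the structure of $D_{2q}$. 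Nonabelianness (for $q\ge 3$) then follows by taking the preimages $\phi^{-1}(r)=1$ and $\phi^{-1}(\sigma)=q$ and noting $g_{p}(1,q)=q+1\ne 2q-1=g_{p}(q,1)$, mirroring $r\sigma\ne\sigma r$.

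The main obstacle is purely a matter of domain: when $b_{1}=1$ the argument $m+(-1)^{b_{1}}n=m-n$ of the outer $\mathbf{d}_{q}(0,\cdot)$ may be negative, whereas Definition (\ref{cucuA}) stipulates nonnegative arguments. I would resolve this by invoking the periodicity (\ref{pro1}) to add a multiple $Nq$ large enough that $m-n+Nq\ge 0$; this leaves the zeroth radix-$q$ digit unchanged and consistently extends $\mathbf{d}_{q}(0,\cdot)$ to all of $\mathbb{Z}$ as reduction modulo $q$, so that $-a_{2}$ is correctly read as $q-a_{2}$, the exponent of $r^{-1}$. With this convention fixed, the remaining manipulations are the routine digit computations indicated above.
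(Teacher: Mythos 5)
Your proof is correct, but it takes a genuinely different route from the paper's. The paper never verifies the group axioms for Eq. (\ref{dihsum}): it arrives at the formula constructively, starting from the direct sum $C_{q}\oplus C_{2}$ of Eq. (\ref{disum2}) in the mixed-radix representation of Theorem \ref{numeralsys}, then ``twisting'' the rotation digit by the additive inverse of $\mathbf{d}_{q}(0,n)$ whenever the reflection bit of $m$ is $1$; it asserts that this deformation preserves the group axioms while breaking commutativity, and explicitly labels these considerations only a sketch. You share the paper's first step --- the decomposition $m=a+qb$ with $a=\mathbf{d}_{q}(0,m)$ and $b=\mathbf{d}_{2}(0,m/q)$ --- but you replace the unproved assertion by a transport-of-structure argument: the bijection $\phi(m)=r^{a}\sigma^{b}$ onto the presentation $\langle r,\sigma\mid r^{q}=\sigma^{2}=e,\ \sigma r\sigma^{-1}=r^{-1}\rangle$, the normal-form computation $r^{a_{1}}\sigma^{b_{1}}r^{a_{2}}\sigma^{b_{2}}=r^{a_{1}+(-1)^{b_{1}}a_{2}}\sigma^{b_{1}+b_{2}}$, and the intertwining identity $\phi(g_{p}(m,n))=\phi(m)\phi(n)$. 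This buys exactly what the paper's sketch lacks: associativity, inverses, and the identification of the result as $D_{2q}$ (rather than merely some nonabelian group of order $2q$) all follow at once from the presentation. Your final paragraph also repairs a gap the paper silently ignores: when $\mathbf{d}_{2}(0,m/q)=1$ the argument $m-n$ of the outer $\mathbf{d}_{q}(0,\cdot)$ can be negative, where the digit function of Definition (\ref{cucuA}) is undefined; fixing the convention that it means reduction modulo $q$ (via the periodicity Eq. (\ref{pro1})) is precisely what makes Eq. (\ref{dihsum}) reproduce the paper's $D_{6}$ and $D_{8}$ tables. One last point in your favor: nonabelianness genuinely requires $q\ge 3$, as you state, since $D_{2}\cong C_{2}$ and $D_{4}\cong C_{2}\oplus C_{2}$ are abelian; the theorem's blanket claim (and the paper's remark that the group is nonabelian for $q\ge 2$) is loose here, and your explicit witness $g_{p}(1,q)=q+1\ne 2q-1=g_{p}(q,1)$ makes the correct restriction transparent.
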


Let us construct with help of Eq. (\ref{dihsum}) the smallest dihedral groups. For the cases $q=1$ and $q=2$ $D_{2}$ and $D_{4}$ are trivially isomorphic to $C_{2}$ and $C_{2}\oplus C_{2}$. However for $q \ge 2$, the group is nonabelian. For $q=3$ we have

\begin{center}
\begin{tabular}{c|cccccc}
$\  D_{6}  \ $ & $ 0 \ $ & $\ 1 \ $ & $\  2 \  $ & $\ 3 \ $ & $\ 4 \ $ & $\   5 \  $ \\
\hline
               $\ 0  \ $ & $\ 0 \ $ & $\ 1 \ $ & $\  2 \  $ & $\ 3 \ $ & $\ 4 \ $ & $\   5 \  $ \\
               $\ 1  \ $ & $\ 1 \ $ & $\ 2 \ $ & $\   0 \  $ & $\ 4 \ $ & $\ 5 \ $ & $\   3 \  $ \\
	       $\ 2  \ $ & $\ 2 \ $ & $\ 0 \ $ & $\   1 \  $ & $\ 5 \ $ & $\ 3 \ $ & $\  4 \  $ \\
	       $\ 3  \ $ & $\ 3 \ $ & $\ 5 \ $ & $\  4 \  $ & $\ 0 \ $ & $\ 2 \ $ & $\   1 \  $ \\
	       $\ 4  \ $ & $\ 4 \ $ & $\ 3 \ $ & $\   5 \  $ & $\ 1 \ $ & $\ 0 \ $ & $\  2 \  $ \\
	       $\ 5  \ $ & $\ 5 \ $ & $\ 4 \ $ & $\  3 \  $ & $\ 2 \ $ & $\ 1 \ $ & $\  0 \  $ \\
\end{tabular}
\end{center}

This group corresponds to the group of automorphisms (rotations and reflections) that leave invariant an equilateral triangle. For $q=4$ we have

 \begin{center}
\begin{tabular}{c|cccccccc}
$\  D_{8}  \ $ & $ 0 \ $ & $\ 1 \ $ & $\  2 \  $ & $\ 3 \ $ & $\ 4 \ $ & $\   5 \  $ & $\ 6 \ $ & $\   7 \  $ \\
\hline
               $\ 0  \ $ & $ 0 \ $ & $\ 1 \ $ & $\  2 \  $ & $\ 3 \ $ & $\ 4 \ $ & $\   5 \  $ & $\ 6 \ $ & $\   7 \  $ \\
               $\ 1  \ $ & $ 1 \ $ & $\ 2 \ $ & $\  3 \  $ & $\ 0 \ $ & $\ 5 \ $ & $\   6 \  $ & $\ 7 \ $ & $\   4 \  $ \\
	       $\ 2  \ $ & $ 2 \ $ & $\ 3 \ $ & $\  0 \  $ & $\ 1 \ $ & $\ 6 \ $ & $\   7 \  $ & $\ 4 \ $ & $\   5 \  $ \\
	       $\ 3  \ $ & $ 3 \ $ & $\ 0 \ $ & $\  1 \  $ & $\ 2 \ $ & $\ 7 \ $ & $\   4 \  $ & $\ 5 \ $ & $\   6 \  $ \\
	       $\ 4  \ $ & $ 4 \ $ & $\ 7 \ $ & $\  6 \  $ & $\ 5 \ $ & $\ 0 \ $ & $\   3 \  $ & $\ 2 \ $ & $\   1 \  $ \\
	       $\ 5  \ $ & $ 5 \ $ & $\ 4 \ $ & $\  7 \  $ & $\ 6 \ $ & $\ 1 \ $ & $\   0 \  $ & $\ 3 \ $ & $\   2 \  $ \\
	       $\ 6  \ $ & $ 6 \ $ & $\ 5 \ $ & $\  4 \  $ & $\ 7 \ $ & $\ 2 \ $ & $\   1 \  $ & $\ 0 \ $ & $\   3 \  $ \\
	       $\ 7  \ $ & $ 7 \ $ & $\ 6 \ $ & $\  5 \  $ & $\ 4 \ $ & $\ 3 \ $ & $\   2 \  $ & $\ 1 \ $ & $\   0 \  $ \\
\end{tabular}
\end{center}
which is the group of automorphisms of a square. In general $D_{2q}$ is the group of rotations and reflections of a regular polygon of $q$ sides. Remarkably, Eq. (\ref{dihsum}), gives the abstract definition of the infinite family of dihedral groups, making also possible to explicitly operate with them on separate pairs of elements $m$ and $n$ $\in S$ once $q$ is given. 

\subsection{The dicyclic groups $Q_{4q}$}

Let $p=4q	$ be an even number and let us consider the dihedral group $D_{4q}$, which is, by Eq. (\ref{dihsum}) given by
\begin{equation}
\mathbf{d}_{2q}\left(0, m+n\cdot(-1)^{\mathbf{d}_{2}\left(0,\frac{m}{2q}\right)}\right)+2q\mathbf{d}_{2}\left(0, \mathbf{d}_{2}\left(0,\frac{m}{2q}\right)+\mathbf{d}_{2}\left(0,\frac{n}{2q}\right)\right) \label{dihsum2}
\end{equation}
Starting from this group we can now construct the dicylic groups $Q_{4q}$. These are provided by the operation $g_{p}(m,n)$ for $m, n \in [0,p-1]$ ($p=4q$) given by 
\begin{equation}
\mathbf{d}_{2q}\left(0, m+n\cdot(-1)^{\mathbf{d}_{2}\left(0,\frac{m}{2q}\right)}+q\mathbf{d}_{2}\left(0,\frac{m}{2q}\right)\mathbf{d}_{2}\left(0,\frac{n}{2q}\right)   \right)+2q\mathbf{d}_{2}\left(0, \mathbf{d}_{2}\left(0,\frac{m}{2q}\right)+\mathbf{d}_{2}\left(0,\frac{n}{2q}\right)\right) \label{dicy}
\end{equation} 
Proving that the group axioms hold again in this case and that the group is noncommutative is a mere (but tedious) exercise. What we have done is introducing a further twist to the corresponding expression for the dihedral group which is only relevant when both most significant digits of $m$ and $n$ in the binary radix are equal to unity, in which case the value $q$ is added. This operation can be easily shown not to affect the group axioms, but the resulting group is not isomorphic to the previously constructed oned.

Let us illustrate Eq. (\ref{dicy}) by giving explicitly the resulting outcomes for the smallest dicyclic groups. For the case $q=1$, $Q_{4}$ is trivially isomorphic to $D_{4}$ which, in turn, is isomorphic to $C_{2}\oplus C_{2}$ as well. However for $q \ge 2$, the group is nonabelian. For $q=2$ we have, from Eq. (\ref{dicy})

\begin{center}
\begin{tabular}{c|cccccccc}
$\  Q_{8}  \ $ & $ 0 \ $ & $\ 1 \ $ & $\  2 \  $ & $\ 3 \ $ & $\ 4 \ $ & $\   5 \  $ & $\ 6 \ $ & $\   7 \  $ \\
\hline
               $\ 0  \ $ & $ 0 \ $ & $\ 1 \ $ & $\  2 \  $ & $\ 3 \ $ & $\ 4 \ $ & $\   5 \  $ & $\ 6 \ $ & $\   7 \  $ \\
               $\ 1  \ $ & $ 1 \ $ & $\ 2 \ $ & $\  3 \  $ & $\ 0 \ $ & $\ 5 \ $ & $\   6 \  $ & $\ 7 \ $ & $\   4 \  $ \\
	       $\ 2  \ $ & $ 2 \ $ & $\ 3 \ $ & $\  0 \  $ & $\ 1 \ $ & $\ 6 \ $ & $\   7 \  $ & $\ 4 \ $ & $\   5 \  $ \\
	       $\ 3  \ $ & $ 3 \ $ & $\ 0 \ $ & $\  1 \  $ & $\ 2 \ $ & $\ 7 \ $ & $\   4 \  $ & $\ 5 \ $ & $\   6 \  $ \\
	       $\ 4  \ $ & $ 4 \ $ & $\ 7 \ $ & $\  6 \  $ & $\ 5 \ $ & $\ 2 \ $ & $\   1 \  $ & $\ 0 \ $ & $\   3 \  $ \\
	       $\ 5  \ $ & $ 5 \ $ & $\ 4 \ $ & $\  7 \  $ & $\ 6 \ $ & $\ 3 \ $ & $\   2 \  $ & $\ 1 \ $ & $\   0 \  $ \\
	       $\ 6  \ $ & $ 6 \ $ & $\ 5 \ $ & $\  4 \  $ & $\ 7 \ $ & $\ 0 \ $ & $\   3 \  $ & $\ 2 \ $ & $\   1 \  $ \\
	       $\ 7  \ $ & $ 7 \ $ & $\ 6 \ $ & $\  5 \  $ & $\ 4 \ $ & $\ 1 \ $ & $\   0 \  $ & $\ 3 \ $ & $\   2 \  $ \\
\end{tabular}
\end{center}
This is the famous quaternion group. This group gives the multiplication table of the units of the quaternion number system. To see this let $U$ denote the set of units of the quaternion number system. If we consider the bijection $f:S \to U$ given by
\begin{equation}
f(n)=\delta_{0n}+i\delta_{1n}-\delta_{2n}-i\delta_{3n}+j\delta_{4n}+ij\delta_{5n}-j\delta_{6n}-ij\delta_{7n}
\end{equation}
by applying this function to all elements above one obtains the table
\begin{center}
\begin{tabular}{c|cccccccc}
$\  Q_{8}  \ $ & $ 1 \ $ & $\ i \ $ & $\  -1 \  $ & $\ -i \ $ & $\ j \ $ & $\   ij \  $ & $\ -j \ $ & $\   -ij \  $ \\
\hline
               $\ 1  \ $ & $ 1 \ $ & $\ i \ $ & $\  -1 \  $ & $\ -i \ $ & $\ j \ $ & $\   ij \  $ & $\ -j \ $ & $\   -ij \  $ \\
               $\ i  \ $ & $ i \ $ & $\ -1 \ $ & $\  -i \  $ & $\ 1 \ $ & $\ ij \ $ & $\   -j \  $ & $\ -ij \ $ & $\   j \  $ \\
	       $\ -1  \ $ & $ -1 \ $ & $\ -i \ $ & $\  1 \  $ & $\ i \ $ & $\ -j \ $ & $\   -ij \  $ & $\ j \ $ & $\   ij \  $ \\
	       $\ -i  \ $ & $ -i \ $ & $\ 1 \ $ & $\  i \  $ & $\ -1 \ $ & $\ -ij \ $ & $\   j \  $ & $\ ij \ $ & $\   -j \  $ \\
	       $\ j  \ $ & $ j \ $ & $\ -ij \ $ & $\  -j \  $ & $\ ij \ $ & $\ -1 \ $ & $\   i \  $ & $\ 1 \ $ & $\   -i \  $ \\
	       $\ ij  \ $ & $ ij \ $ & $\ j \ $ & $\  -ij \  $ & $\ -j \ $ & $\ -i \ $ & $\   -1 \  $ & $\ i \ $ & $\   1 \  $ \\
	       $\ -j  \ $ & $ -j \ $ & $\ ij \ $ & $\  j \  $ & $\ -ij \ $ & $\ 1 \ $ & $\   -i \  $ & $\ -1 \ $ & $\   i \  $ \\
	       $\ -ij  \ $ & $ -ij \ $ & $\ -j \ $ & $\  ij \  $ & $\ j \ $ & $\ i \ $ & $\   1 \  $ & $\ -i \ $ & $\   -1 \  $ \\
\end{tabular}
\end{center}
which corresponds to the multiplication table of the quaternion units. It can be checked in the table that the following identities are satisfied 
\begin{equation}
i^{2} = j^{2} = k^{2} = ijk = -1 \label{Hamilton}
\end{equation}
(where $k\equiv ij$). These constitute the fundamental formulae for quaternion multiplication discovered by W. R. Hamilton as he walked by the Brougham Bridge in Dublin on the 16th of October 1843 \cite{Baez, ConwayOCTONIONS}. Quaternion multiplication is associative but noncommutative. It is easily seen that the units of the complex numbers ($1$, $i$, $-1$, $-i$) constitute a subgroup (indeed a normal one) of the quaternions.

For, $q=3$, we have, from Eq. (\ref{dicy}) the Cayley table for $Q_{12}$ given below
\begin{center}
\begin{tabular}{c|cccccccccccc}
$\  Q_{12}  \ $ & $ 0 \ $ & $\ 1 \ $ & $\  2 \  $ & $\ 3 \ $ & $\ 4 \ $ & $\   5 \  $ & $\ 6 \ $ & $\   7 \  $ & $\ 8 \ $ & $\   9 \  $ & $\ 10 \ $ & $\   11  \  $ \\
\hline
               $\ 0  \ $ & $ 0 \ $ & $\ 1 \ $ & $\  2 \  $ & $\ 3 \ $ & $\ 4 \ $ & $\   5 \  $ & $\ 6 \ $ & $\   7 \  $ & $\ 8 \ $ & $\   9 \  $ & $\ 10 \ $ & $\   11  \  $ \\
               $\ 1  \ $ & $ 1 \ $ & $\ 2 \ $ & $\  3 \  $ & $\ 4 \ $ & $\ 5 \ $ & $\   0 \  $ & $\ 7 \ $ & $\   8 \  $ & $\ 9 \ $ & $\   10 \  $ & $\ 11 \ $ & $\  6  \  $ \\
	       $\ 2  \ $ & $ 2 \ $ & $\ 3 \ $ & $\  4 \  $ & $\ 5 \ $ & $\ 0 \ $ & $\   1 \  $ & $\ 8 \ $ & $\   9 \  $ & $\ 10 \ $ & $\   11 \  $ & $\ 6 \ $ & $\   7  \  $ \\
	       $\ 3  \ $ & $ 3 \ $ & $\ 4 \ $ & $\  5 \  $ & $\ 0 \ $ & $\ 1 \ $ & $\   2 \  $ & $\ 9 \ $ & $\   10 \  $ & $\ 11 \ $ & $\   6 \  $ & $\ 7 \ $ & $\   8  \  $ \\
	       $\ 4  \ $ & $ 4 \ $ & $\ 5 \ $ & $\  0 \  $ & $\ 1 \ $ & $\ 2 \ $ & $\   3 \  $ & $\ 10 \ $ & $\ 11 \  $ & $\ 6 \ $ & $\   7 \  $ & $\ 8 \ $ & $\   9  \  $ \\
	       $\ 5  \ $ & $ 5 \ $ & $\ 0 \ $ & $\  1 \  $ & $\ 2 \ $ & $\ 3 \ $ & $\   4 \  $ & $\ 11 \ $ & $\   6 \  $ & $\ 7 \ $ & $\  8 \  $ & $\ 9 \ $ & $\   10  \  $ \\
	       $\ 6  \ $ & $ 6 \ $ & $\ 11 \ $ & $\  10 \  $ & $\ 9 \ $ & $\ 8 \ $ & $\   7 \  $ & $\ 3 \ $ & $\   2 \  $ & $\ 1 \ $ & $\   0 \  $ & $\ 5 \ $ & $\   4  \  $ \\
	       $\ 7  \ $ & $ 7 \ $ & $\ 6 \ $ & $\  11 \  $ & $\ 10 \ $ & $\ 9 \ $ & $\   8 \  $ & $\ 4 \ $ & $\   3 \  $ & $\ 2 \ $ & $\   1 \  $ & $\ 0 \ $ & $\   5  \  $ \\
	       $\ 8  \ $ & $ 8 \ $ & $\ 7 \ $ & $\  6 \  $ & $\ 11 \ $ & $\ 10 \ $ & $\   9 \  $ & $\ 5 \ $ & $\   4 \  $ & $\ 3 \ $ & $\   2 \  $ & $\ 1 \ $ & $\   0  \  $ \\
	       $\ 9  \ $ & $ 9 \ $ & $\ 8 \ $ & $\  7 \  $ & $\ 6 \ $ & $\ 11 \ $ & $\   10 \  $ & $\ 0 \ $ & $\   5 \  $ & $\ 4 \ $ & $\   3 \  $ & $\ 2 \ $ & $\   1  \  $ \\
	       $\ 10  \ $ & $ 10 \ $ & $\ 9 \ $ & $\  8 \  $ & $\ 7 \ $ & $\ 6 \ $ & $\   11 \  $ & $\ 1 \ $ & $\   0 \  $ & $\ 5 \ $ & $\   4 \  $ & $\ 3 \ $ & $\   2  \  $ \\
	       $\ 11  \ $ & $ 11 \ $ & $\ 10 \ $ & $\  9 \  $ & $\ 8 \ $ & $\ 7 \ $ & $\   6 \  $ & $\ 2 \ $ & $\   1 \  $ & $\ 0 \ $ & $\   5 \  $ & $\ 4 \ $ & $\   3  \  $ \\
\end{tabular}
\end{center}

\begin{figure*} 
\includegraphics[width=0.6 \textwidth]{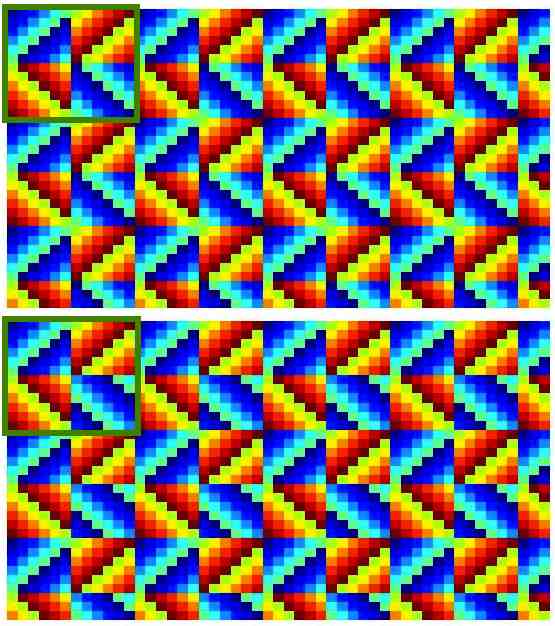}
\caption{\scriptsize{The infinite integer lattice $\mathbb{Z} \times \mathbb{Z}$ is homeomorphically mapped to the set $S$ under the operation $g_{p}(m,n)$ given by Eqs. (\ref{dihsum}) (top) and (\ref{dicy}) (bottom). By replacing any integers in the plane $m$ and $n$  the Cayley tables of $D_{12}$ (top) and $Q_{12}$ (bottom) are, respectively, obtained (in the green boxes) and repeated as motif in both horizontal and vertical directions in the plane.}} \label{homodi}
\end{figure*}

Eqs. (\ref{dihsum}) and (\ref{dicy}) not only provide the operation $g_{p}(m,n)$ such that $S$ has the structure of a dihedral or a dicyclic group respectively. They also provide respective group homomorphisms that map the integers $\mathbb{Z}$ under ordinary addition to $S$ under $g_{p}(m,n)$. 

In Fig. \ref{homodi} we observe how the integer lattice $\mathbb{Z} \times \mathbb{Z}$ is homeomorphically mapped to the Cayley table of the groups $D_{12}$ (top) and $Q_{12}$ (bottom).

 \subsection{Metacyclic groups}

Cyclic, dihedral and dicyclic groups are all themselves particular instances of a broader class of generally non-commutative finite groups called metacyclic groups. These groups are cyclic extensions of cyclic groups and have order $p=qac$. Let, $S$ be, as above, the set of integers between $0$ and $p-1$. Let $m$ and $n$ belong to that set. The set has then the structure of a metacyclic group under the operation  
\begin{equation}
\mathbf{d}_{aq}\left(0, m+n\cdot r^{\mathbf{d}_{c}\left(0,\frac{m}{qa}\right)}+q\mathbf{d}_{a}\left(0,\frac{m}{qa}\right)\mathbf{d}_{a}\left(0,\frac{n}{qa}\right)   \right)+aq\mathbf{d}_{c}\left(0, \mathbf{d}_{c}\left(0,\frac{m}{qa}\right)+\mathbf{d}_{c}\left(0,\frac{n}{qa}\right)\right) \label{metacy}
\end{equation}  
where $r$ is a number such that $\gcd (r,q)=1$. For $r=-1$ and $c=2$, Eq. (\ref{metacy}) reduces to Eq. (\ref{dihsum}) if $a=1$ and to Eq. (\ref{dicy}) if $a=2$. Cyclic groups are obtained if both $a=c=1$.

\begin{figure*} 
\includegraphics[width=0.65 \textwidth]{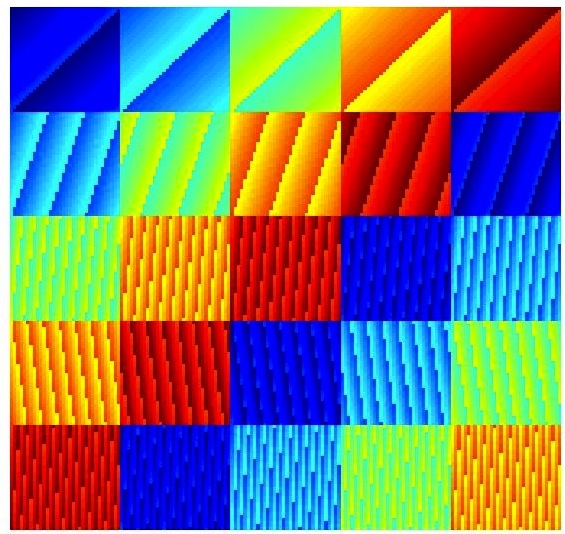}
\caption{\scriptsize{A representation of the Cayley table of the metacyclic group of order 170 with $q=17$, $c=5$, $r=3$ and $a=2$.}} \label{metcy}
\end{figure*}

In Fig. \ref{metcy} the Cayley table of the metacyclic group of order 170 with $q=17$, $c=5$, $r=3$ and $a=2$ is displayed in a color coded image with the colors ranging from 0 (dark blue) to 169 (dark red). The Cayley table of the metacyclic group $M_{16}$ with $q=8$, $c=2$, $r=5$ and $a=1$ as calculated from Eq. (\ref{metacy}) is also shown below. 

\begin{center}
\begin{tabular}{c|cccccccccccccccc}
$\  M_{16}  \ $ & $0\ $ & $1\ $ & $2\ $ & $3\ $ & $4\ $ & $5\ $ & $6\ $ & $7\ $ & $8\ $ & $9\ $ & $10\ $ & $11\ $ & $12\ $ & $13\ $ & $14\ $ & $15 \  $ \\
\hline
               $\ 0  \ $ & $0\ $ & $1\ $ & $2\ $ & $3\ $ & $4\ $ & $5\ $ & $6\ $ & $7\ $ & $8\ $ & $9\ $ & $10\ $ & $11\ $ & $12\ $ & $13\ $ & $14\ $ & $15 \  $ \\
               $\ 1  \ $ & $1\ $ & $2\ $ & $3\ $ & $4\ $ & $5\ $ & $6\ $ & $7\ $ & $0\ $ & $9\ $ & $10\ $ & $11\ $ & $12\ $ & $13\ $ & $14\ $ & $15\ $ & $8 \  $ \\
	       $\ 2  \ $ & $2\ $ & $3\ $ & $4\ $ & $5\ $ & $6\ $ & $7\ $ & $0\ $ & $1\ $ & $10\ $ & $11\ $ & $12\ $ & $13\ $ & $14\ $ & $15\ $ & $8\ $ & $9 \  $ \\
	       $\ 3  \ $ & $3\ $ & $4\ $ & $5\ $ & $6\ $ & $7\ $ & $0\ $ & $1\ $ & $2\ $ & $11\ $ & $12\ $ & $13\ $ & $14\ $ & $15\ $ & $8\ $ & $9\ $ & $10 \  $ \\
	       $\ 4 \ $ & $4\ $ & $5\ $ & $6\ $ & $7\ $ & $0\ $ & $1\ $ & $2\ $ & $3\ $ & $12\ $ & $13\ $ & $14\ $ & $15\ $ & $8\ $ & $9\ $ & $10\ $ & $11 \  $ \\
	       $\ 5 \ $ & $5\ $ & $6\ $ & $7\ $ & $0\ $ & $1\ $ & $2\ $ & $3\ $ & $4\ $ & $13\ $ & $14\ $ & $15\ $ & $8\ $ & $9\ $ & $10\ $ & $11\ $ & $12 \  $ \\
	       $\ 6  \ $ & $6\ $ & $7\ $ & $0\ $ & $1\ $ & $2\ $ & $3\ $ & $4\ $ & $5\ $ & $14\ $ & $15\ $ & $8\ $ & $9\ $ & $10\ $ & $11\ $ & $12\ $ & $13 \  $ \\
	       $\ 7 \ $ & $7\ $ & $0\ $ & $1\ $ & $2\ $ & $3\ $ & $4\ $ & $5\ $ & $6\ $ & $15\ $ & $8\ $ & $9\ $ & $10\ $ & $11\ $ & $12\ $ & $13\ $ & $14 \  $ \\
	       $\ 8 \ $ & $8\ $ & $13\ $ & $10\ $ & $15\ $ & $12\ $ & $9\ $ & $14\ $ & $11\ $ & $0\ $ & $5\ $ & $2\ $ & $7\ $ & $4\ $ & $1\ $ & $6\ $ & $3 \  $ \\
	       $\ 9  \ $ & $9\ $ & $14\ $ & $11\ $ & $8\ $ & $13\ $ & $10\ $ & $15\ $ & $12\ $ & $1\ $ & $6\ $ & $3\ $ & $0\ $ & $5\ $ & $2\ $ & $7\ $ & $4 \  $ \\
	       $\ 10 \ $ & $10\ $ & $15\ $ & $12\ $ & $9\ $ & $14\ $ & $11\ $ & $8\ $ & $13\ $ & $2\ $ & $7\ $ & $4\ $ & $1\ $ & $6\ $ & $3\ $ & $0\ $ & $5 \  $ \\
	       $\ 11 \ $ & $11\ $ & $8\ $ & $13\ $ & $10\ $ & $15\ $ & $12\ $ & $9\ $ & $14\ $ & $3\ $ & $0\ $ & $5\ $ & $2\ $ & $7\ $ & $4\ $ & $1\ $ & $6 \  $ \\
	       $\ 12 \ $ & $12\ $ & $9\ $ & $14\ $ & $11\ $ & $8\ $ & $13\ $ & $10\ $ & $15\ $ & $4\ $ & $1\ $ & $6\ $ & $3\ $ & $0\ $ & $5\ $ & $2\ $ & $7 \  $ \\
	       $\ 13 \ $ & $13\ $ & $10\ $ & $15\ $ & $12\ $ & $9\ $ & $14\ $ & $11\ $ & $8\ $ & $5\ $ & $2\ $ & $7\ $ & $4\ $ & $1\ $ & $6\ $ & $3\ $ & $0 \  $ \\
	       $\ 14\ $ & $14\ $ & $11\ $ & $8\ $ & $13\ $ & $10\ $ & $15\ $ & $12\ $ & $9\ $ & $6\ $ & $3\ $ & $0\ $ & $5\ $ & $2\ $ & $7\ $ & $4\ $ & $1 \  $ \\
	       $\ 15 \ $ & $15\ $ & $12\ $ & $9\ $ & $14\ $ & $11\ $ & $8\ $ & $13\ $ & $10\ $ & $7\ $ & $4\ $ & $1\ $ & $6\ $ & $3\ $ & $0\ $ & $5\ $ & $2 \  $ \\
\end{tabular}
\end{center}


\subsection{The symmetric group of $p$ symbols $\text{Sym}_{p}$} 

We now proceed to construct the symmetric group $\text{Sym}_{p}$. This group consists of all possible $p!$ permutations of $p$ distinct symbols under composition. We have already defined a permutation $\pi_{p}$ as a bijective application $\pi_{p}:S \to S$. We first construct all permutations of the elements in $S$ and then establish the composition operation which, in this case, leads to the symmetric group. Every finite group is a subgroup of this group, which bears a crucial importance in theoretical physics.

We shall index each permutation by an integer number $0\le m \le p!-1$, such that the digits of this number in radix $p$ establish the images of the integers in $S$ under the permutation. Cauchy byline notation is usually considered for permutations. The first line indicates the element in $S$ and the second line the corresponding element in $S$ to which the corresponding one in the upper line is sent. We have
\begin{equation}
\left( \begin{array}{ccccc} 0 & 1 & \ldots & p-2 & p-1 \\ \mathbf{d}_{p}(0,\pi_{p}(m))    & \mathbf{d}_{p}(1,\pi_{p}(m)) & \ldots & \mathbf{d}_{p}(p-2,\pi_{p}(m)) & \mathbf{d}_{p}(p-1,\pi_{p}(m)) \end{array} \right) \label{permu}
\end{equation}
where it then should be clear that simply an expression like $\mathbf{d}_{p}(k,\pi_{p}(m))$ means already the element in $S$ to which the element $k$ of $S$ is sent by the m-th permutation $\pi_{p}(m)$. The latter is a non-negative integer defined as
\begin{equation}
\pi_{p}(m)=\sum_{k=0}^{p-1}p^{k}\mathbf{d}_{p}(k,\pi_{p}(m)) \label{vermut}
\end{equation}
so that, clearly, the digits of this integer in radix $p$ fully establish the form of the permutation. 

\begin{lemma} We have
\begin{eqnarray}
\pi_{p}(0)&=&p^{p}-\frac{p^{p}-p}{(p-1)^2} \label{pipe0} \\
\pi_{p}(p!-1)&=&\frac{p^{p}-p}{(p-1)^2}-1 \label{pipepe}
\end{eqnarray}
and thus $\pi_{p}(p!-1) \le \pi_{p}(m) \le \pi_{p}(0)$.
\end{lemma}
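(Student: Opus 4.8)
The plan is to read the two equalities as evaluations of Eq.~(\ref{vermut}) at the two extreme permutations and then to obtain the inequality from a rearrangement argument. First I would identify $\pi_{p}(0)$ with the identity permutation, for which $\mathbf{d}_{p}(k,\pi_{p}(0))=k$ for every $k\in S$. Substituting these digits into Eq.~(\ref{vermut}) gives
\[
\pi_{p}(0)=\sum_{k=0}^{p-1}k\,p^{k},
\]
which is precisely the sum already evaluated in the proof of Eq.~(\ref{Galoprop}), namely $\sum_{k=0}^{p-1}k\,p^{k}=p^{p}-\frac{p^{p}-p}{(p-1)^{2}}$. This delivers Eq.~(\ref{pipe0}) at once.

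Next I would identify $\pi_{p}(p!-1)$ with the order-reversing permutation, whose digits are $\mathbf{d}_{p}(k,\pi_{p}(p!-1))=p-1-k$. Inserting these into Eq.~(\ref{vermut}) and splitting the sum gives
\[
\pi_{p}(p!-1)=(p-1)\sum_{k=0}^{p-1}p^{k}-\sum_{k=0}^{p-1}k\,p^{k}=(p^{p}-1)-\left(p^{p}-\frac{p^{p}-p}{(p-1)^{2}}\right),
\]
where I used the geometric sum $\sum_{k=0}^{p-1}p^{k}=(p^{p}-1)/(p-1)$ together with the closed form from Eq.~(\ref{Galoprop}). The terms $p^{p}$ cancel and the right-hand side collapses to $\frac{p^{p}-p}{(p-1)^{2}}-1$, which is Eq.~(\ref{pipepe}). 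No new identity is required beyond those two sums.

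For the two-sided bound I would exploit that, since each $\pi_{p}(m)$ is a bijection of $S$, the digit string $(\mathbf{d}_{p}(0,\pi_{p}(m)),\ldots,\mathbf{d}_{p}(p-1,\pi_{p}(m)))$ is a permutation of $(0,1,\ldots,p-1)$. Hence $\pi_{p}(m)=\sum_{k=0}^{p-1}p^{k}\sigma(k)$, where $\sigma$ runs over all bijections of $S$ and the weights $p^{0}<p^{1}<\cdots<p^{p-1}$ are strictly increasing. By the rearrangement inequality such a weighted sum is largest when the largest digit values sit at the largest weights, i.e.\ for the increasing assignment $\sigma(k)=k$ (the identity, giving $\pi_{p}(0)$), and smallest for the decreasing assignment $\sigma(k)=p-1-k$ (the reversal, giving $\pi_{p}(p!-1)$). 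This yields $\pi_{p}(p!-1)\le\pi_{p}(m)\le\pi_{p}(0)$ for every $m\in[0,p!-1]$.

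The main obstacle is not the algebra but the bookkeeping that pins the extreme indices $m=0$ and $m=p!-1$ to the identity and the reversal: this has to be read off from the construction of the enumeration $m\mapsto\pi_{p}(m)$, whose first and last image sequences are exactly $(0,1,\ldots,p-1)$ and $(p-1,\ldots,1,0)$. Once that identification is granted, the two displayed evaluations together with the rearrangement step complete the proof, the latter being the only genuinely nontrivial ingredient.
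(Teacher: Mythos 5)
Your proposal is correct and follows essentially the same route as the paper: identify $\pi_{p}(0)$ with the identity and $\pi_{p}(p!-1)$ with the order-reversing permutation, evaluate the two sums $\sum_{k}kp^{k}$ and $\sum_{k}(p-1-k)p^{k}$ in closed form, and obtain the bound from the extremality of these two digit arrangements. The only difference is cosmetic: where the paper merely asserts that these arrangements maximize/minimize the sum, you explicitly invoke the rearrangement inequality, which makes that final step a bit more rigorous.
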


\begin{proof} Note that
\begin{eqnarray}
\pi_{p}(0)&=&\sum_{k=0}^{p-1} kp^{k} =p\frac{(p-2) p^{p}+1}{(p-1)^2}=p^{p}-\frac{p^{p}-p}{(p-1)^2} \nonumber \\
\pi_{p}(p!-1)&=&\sum_{k=0}^{p-1} k p^{p-k-1}=\frac{p^p-p^2+p-1}{(p-1)^2}=\frac{p^{p}-p}{(p-1)^2}-1  \nonumber
\end{eqnarray}
and these are the arrangements of distinct $p$ elements where the sum reaches a maximum/minimum, respectively, and hence $\pi_{p}(p!-1) \le \pi_{p}(m) \le \pi_{p}(0)$.
\end{proof}

The following result now follows easily

\begin{theor} \label{cycsym} The integers $m, n \in S$ under the operation 
\begin{equation}
\mathbf{d}_{p}\left(\mathbf{d}_{p}\left(0, m+n \right),\pi_{p}(m) \right) \label{adimod3}
\end{equation}  
form a group that is isomorphic to $C_{p}$. The isomorphism is, indeed, established by $\pi_{p}(m)$.
\end{theor}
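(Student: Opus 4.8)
The plan is to exhibit $\pi_{p}$ as a faithful permutation representation of $C_{p}$ and to reduce the stated two-argument operation $g_{p}(m,n)$ to plain addition modulo $p$, after which Theorem \ref{cyc} closes the argument. First I would make explicit which $p$ permutations are indexed by the integers $m\in S$. Reading off digits through the Cauchy form Eq. (\ref{permu}), the natural choice for the cyclic construction is that $\pi_{p}(m)$ is the rotation sending $k$ to $\mathbf{d}_{p}(0,k+m)$, i.e. $\mathbf{d}_{p}(k,\pi_{p}(m))=\mathbf{d}_{p}(0,k+m)$. With this identification $\pi_{p}(0)$ is the identity permutation (consistent with Eq. (\ref{pipe0})) and $\pi_{p}(1)$ generates the family. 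These $p$ rotations are pairwise distinct, so $\pi_{p}$ is injective on $S$, and they are closed under composition because composing two rotations adds their shifts modulo $p$; hence $\{\pi_{p}(m)\}_{m\in S}$ is a cyclic subgroup of $\text{Sym}_{p}$ of order $p$.

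Second, I would rewrite the operation with the composition--decomposition theorem. Property Eq. (\ref{pro3}) lets me treat the inner digit $\mathbf{d}_{p}(0,m+n)$ as an index fed into $\pi_{p}(m)$, and Theorem \ref{compy} identifies the nested expression $\mathbf{d}_{p}(\,\cdot\,,\pi_{p}(m))$ with the action of the permutation $\pi_{p}(m)$ on its argument. The key computation is then that the operation evaluates the composite rotation and returns the image of $0$ under $\pi_{p}(m)\circ\pi_{p}(n)$, which is exactly the index of that composite; since $\pi_{p}(n)$ sends $0$ to $n$, evaluating at $0$ produces the sum, and Eq. (\ref{pro2b}) collapses the nested reductions so that the result simplifies to $\mathbf{d}_{p}(0,m+n)$, ordinary addition modulo $p$.

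Third, once the operation is reduced to $\mathbf{d}_{p}(0,m+n)$, Theorem \ref{cyc} applies verbatim: closure, associativity (Eq. (\ref{pro2b}) used twice), neutral element $0$, inverse $p-m$, and commutativity all transfer, so $(S,g_{p})$ is $C_{p}$. Finally, because $\pi_{p}(g_{p}(m,n))=\pi_{p}(m)\circ\pi_{p}(n)$ by construction and $\pi_{p}$ injects $S$ onto the cyclic rotation group, $\pi_{p}$ is a group isomorphism, which gives the last sentence of the statement.

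The main obstacle I anticipate is the middle step: the bookkeeping showing that the two-level digit function genuinely computes the composite rotation's index and collapses to $\mathbf{d}_{p}(0,m+n)$, rather than some other digit of $\pi_{p}(m)$. This needs careful use of Eq. (\ref{pro3}) to align the power of $p$ selected by the inner digit with the intended slot of $\pi_{p}(m)$, together with the observation that $\pi_{p}(n)$ fixes the role of $0$ so that evaluating the composite there returns precisely the sum modulo $p$. By contrast, the injectivity of $\pi_{p}$ on $S$ and the closure of the rotation family under composition are routine to check by comparing Cauchy tables, so the only delicate point is the identity between the nested digit expression and modular addition.
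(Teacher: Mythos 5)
Your middle step---the claimed collapse of the operation to $\mathbf{d}_{p}(0,m+n)$---is exactly where the argument breaks, and it breaks under your own identification of $\pi_{p}(m)$. If $\mathbf{d}_{p}(k,\pi_{p}(m))=\mathbf{d}_{p}(0,k+m)$, then feeding the inner digit in as the index and using Eq. (\ref{pro2b}) gives
\begin{equation}
\mathbf{d}_{p}\left(\mathbf{d}_{p}\left(0,m+n\right),\pi_{p}(m)\right)
=\mathbf{d}_{p}\left(0,\mathbf{d}_{p}\left(0,m+n\right)+m\right)
=\mathbf{d}_{p}\left(0,2m+n\right),
\end{equation}
not $\mathbf{d}_{p}(0,m+n)$. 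The slip is in the phrase ``returns the image of $0$ under $\pi_{p}(m)\circ\pi_{p}(n)$'': that image is precisely the \emph{inner} digit $\mathbf{d}_{p}(0,m+n)$, and the outer digit extraction then applies $\pi_{p}(m)$ \emph{once more} to it, so what the formula computes is the image of $0$ under $\pi_{p}(m)\circ\pi_{p}(m)\circ\pi_{p}(n)$, i.e.\ the rotation by $2m+n$. And $(m,n)\mapsto\mathbf{d}_{p}(0,2m+n)$ is not a group law: for $p=2$ it is the projection $(m,n)\mapsto n$; for odd $p$ the element $0$ is a left identity but not a right one (the row value at $n=0$ is $2m\neq m$), and associativity fails since $4m+2n+k\not\equiv 2m+2n+k \pmod{p}$ unless $2m\equiv 0$. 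The opposite rotation convention, $\mathbf{d}_{p}(k,\pi_{p}(m))=\mathbf{d}_{p}(0,k-m)$, fares no better: it yields $(m,n)\mapsto n$. So the reduction to Theorem \ref{cyc}, and with it your final isomorphism claim, does not go through.

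This is also not the paper's route, and the discrepancy is instructive. The paper never identifies $\pi_{p}(m)$ with rotations---in the explicit construction Eq. (\ref{permutiel}) the permutations indexed by $m\in S$ are not cyclic shifts (e.g.\ for $p=3$ one has $\pi_{3}(1)$ with digit word $102$, a transposition)---and it never reduces the operation to plain modular addition, which, as the computation above shows, is impossible. Instead it argues row by row with $\pi_{p}(m)$ generic: for $m=0$, Eqs. (\ref{pipe0}) and (\ref{Galoprop}) show the operation literally equals $\mathbf{d}_{p}(0,m+n)$; for $m\neq 0$ the row $n\mapsto\mathbf{d}_{p}\left(\mathbf{d}_{p}\left(0,m+n\right),\pi_{p}(m)\right)$ runs cyclically through the $p$ distinct digits of $\pi_{p}(m)$, i.e.\ it is the corresponding row of the table in Eq. (\ref{adimod2}) with its entries relabeled by the permutation $\pi_{p}(m)$. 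That ``same table after the relabeling induced by $\pi_{p}(m)$'' statement is the content actually used afterwards to justify the cyclic-shift structure inside Eq. (\ref{permutiel}). Your proposal, even if the bookkeeping were repaired, would be proving a different statement about a different (rotation) family of permutations, and its key step asserts an identity that is false.
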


\begin{proof} When $m=0$ we have, $\pi(0)=\sum_{k=0}^{p-1}p^{k}\mathbf{d}_{p}(k,\pi_{p}(m))=\sum_{k=0}^{p-1}p^{k}k=p^{p}-\frac{p^{p}-p}{(p-1)^{2}}$ from Eq. (\ref{pipe0}) and, then, from Eq. (\ref{adimod3})
\begin{equation}
\mathbf{d}_{p}\left(\mathbf{d}_{p}\left(0, m+n \right),\ p^{p}-\frac{p^{p}-p}{(p-1)^{2}} \right)=\mathbf{d}_{p}\left(0, m+n \right) \label{adimod2}
\end{equation} 
a result that follows directly from Eq. (\ref{Galoprop}) with $k=\mathbf{d}_{p}\left(0, m+n \right)$.
Now, if $m \ne 0$, $\mathbf{d}_{p}\left(\mathbf{d}_{p}\left(0, m+n \right),\pi_{p}(m) \right)$ gives the $\mathbf{d}_{p}\left(0, m+n \right)$-th digit of the permutation. Thus, adding $m+n$ modulo $p$ cyclically permutes the labels of the permutation and, hence, the digits. Since all of the latter are distinct, this cyclic permutation has period $p$ and, therefore, the Cayley table has identical structure as the one provided by Eq. (\ref{adimod2}) after the isomorphism induced by $\pi_{p}(m)$. 
 \end{proof}

We shall now profit from this result in finding all possible permutations of $p$ symbols which amounts to find, for each $m$, all non-negative integers $\pi_{p}(m)$.  This is achieved through the following theorem.

\begin{theor} Let $m$ be any integer such that $0 \le m \le p!-1$. The $m$-th permutation of the set of all $p!$ permutations of $p$ symbols is given by
\begin{equation}
\pi_{p}(m)=\sum_{k=0}^{p-1}p^{k}\mathbf{d}_{p}\left(\mathbf{d}_{p}\left(0, \mathbf{d}_{p}\left(0, \frac{m}{(p-1)!}\right)+k\right), (p-1)p^{p-1}+\sum_{j=0}^{p-2}p^{j}\mathbf{d}_{p-1}(j,\pi_{p-1}(m))\right) \label{permutiel}
\end{equation}
In this expression $\pi_{1}(m)=0$.
\end{theor}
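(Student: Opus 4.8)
The plan is to prove the formula by induction on $p$, reading the radix-$p$ digits of $\pi_p(m)$ as the one-line (Cauchy) images of a permutation and showing that the map $m\mapsto\pi_p(m)$ is a bijection from $[0,p!-1]$ onto the set of all permutations of $\{0,1,\ldots,p-1\}$. The base case $p=1$ is immediate, since $\pi_1(m)=0$ encodes the unique permutation of one symbol. For the inductive step I would first isolate the two ingredients the formula feeds into $\pi_p(m)$: the top factoradic digit $c_{p-1}\equiv\mathbf{d}_p(0,m/(p-1)!)=\lfloor m/(p-1)!\rfloor\in[0,p-1]$ (the reduction $\mathbf{d}_p(0,\cdot)$ acting trivially because $m\le p!-1<p\,(p-1)!$), and the recursively built integer $B\equiv(p-1)p^{p-1}+\sum_{j=0}^{p-2}p^{j}\mathbf{d}_{p-1}(j,\pi_{p-1}(m))$.

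A preliminary lemma I would establish is that $\pi_{p-1}(m)$ depends on $m$ only through $m\bmod(p-1)!$; this follows from the factorial (mixed-radix) structure of Theorem \ref{numeralsys}, since $\lfloor m/(p-2)!\rfloor\bmod(p-1)=\lfloor(m\bmod(p-1)!)/(p-2)!\rfloor$ and the recursion only ever reads such factoradic digits. Writing $m=c_{p-1}(p-1)!+m'$ with $m'=m\bmod(p-1)!$, the inductive hypothesis says $\tau\equiv\pi_{p-1}(m')$ ranges bijectively over all $(p-1)!$ permutations of $\{0,\ldots,p-2\}$ as $m'$ runs over $[0,(p-1)!-1]$, and in particular that the radix-$(p-1)$ digits of $\tau$ are distinct. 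Since these digits lie in $[0,p-2]\subset[0,p-1]$, placing them in radix-$p$ positions $0,\ldots,p-2$ and the value $p-1$ in position $p-1$ produces no carries, so $B=\sum_{j=0}^{p-1}p^{j}\sigma(j)$, where $\sigma$ is the permutation of $\{0,\ldots,p-1\}$ extending $\tau$ by fixing $p-1$; thus $\mathbf{d}_p(j,B)=\sigma(j)$ and $\sigma$ is genuinely bijective.

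The heart of the argument is to recognize the outer operation as a cyclic relabelling. Applying the composition-decomposition theorem (Theorem \ref{compy}) with $f=\sigma$ and $g(k)=\mathbf{d}_p(0,c_{p-1}+k)$ gives
\begin{equation}
\mathbf{d}_p\!\left(\mathbf{d}_p(0,c_{p-1}+k),B\right)=\sigma\!\left(\mathbf{d}_p(0,c_{p-1}+k)\right)=\sigma\!\left((k+c_{p-1})\bmod p\right),
\end{equation}
so that $\pi_p(m)$ is the permutation $k\mapsto\sigma((k+c_{p-1})\bmod p)$, a composition of two bijections and hence itself a permutation of $\{0,\ldots,p-1\}$; this reuses the very cyclic-shift mechanism already analyzed in Theorem \ref{cycsym}. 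This settles that $\pi_p(m)$ is well defined as a permutation for every $m$, and completes part (a).

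Finally I would prove bijectivity by counting together with injectivity. As $m$ runs through $[0,p!-1]$ the pair $(c_{p-1},m')$ runs bijectively through $[0,p-1]\times[0,(p-1)!-1]$, so by the inductive hypothesis $(c_{p-1},\tau)$ runs through $[0,p-1]\times\{\text{permutations of }p-1\text{ symbols}\}$, a set of cardinality $p\cdot(p-1)!=p!$. For injectivity, observe that the symbol $p-1$ appears in $\pi_p(m)$ exactly at the position $k_0$ with $(k_0+c_{p-1})\bmod p=p-1$, i.e. $k_0=(p-1-c_{p-1})\bmod p$, because $\sigma$ fixes $p-1$ and takes values $<p-1$ elsewhere. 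Hence the permutation determines $c_{p-1}$, and once $c_{p-1}$ is fixed the identity $\pi_p(m)(k)=\sigma((k+c_{p-1})\bmod p)$ recovers $\sigma$ and therefore $\tau$, so distinct pairs give distinct permutations. Injectivity of a map between finite sets of equal cardinality $p!$ forces surjectivity, closing the induction. I expect the main obstacle to be the careful justification of the factoradic reduction lemma and the no-carry claim for $B$, since these are precisely what guarantee that the recursion decouples the top factoradic digit (driving the cyclic shift) from the lower ones (driving $\tau$); once that decoupling is clean, the cyclic-shift identity and the injectivity argument follow directly.
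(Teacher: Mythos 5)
Your proof is correct, and it rests on the same structural reading of Eq.~(\ref{permutiel}) as the paper: the integer $B=(p-1)p^{p-1}+\sum_{j}p^{j}\mathbf{d}_{p-1}(j,\pi_{p-1}(m))$ encodes the permutation of $p$ symbols that extends $\pi_{p-1}$ by fixing $p-1$, and the outer digit operation implements a cyclic shift of its word by the top factoradic digit $\lfloor m/(p-1)!\rfloor$. Where you genuinely diverge is in proof architecture. The paper's argument is constructive and algorithmic: it describes the Burnside--Moore-style generation of permutation tables (stack the stage-$(p-1)$ words, adjoin the new largest symbol on the right, fill columns by cyclic shifts), defines the factoradic index of a permutation by recording the column numbers at each stage, and then reads the formula off this construction; the facts that every value of $m\in[0,p!-1]$ yields a permutation, that distinct $m$ yield distinct permutations, and that all $p!$ permutations arise are implicit in the description of the table rather than separately verified. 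Your version replaces that description by a formal induction with three explicit lemmas the paper never states: the factoradic reduction lemma (that $\pi_{p-1}(m)$ depends on $m$ only through $m\bmod (p-1)!$, which is exactly what decouples the shift digit from the recursive part), the no-carry claim that makes $\sigma(j)=\mathbf{d}_{p}(j,B)$ a genuine permutation, and the injectivity argument that recovers $c_{p-1}$ from the position of the symbol $p-1$ and then $\tau$ from $\sigma$, closed off by counting. What the paper's route buys is provenance --- it shows where the formula comes from and ties it to the classical generation schemes; what your route buys is a self-contained rigorous proof of the bijection $m\leftrightarrow\pi_{p}(m)$, precisely the point at which the paper's proof is only a sketch. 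Your appeal to Theorem~\ref{compy} for the shift identity and the parallel with the mechanism of Theorem~\ref{cycsym} are both apt and consistent with the paper's toolkit.
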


\begin{proof} Let us first sketch a simple method to generate the permutations in order to give a constructive proof of the theorem. We omit the above line in Eq. (\ref{permu}) and thus a 'word' as, e.g. $2301$ means the mapping $0 \to 2$, $1 \to 3$, $2 \to 0$, $1 \to 3$. This is the so-called \emph{word representation} of a permutation \cite{Aigner2}. When there is only a single element $p=1$, we have the trivial permutation $0 \to 0$, i.e. 
\begin{equation}
\mathbf{0} \nonumber
\end{equation}   
For $p=2$ the new item $\mathbf{1}$ (we mark the new item in bold to make more visually clear the method) can now occupy two different places 
\begin{equation}
0\mathbf{1} \qquad \qquad \mathbf{1}0 \nonumber
\end{equation}   
To generate the permutations for $p=3$ we stack these permutations in a single column, we insert the new symbol  $\mathbf{2}$ at the rightmost place and permute the symbols cyclically from right to left, so that we create several columns with all permutations, i.e.
\begin{eqnarray}
&& 01\mathbf{2} \qquad \qquad 1\mathbf{2}0 \qquad \qquad \mathbf{2}01 \nonumber \\
&& 10\mathbf{2} \qquad \qquad 0\mathbf{2}1 \qquad \qquad \mathbf{2}10 \nonumber 
\end{eqnarray}   
This process can now go on indefinitely. For $p=4$, we stack again all columns of permutations in one single column and insert vertically to the right the new element $\mathbf{3}$, subsequently permuting the resulting arrangement cyclically
\begin{eqnarray}
&& 012\mathbf{3} \qquad \qquad 12\mathbf{3}0 \qquad \qquad 2\mathbf{3}01 \qquad \qquad \mathbf{3}012  \nonumber \\
&& 102\mathbf{3} \qquad \qquad 02\mathbf{3}1 \qquad \qquad 2\mathbf{3}10 \qquad \qquad \mathbf{3}102  \nonumber \\
&& 120\mathbf{3} \qquad \qquad 20\mathbf{3}1 \qquad \qquad 0\mathbf{3}12 \qquad \qquad \mathbf{3}120  \nonumber \\
&& 021\mathbf{3} \qquad \qquad 21\mathbf{3}0 \qquad \qquad 1\mathbf{3}02 \qquad \qquad \mathbf{3}021  \nonumber \\
&& 201\mathbf{3} \qquad \qquad 01\mathbf{3}2 \qquad \qquad 1\mathbf{3}20 \qquad \qquad \mathbf{3}201  \nonumber \\
&& 210\mathbf{3} \qquad \qquad 10\mathbf{3}2 \qquad \qquad 0\mathbf{3}21 \qquad \qquad \mathbf{3}210  \nonumber
\end{eqnarray}   

By using the factorial number system mentioned in the remarks following the proof of Theorem $\ref{numeralsys}$, we can define a number $m$ in this system as
\begin{equation}
m=\sum_{k=1}^{p}\mathbf{d}_{k}\left(0, \frac{m}{(k-1)!}\right)(k-1)!
\end{equation}
This number thus may take any value between $0$ and $p!-1$. We can now establish a bijection between all these possible values for $m$ and all permutations formed above at stage $p$. We define the $k$-th digit of $m$, $\mathbf{d}_{k}\left(0, \frac{m}{(k-1)!}\right)$  through the number of the column in the table at stage $k$ (0 being the leftmost column and $k-1$ the rightmost one) in which the symbol $k-1$ (marked in bold at that stage) is found. 

\noindent (\emph{Example:} To find $m$ for the permutation 2301 above $(p=4)$, we note that this permutation is gradually constructed as: $0 \to 01 \to 201 \to 2301$, for which the corresponding $k$'s are 0, 0, 2, 2. Thus, we have $m=0\cdot 4^{0}+0\cdot 4^{1}+2\cdot 4^{2}+2\cdot 4^{3}=160$).

The permutations are each recursively constructed starting by $0$ and then inserting the symbols $1$, $2$, $\ldots$ to the right of the previously formed permutations, and then permuting the resulting column cyclically. Then $\pi_{p-1}(m)$, i.e. the permutation formed at stage $p-1$, is given by
\begin{equation}
\pi_{p-1}(m)=\sum_{j=0}^{p-2}(p-1)^{j}\mathbf{d}_{p-1}(j,\pi_{p-1}(m))
\end{equation}
At stage $p$ the leftmost permutation is formed by extracting the digits of  $\pi_{p-1}(m)$ and adjoining the symbol $p-1$ to the right of the chain, then we have
\begin{equation}
(p-1)p^{p-1}+\sum_{j=0}^{p-2}p^{j}\mathbf{d}_{p-1}(j,\pi_{p-1}(m))
\end{equation}
for the rightmost permutation. The permutation $\pi_{p}(m)$ is thus obtained by cyclically permuting this rightmost permutation a number of times $\mathbf{d}_{p}\left(0, \frac{m}{(p-1)!}\right)$. Thus, compared to the rightmost permutation for which $\mathbf{d}_{p}\left(0, \frac{m}{(p-1)!}\right)=0$ the k-th digit of the permutation corresponds to the place $\mathbf{d}_{p}\left(0, \mathbf{d}_{p}\left(0, \frac{m}{(p-1)!}\right)+k\right)$ and, therefore, the expression $\mathbf{d}_{p}\left(\mathbf{d}_{p}\left(0, \mathbf{d}_{p}\left(0, \frac{m}{(p-1)!}\right)+k\right), (p-1)p^{p-1}+\sum_{j=0}^{p-2}p^{j}\mathbf{d}_{p-1}(j,\pi_{p-1}(m))\right)$ outputs the value to which $k$ is sent by the permutation $\pi_{p}(m)$ and, hence, is equal to $\mathbf{d}_{p}(k,\pi_{p}(m))$. Thus, by Eq. (\ref{vermut}) we obtain the result of the theorem. \end{proof}

\begin{rem} Although the expression Eq. (\ref{permutiel}) above in terms of the digit function is original, the construction of the permutations is closely related to the classical ones by Burnside \cite{Burnside} and Moore \cite{Moore2} (see also \cite{Coxeter2}) and one certainly sees the action of the two generators $(0\ 1 \ \ldots p-1)$ and $(0 1)$ (employed by those authors) during the construction process. We are, however, not presenting the groups through algebraic relationships satisfied by the generators as in \cite{Coxeter2} but in a different, fully explicit manner, where no particular reference to generators is made and where one does know how to operate with arbitrary elements within the group, having visible the whole abstract structure of the group at the same time. 
\end{rem}

We thus know $\pi_{p}(m)$ for any given $m \in [0,p!-1]$ integer. Since the mapping between $m$ and all possible bifurcations of $p$ symbols is bijective, we can ask whether given $\pi_{p}(m)$, $m$ can be known, i.e. the inverse of the mapping. The answer, of course, is affirmative and is given by the following lemma.

\begin{lemma} The following expression holds
\begin{equation}
m=\sum_{n=0}^{p!-1}n\delta_{\pi_{p}(n)\pi_{p}(m)}
 \label{invem}
\end{equation}
where both $\pi_{p}(n)$ and $\pi_{p}(m)$ are any permutation of $p$ symbols given by Eq. (\ref{permutiel}).
\end{lemma}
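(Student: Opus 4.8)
The plan is to exploit the bijectivity of the indexing map $m \mapsto \pi_{p}(m)$, which is the only nontrivial ingredient. The constructive proof of the theorem furnishing Eq. (\ref{permutiel}) shows that, as $m$ ranges over the integers $0, 1, \ldots, p!-1$ expressed in the factorial number system, the cyclic-insertion procedure produces each of the $p!$ permutations of $p$ symbols exactly once; since the factorial representation of every integer in $[0, p!-1]$ is unique, distinct values of $m$ yield distinct permutations. Hence $\pi_{p}$ is \emph{injective} on its domain.

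Given injectivity, the first step is to note that $\pi_{p}(n) = \pi_{p}(m)$ holds if and only if $n = m$. Consequently the Kronecker delta appearing in Eq. (\ref{invem}) reduces to the ordinary delta on the indices, $\delta_{\pi_{p}(n)\pi_{p}(m)} = \delta_{nm}$, for all $n, m \in [0, p!-1]$.

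The final step is the sifting property of the Kronecker delta. Substituting the previous identity and using that $m$ itself lies in the summation range,
\[
\sum_{n=0}^{p!-1} n\, \delta_{\pi_{p}(n)\pi_{p}(m)} = \sum_{n=0}^{p!-1} n\, \delta_{nm} = m
\]
only the term $n = m$ survives and the asserted identity is obtained.

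I do not anticipate a genuine obstacle: once the injectivity of $\pi_{p}$ is granted---and it is already implicit in the constructive proof of Eq. (\ref{permutiel})---the lemma follows immediately from the defining property of the Kronecker delta. The single point requiring care is to make this injectivity explicit, since the delta in Eq. (\ref{invem}) compares the permutations (the large integers $\pi_{p}(n)$ and $\pi_{p}(m)$) rather than the indices directly, so one must confirm that equality of the permutations forces equality of the indices.
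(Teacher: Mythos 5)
Your proof is correct and follows essentially the same route as the paper's: both reduce the Kronecker delta $\delta_{\pi_{p}(n)\pi_{p}(m)}$ to $\delta_{nm}$ by arguing that equality of the integers $\pi_{p}(n)$ and $\pi_{p}(m)$ forces $n=m$ (the paper does this via uniqueness of the radix-$p$ digit representation plus the bijectivity of the indexing, you via the injectivity already implicit in the construction of Eq. (\ref{permutiel})), and then both conclude by the sifting property of the delta.
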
 

\begin{proof} The Kronecker delta in the sum of Eq. (\ref{invem}) can only output one if $\pi_{p}(m)$ and $\pi_{p}(n)$ are equal. Now, because of the uniqueness of the radix-$p$ representation for nonnegative integers \cite{Andrews} all digits of $\pi_{p}(m)$ and $\pi_{p}(n)$ must necessarily be equal as well, but then this means that $\pi_{p}(m)$ and $\pi_{p}(n)$ are the same permutation and, hence, that $m=n$. Thus, the sum returns only that $n$ which is equal to $m$.
\end{proof}


\begin{theor} The permutation $\pi_{p}(m)$ has inverse under composition, which is also a permutation $\pi_{p}^{-1}(m)$, given by
\begin{equation}
\pi_{p}^{-1}(m)=\sum_{k=0}^{p-1}p^{k}\sum_{n=0}^{p-1}n\mathbf{d}_{p}(\mathbf{d}_{p}(n,\pi_{p}(m)), p^{k}) \label{invep}
\end{equation}
so that $\pi_{p}(m) \circ \pi_{p}^{-1}(m) = \pi_{p}^{-1}(m) \circ \pi_{p}(m)=\pi_{p}(0)$, where $\pi_{p}(0)=\sum_{k=0}^{p-1}p^{k}k$ is the identity permutation (which fixes all elements).
\end{theor}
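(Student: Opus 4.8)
The plan is to read off the radix-$p$ digits of the proposed integer $\pi_{p}^{-1}(m)$ and show that they encode, via Eq. (\ref{vermut}), exactly the inverse of the permutation $\sigma$ defined by $\sigma(k)\equiv \mathbf{d}_{p}(k,\pi_{p}(m))$. First I would invoke the Kronecker-delta property Eq. (\ref{krone}), namely $\mathbf{d}_{p}(j,p^{k})=\delta_{jk}$, applied with $j=\mathbf{d}_{p}(n,\pi_{p}(m))=\sigma(n)$. This collapses the innermost factor to $\mathbf{d}_{p}(\sigma(n),p^{k})=\delta_{\sigma(n)k}$, so that the coefficient of $p^{k}$ in Eq. (\ref{invep}), i.e. the $k$-th digit of $\pi_{p}^{-1}(m)$, becomes the single sum $\sum_{n=0}^{p-1} n\,\delta_{\sigma(n)k}$.

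The key observation is that $\sigma$ is a bijection of $S$: by the construction in Eq. (\ref{permutiel}) its word contains each symbol exactly once. Hence for each fixed $k$ there is precisely one index $n$ with $\sigma(n)=k$, namely $n=\sigma^{-1}(k)$, so the sum over $n$ retains a single term and the $k$-th digit equals $\sigma^{-1}(k)$. Comparing with Eq. (\ref{vermut}) then shows $\pi_{p}^{-1}(m)=\sum_{k=0}^{p-1}p^{k}\sigma^{-1}(k)$, which is exactly the integer encoding the inverse permutation $\sigma^{-1}$; in particular it is itself a valid permutation index in $[0,p!-1]$.

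It remains to verify that this $\sigma^{-1}$ is a genuine two-sided inverse under composition, with product equal to the identity $\pi_{p}(0)=\sum_{k}p^{k}k$. Here I would apply the composition-decomposition theorem, Theorem \ref{compy}: taking $f(k)=\mathbf{d}_{p}(k,\pi_{p}(m))=\sigma(k)$, so that $\sum_{k}p^{k}f(k)=\pi_{p}(m)$, and $g=\sigma^{-1}$, yields $\mathbf{d}_{p}(\sigma^{-1}(n),\pi_{p}(m))=\sigma(\sigma^{-1}(n))=n$. Thus the $n$-th digit of $\pi_{p}(m)\circ\pi_{p}^{-1}(m)$ is $n$, so the composite integer is $\sum_{n}p^{n}n=\pi_{p}(0)$, the identity permutation; the opposite order $\pi_{p}^{-1}(m)\circ\pi_{p}(m)$ follows identically with the roles of $\sigma$ and $\sigma^{-1}$ interchanged.

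The computation is essentially mechanical once bijectivity is exploited; the only point requiring care, and the one I regard as the crux, is the collapse of the inner sum $\sum_{n} n\,\delta_{\sigma(n)k}$ to the single value $\sigma^{-1}(k)$, which rests entirely on $\sigma$ being a bijection so that $\sigma(n)=k$ has a unique solution. Everything else is a direct consequence of the elementary digit identities (\ref{krone}) and (\ref{vermut}) together with the composition-decomposition theorem.
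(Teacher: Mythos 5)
Your proof is correct, and it is organized differently from the paper's, although both rest on the same two ingredients: the Kronecker-delta identity Eq. (\ref{krone}) and the bijectivity of the permutation word. The paper never identifies the digits of $\pi_{p}^{-1}(m)$ explicitly; instead it substitutes Eq. (\ref{invep}) directly into the composition $\pi_{p}^{-1}(m)\circ\pi_{p}(m)$, so that (writing $\sigma(k)=\mathbf{d}_{p}(k,\pi_{p}(m))$ as you do) the inner factor becomes $\mathbf{d}_{p}\bigl(\sigma(n),p^{\sigma(k)}\bigr)=\delta_{\sigma(n)\sigma(k)}$, and then collapses the double sum using injectivity ($\sigma(n)=\sigma(k)\Rightarrow n=k$) to land on $\sum_{k}p^{k}k=\pi_{p}(0)$; the opposite order of composition is only asserted to be trivial. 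You instead decode Eq. (\ref{invep}) first: the same delta identity gives $\delta_{\sigma(n)k}$, bijectivity picks out the unique $n=\sigma^{-1}(k)$, and so the $k$-th digit of $\pi_{p}^{-1}(m)$ is $\sigma^{-1}(k)$, identifying $\pi_{p}^{-1}(m)$ as the word of the inverse permutation; both compositions then follow at once from the composition--decomposition theorem (Theorem \ref{compy}). Your route buys a slightly stronger intermediate fact (an explicit digit-by-digit description of the inverse word), and it treats the two orders of composition symmetrically rather than computing one and waving at the other. One small terminological slip: $\pi_{p}^{-1}(m)$ is the \emph{encoding} of a permutation, i.e. it equals $\pi_{p}(m')$ for some index $m'\in[0,p!-1]$; it is not itself an index in $[0,p!-1]$ (by the paper's Lemma it lies between $\pi_{p}(p!-1)$ and $\pi_{p}(0)$). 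This does not affect the argument.
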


\begin{proof} We have, 
\begin{equation}
\pi_{p}^{-1}(m) \circ \pi_{p}(m)=\sum_{k=0}^{p-1}p^{k}\sum_{n=0}^{p-1}n\mathbf{d}_{p}(\mathbf{d}_{p}(n,\pi_{p}(m)), p^{\mathbf{d}_{p}(k,\pi_{p}(m))})
\end{equation}
since $k \to \mathbf{d}_{p}(k,\pi_{p}(m))$ under the action of $\pi_{p}(m)$. Now, by using Eq. (\ref{krone}), we obtain
\begin{equation}
\pi_{p}^{-1}(m) \circ \pi_{p}(m)=\sum_{k=0}^{p-1}p^{k}\sum_{n=0}^{p-1}n
\delta_{\mathbf{d}_{p}(n,\pi_{p}(m)),\mathbf{d}_{p}(k,\pi_{p}(m)}
\end{equation}
but, since $\pi_{p}(m)$ is bijective, it is also injective, preserving distinctness of its elements under the mapping. Now, for the Kronecker delta to output a nonzero value $\mathbf{d}_{p}(n,\pi_{p}(m))=\mathbf{d}_{p}(k,\pi_{p}(m))$. But this is only possible if $n=k$ because all digits of $\pi_{p}(m)$ are distinct. Hence, we obtain
\begin{equation}
\pi_{p}^{-1}(m) \circ \pi_{p}(m)=\sum_{k=0}^{p-1}p^{k}\sum_{n=0}^{p-1}n\delta_{nk}=\sum_{k=0}^{p-1}p^{k}k=\pi_{p}(0)
\end{equation}
which proves the result (it is trivial to prove that $\pi_{p}(m) \circ \pi_{p}^{-1}(m)=\pi_{p}(0)$ as well). \end{proof}

\begin{theor} The symmetric group of $p$ symbols $\text{\emph{Sym}}_{p}$, is formed by all $p!$ permutations $\pi_{p}(m)$ obtained from Eq. (\ref{permutiel}) under the operation
\begin{equation}
\sum_{k=0}^{p-1}p^{k}\mathbf{d}_{p}\left(\mathbf{d}_{p}\left(k, \pi_{p}(n)\right), \pi_{p}(m) \right) 
\label{simetron}
\end{equation}
\end{theor}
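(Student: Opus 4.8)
The plan is to recognize that the operation in Eq. (\ref{simetron}) is nothing but the radix-$p$ encoding of the composition of the two permutations indexed by $m$ and $n$, and then to inherit all group axioms from the elementary fact that the composition of bijections of a finite set is again a bijection.

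First I would make the identification with function composition precise. Write $\sigma$ and $\tau$ for the permutations encoded by $\pi_{p}(m)$ and $\pi_{p}(n)$ respectively, so that by Eq. (\ref{vermut}) one has $\sigma(k)=\mathbf{d}_{p}(k,\pi_{p}(m))$ and $\tau(k)=\mathbf{d}_{p}(k,\pi_{p}(n))$. Applying the composition-decomposition theorem (Theorem \ref{compy}) with outer map $f=\sigma$, for which $\sum_{k=0}^{p-1}p^{k}f(k)=\pi_{p}(m)$, and inner map $g=\tau$, each summand becomes
\begin{equation}
\mathbf{d}_{p}\left(\mathbf{d}_{p}(k,\pi_{p}(n)),\pi_{p}(m)\right)=\sigma(\tau(k))=(\sigma\circ\tau)(k).
\end{equation}
Hence Eq. (\ref{simetron}) reduces to $\sum_{k=0}^{p-1}p^{k}(\sigma\circ\tau)(k)$, which is precisely the integer encoding, in the sense of Eq. (\ref{vermut}), of the composite permutation $\sigma\circ\tau$.

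With this identification the group axioms follow essentially for free. For closure I would argue that $\sigma\circ\tau$ is a bijection of $S$, being a composition of bijections, so its $p$ digits are pairwise distinct and it is therefore a genuine permutation; by the bijectivity of $m\mapsto\pi_{p}(m)$ established in the Lemma around Eq. (\ref{invem}), it coincides with $\pi_{p}(\ell)$ for a unique $\ell\in[0,p!-1]$, so the operation keeps us inside the set of the $p!$ permutations. Associativity is inherited verbatim from the associativity of function composition, transported through the encoding. The identity element is $\pi_{p}(0)$: by Eq. (\ref{pipe0}) together with Eq. (\ref{Galoprop}) one has $\mathbf{d}_{p}(k,\pi_{p}(0))=k$, so $\pi_{p}(0)$ fixes every element and acts neutrally under Eq. (\ref{simetron}). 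The inverse of each $\pi_{p}(m)$ is supplied explicitly by the preceding theorem, Eq. (\ref{invep}). Since the set contains exactly $p!$ elements, the resulting group is the full symmetric group $\text{Sym}_{p}$.

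The main obstacle is the first step: establishing rigorously that Eq. (\ref{simetron}) computes composition, since everything else is a routine transport of the symmetric-group structure on bijections of $S$ through the encoding. That first step, however, is exactly the content of the composition-decomposition theorem, so once the roles of $f$ and $g$ are correctly assigned the verification is immediate; the only point deserving a word of care is confirming that the digits of the output are distinct, so that the output is a legitimate permutation index rather than merely an integer, which is guaranteed by the injectivity of $\sigma\circ\tau$.
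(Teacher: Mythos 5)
Your proposal is correct and follows essentially the same route as the paper: both identify Eq. (\ref{simetron}) as the radix-$p$ encoding of the composition $\pi_{p}(m)\circ\pi_{p}(n)$ (the paper tracks digits directly and notes that Theorem \ref{compy} gives the same result, which is the route you take) and then conclude that the set of all $p!$ permutations under composition is $\text{Sym}_{p}$. Your explicit verification of closure, associativity, the identity $\pi_{p}(0)$, and inverses via Eq. (\ref{invep}) merely spells out what the paper leaves implicit in that final sentence.
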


\begin{proof} The digit $k$ is sent by the permutation $\pi_{p}(n)$ to $k'=\mathbf{d}_{p}\left(k, \pi_{p}(n)\right)$, which is again an element of $S$. The latter is sent by the permutation $\pi_{p}(m)$ to $k''=\mathbf{d}_{p}\left(k', \pi_{p}(m)\right)=\mathbf{d}_{p}\left(\mathbf{d}_{p}\left(k, \pi_{p}(n)\right), \pi_{p}(m)\right)$. Thus, the permutation resulting from the composition of the two permutations $ \pi_{p}(m)\circ \pi_{p}(n)$ is given by Eq. (\ref{simetron}). The same result can be easily obtained from the composition-decomposition theorem (Theorem \ref{compy}). The set of all permutations under composition is the symmetric group $\text{Sym}_{p}$. 
\end{proof}

We now construct with Eqs. (\ref{permutiel}) and (\ref{simetron}) the symmetric groups of small order. We observe from the above that there are two equivalent ways of representing permutations: 1) one can give the natural number $\pi_{p}(m)$; 2) one can give a word with the digits $\mathbf{d}_{p}(k,\pi_{p}(m))$ concatenated. Both representations are trivially related through Eq. (\ref{vermut}). In constructing the Cayley table for the groups we can also choose 3) to simply indicate $\pi_{p}(0)$, $\pi_{p}(1)$,... $\pi_{p}(p!-1)$  in the table, with the understanding that Eq. (\ref{permutiel}) provides such permutation. These three equivalent representations are illustrated in the following construction of  $\text{Sym}_{2}$ of $2$ elements (which is trivially isomorphic to $C_{2}$).

\begin{center}
\begin{tabular}{c|cc}
$\ \ \text{Sym}_{2}  \ \ $ & $ \ 2  \ $ & $\ \  1 \ \ $ \\
\hline
$\ 2 \ $ & $\ 2 \ $ & $\  1 \ $ \\
$\ 1 \ $ & $\ 1 \ $ & $ \ 2 \ $ \\
\end{tabular}
\qquad 
\begin{tabular}{c|cc}
$\ \ \text{Sym}_{2}  \ \ $ & $ \ 01  \ $ & $\ \  10 \ \ $ \\
\hline
$\ 01 \ $ & $\ 01 \ $ & $\  10 \ $ \\
$\ 10 \ $ & $\ 10 \ $ & $ \ 01 \ $ \\
\end{tabular}
\qquad 
\begin{tabular}{c|cc}
$\ \ \text{Sym}_{2}  \ \ $ & $ \ \pi_{2}(0)  \ $ & $\ \  \pi_{2}(1) \ \ $ \\
\hline
$\ \pi_{2}(0) \ $ & $\ \pi_{2}(0) \ $ & $\  \pi_{2}(1) \ $ \\
$\ \pi_{2}(1) \ $ & $\ \pi_{2}(1) \ $ & $ \ \pi_{2}(0) \ $ \\
\end{tabular}
\end{center}

The leftmost of these tables is directly the output of Eq. (\ref{simetron}) for input values $m=0, 1$ (from top to bottom in the rows) and $n=0,1$ (from left to right in the columns), since one has, from Eq. (\ref{permutiel}) 
\begin{eqnarray}
\pi_{2}(0)&=&0\cdot 2^{0}+1\cdot 2^{1}=2 \nonumber \\
\pi_{2}(1)&=&1\cdot 2^{0}+0\cdot 2^{1}=1 \nonumber
\end{eqnarray}
This clarifies the three tables above. The central one explicitly provides the arrangement of the digits and thus conveys more information.

The symmetric group $\text{Sym}_{3}$ is also easily constructed from Eqs. (\ref{permutiel}) and (\ref{simetron}) as

\begin{center}
\begin{tabular}{c|cccccc}
$\  \text{Sym}_{3}  \ $ & $ 21 \ $ & $\ 19 \ $ & $\   7 \  $ & $ 15 \ $ & $\ 11 \ $ & $\   5 \  $ \\
\hline
               $\ 21  \ $ & $\ 21 \ $ & $\ 19 \ $ & $\   7 \  $ & $ 15 \ $ & $\ 11 \ $ & $\   5 \  $ \\
               $\ 19  \ $ & $\ 19 \ $ & $\ 21 \ $ & $\   15 \  $ & $ 7 \ $ & $\ 5 \ $ & $\   11 \  $ \\
	       $\ 7  \ $ & $\ 7 \ $ & $\ 5 \ $ & $\   11 \  $ & $ 19 \ $ & $\ 21 \ $ & $\  15 \  $ \\
	       $\ 15  \ $ & $\ 15 \ $ & $\ 11 \ $ & $\  5 \  $ & $ 21 \ $ & $\ 19 \ $ & $\   7 \  $ \\
	       $\ 11  \ $ & $\ 11 \ $ & $\ 15 \ $ & $\   21 \  $ & $ 5 \ $ & $\ 7 \ $ & $\  19 \  $ \\
	       $\ 5  \ $ & $\ 5 \ $ & $\ 7 \ $ & $\  19 \  $ & $ 11 \ $ & $\ 15 \ $ & $\  21 \  $ \\
\end{tabular}
\qquad
\begin{tabular}{c|cccccc}
$\  \text{Sym}_{3}  \ $ & $ \ 012 \ $ & $\ 102 \ $ & $\   120 \  $ & $ 021 \ $ & $\ 201 \ $ & $\  210 \  $ \\
\hline    
		$\ 012  \ $ &  $\ 012 \ $  & $\ 102 \ $ & $\   120 \  $ & $ 021 \ $ & $\ 201 \ $ & $\  210 \  $ \\
               $\ 102  \ $ & $\ 102 \ $  & $\ 012 \ $ & $\  210 \  $ & $ 201 \ $ & $\ 021 \ $ & $\   120 \  $ \\
               $\ 120  \ $ & $\ 120 \ $  & $\ 021 \ $ & $\   201 \  $ & $ 210 \ $ & $\ 012 \ $ & $\   102 \  $ \\
	       $\ 021  \ $ & $\ 021 \ $  & $\ 120 \ $ & $\   102 \  $ & $ 012 \ $ & $\ 210 \ $ & $\  201 \  $ \\
	       $\ 201  \ $ & $\ 201 \ $  & $\ 210 \ $ & $\  012 \  $ & $ 102 \ $ & $\ 120 \ $ & $\   021 \  $ \\
	       $\ 210  \ $ & $\ 210 \ $  & $\ 201 \ $ & $\   021 \  $ & $ 120 \ $ & $\ 102 \ $ & $\  012 \  $ \\
\end{tabular}
\end{center}

\begin{figure*} 
\includegraphics[width=0.3 \textwidth]{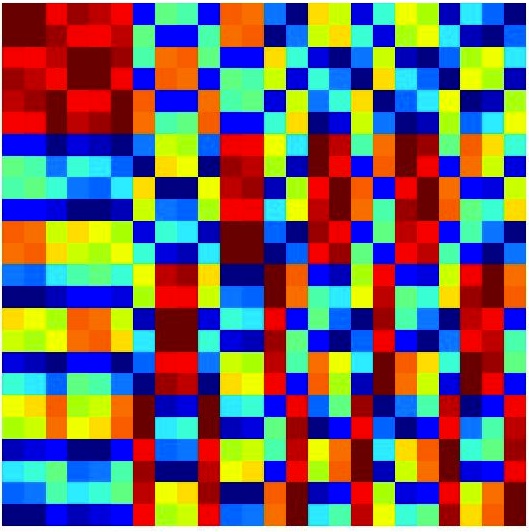}
\caption{\scriptsize{The Cayley table of the symmetric group $\text{Sym}_{4}$ of $p=4$ symbols and order $24$, obtained from Eq. (\ref{simetron}).~\\ ~~~~~~~~~~~~~~~~~~~\\ ~~~~~~~~~~~~~~~~~~~~~~~~~}} \label{sym4}

\includegraphics[width=0.5 \textwidth]{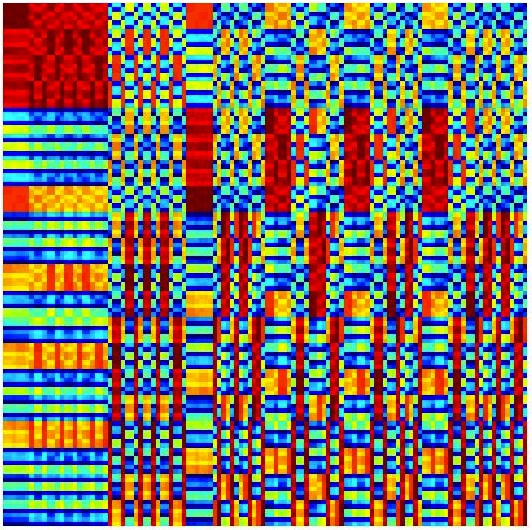}
\caption{\scriptsize{The Cayley table of the symmetric group $\text{Sym}_{5}$ of $p=5$ symbols and order $120$,  obtained from Eq. (\ref{simetron}).}} \label{sym}
\end{figure*}

Although $\pi_{p}(m)$ does not change monotonically with $m$, $m$ always increases from $0$ to $p-1$ in the first row, from left to right. There is thus still another equivalent way of representing the symmetric group, by giving in the tables just the value $m$ that maps each entry to $\pi(m)$. This table can be obtained by applying Eq. (\ref{invem}) to each entry of the left table above these lines. We thus, equivalently, obtain the isomorphic table
\begin{center}
\begin{tabular}{c|cccccc}
$\  \text{Sym}_{3}  \ $ & $ 0 \ $ & $\ 1 \ $ & $\  2 \  $ & $\ 3 \ $ & $\ 4 \ $ & $\   5 \  $ \\
\hline
               $\ 0  \ $ & $\ 0 \ $ & $\ 1 \ $ & $\  2 \  $ & $\ 3 \ $ & $\ 4 \ $ & $\   5 \  $ \\
               $\ 1  \ $ & $\ 1 \ $ & $\ 0 \ $ & $\   3 \  $ & $\ 2 \ $ & $\ 5 \ $ & $\   4 \  $ \\
	       $\ 2  \ $ & $\ 2 \ $ & $\ 5 \ $ & $\   4 \  $ & $\ 1 \ $ & $\ 0 \ $ & $\  3 \  $ \\
	       $\ 3  \ $ & $\ 3 \ $ & $\ 4 \ $ & $\  5 \  $ & $\ 0 \ $ & $\ 1 \ $ & $\   2 \  $ \\
	       $\ 4  \ $ & $\ 4 \ $ & $\ 3 \ $ & $\   0 \  $ & $\ 5 \ $ & $\ 2 \ $ & $\  1 \  $ \\
	       $\ 5  \ $ & $\ 5 \ $ & $\ 2 \ $ & $\  1 \  $ & $\ 4 \ $ & $\ 3 \ $ & $\  0 \  $ \\
\end{tabular}
\end{center}

The symmetric group $\text{Sym}_{3}$ is isomorphic to the dihedral group of degree $q=3$ and order $p=6$, $D_{6}$,  derived above. For the symmetric groups $\text{Sym}_{4}$ (the group of rotations that fix a cube or a octahedron) \cite{Ma} and $\text{Sym}_{5}$ we present the Cayley tables as color codes.  The red colored places represent those permutations that fix the element $p-1$ and the bluish places correspond to permutations that send the element $p-1$ to $0$. In this way the darkest red permutation corresponds to the identity $\pi_{p}(0)$ and the darkest blue permutation to $\pi_{p}(p!-1)$.

\subsection{The alternating group of $p$ symbols $A_{p}$}

A permutation has even/odd signature if it can be reached by an even/odd number of transpositions. We follow Passman \cite{Passman} in introducing the following definition, adapted to our notation.

\begin{theor}\label{signa} Let the signature $\sigma_{\pi_{p}(m)}$ of a permutation $\pi_{p}(m)$ given by Eq. (\ref{permutiel}) be $1$ if it is even and $0$ if it is odd. Then,  we have 
\begin{equation}
\sigma_{\pi_{p}(m)}=\prod_{j=1}^{p-1}\prod_{k=0}^{j-1}\frac{j-k}{\mathbf{d}_{p}(j, \pi_{p}(m))-\mathbf{d}_{p}(k, \pi_{p}(m))} \label{signa}
\end{equation}
\end{theor}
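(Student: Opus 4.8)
The plan is to recognize the double product in Eq.~(\ref{signa}) as the classical Lagrange--Leibniz formula for the sign of a permutation and to evaluate it directly. Writing $\sigma(k)\equiv\mathbf{d}_{p}(k,\pi_{p}(m))$ for $k\in S$, the byline notation Eq.~(\ref{permu}) together with Eq.~(\ref{vermut}) shows that the $p$ digits of $\pi_{p}(m)$ are the images of a permutation and hence are exactly the $p$ distinct elements of $S$, so $\sigma:S\to S$ is a bijection. Since the outer index $j$ and inner index $k$ together run over all ordered pairs with $0\le k<j\le p-1$, I would first rewrite the right-hand side of Eq.~(\ref{signa}) as the single product $P(\sigma)=\prod_{0\le k<j\le p-1}\frac{j-k}{\sigma(j)-\sigma(k)}$ over unordered pairs.

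Next I would show that $P(\sigma)=\pm1$ by a counting argument. The numerator $\prod_{k<j}(j-k)$ is a fixed positive integer, and because $\sigma$ is a bijection of $S$ the family of unordered pairs $\{\sigma(k),\sigma(j)\}$ is merely a reordering of the family $\{k,j\}$; hence $\prod_{k<j}|\sigma(j)-\sigma(k)|=\prod_{k<j}(j-k)$ and the two absolute values cancel exactly. Thus $P(\sigma)=(-1)^{N}$, where $N$ is the number of inversions of $\sigma$, i.e.\ the number of pairs $k<j$ with $\sigma(j)<\sigma(k)$, each such pair contributing precisely one negative factor.

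It remains to identify $(-1)^{N}$ with the even/odd classification. I would do this by establishing that $P$ is a homomorphism: $P(\mathrm{id})=1$ is immediate (then $\sigma(j)-\sigma(k)=j-k$), while the standard rearrangement of the Leibniz product, splitting $\frac{\sigma(\tau(j))-\sigma(\tau(k))}{j-k}$ as $\frac{\sigma(\tau(j))-\sigma(\tau(k))}{\tau(j)-\tau(k)}\cdot\frac{\tau(j)-\tau(k)}{j-k}$ and using that each factor is invariant under swapping its two indices, gives $P(\sigma\circ\tau)=P(\sigma)P(\tau)$. Evaluating $P$ on a single transposition yields $-1$, since an adjacent transposition creates exactly one inversion and a general transposition an odd number. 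Consequently, if $\pi_{p}(m)$ is a product of $t$ transpositions then $P=(-1)^{t}$, which is $+1$ precisely when $\pi_{p}(m)$ is even, as claimed (and the classical value $-1$ when it is odd).

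The main obstacle is the usual one underlying the entire theory of the sign: the parity of $t$ must be shown to be an invariant of the permutation, independent of how it is written as a product of transpositions. The product $P(\sigma)$ circumvents this difficulty because it is manifestly a function of $\sigma$ alone, so the real content is the multiplicativity $P(\sigma\circ\tau)=P(\sigma)P(\tau)$ together with the value $-1$ on transpositions, from which well-definedness of parity \emph{follows} rather than being assumed. Following Passman \cite{Passman}, I would treat this multiplicative step as the heart of the argument, the remaining identifications being the bookkeeping carried out above; the bijectivity of $\sigma$, already guaranteed by the construction in Eq.~(\ref{permutiel}), is what makes the cancellation of absolute values in the second step exact.
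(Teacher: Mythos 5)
Your proof is correct and rests on the same core idea as the paper's: evaluate the Leibniz/Vandermonde product, check it equals $1$ on the identity and $-1$ on a transposition, and propagate to all permutations. But your version supplies exactly the two steps the paper leaves as assertions, and this is worth spelling out. First, the paper's transposition computation is written as if only the single factor indexed by $\{j_{0},k_{0}\}$ changes sign, with all remaining factors staying of the form $\frac{j-k}{j-k}$; in fact every pair containing exactly one of $j_{0},k_{0}$ also has its denominator altered, and one needs your rearrangement/absolute-value cancellation (bijectivity of $k\mapsto\mathbf{d}_{p}(k,\pi_{p}(m))$ makes $\prod_{k<j}\lvert\sigma(j)-\sigma(k)\rvert=\prod_{k<j}(j-k)$, and the sign flips among those extra pairs cancel two by two) to conclude the value is $-1$. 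Second, the paper disposes of path-independence with the phrase that the result is ``by construction, independent of the path'' of transpositions; the actual content is your multiplicativity lemma $P(\sigma\circ\tau)=P(\sigma)P(\tau)$, which is what makes the parity of the number of transpositions well defined and turns the two evaluations into a proof. So your route is the standard rigorous completion of the paper's sketch rather than a different method. One further point both arguments expose: the product manifestly takes the value $-1$ (not $0$) on odd permutations, so the theorem's stated convention ``$1$ if even, $0$ if odd'' is inconsistent with the formula; this is harmless for the later use (the alternating group is cut out by $\sigma_{\pi_{p}(m)}=1$), but your parenthetical noting the classical value $-1$ is the correct reading.
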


\begin{proof} First note that the products over $j$ and $k$ scan all $p(p-1)/2={p \choose 2}$ possible choices of 2 distinct elements $j$ and $k$ out of the set of integers in the interval $[0,p-1]$ (the order being irrelevant). Thus, the denominator can never be zero and, therefore, $\sigma_{\pi_{p}(m)}$ is well defined.

If we consider the identity permutation $\pi_{p}(0)$, we clearly have
\begin{equation}
\sigma_{\pi_{p}(0)}=\prod_{j=1}^{p-1}\prod_{k=0}^{j-1}\frac{j-k}{j-k}=1 \label{signaide}
\end{equation}

Now, let us consider the permutation for $m'$, which introduces a transposition on the $j_{0}$ and $k_{0}$ elements on the identity. We have
\begin{equation}
\sigma_{\pi_{p}(m')}=\frac{j_{0}-k_{0}}{k_{0}-j_{0}}\prod_{j=1, j \ne j_{0}}^{p-1}\prod_{k=0, k\ne k_{0}}^{j-1}\frac{j-k}{j-k}=-1=-\sigma_{\pi_{p}(0)} \label{signatra}
\end{equation}

All permutations can be reached from the identity through transpositions, as we have already seen in constructing the permutations through cycles. Since each permutation is located in a specific place in the table of permutations, where all them can be reached through transpositions, with the identity being at location $0$, it is clear that Eq. (\ref{signa}) gives the signature of the permutation. The latter is, by construction, independent of the path in permutation space taken to reach it from the identity through composition of transpositions.  \end{proof}

This leads to the straightforward construction of the group formed by all even permutations, called the alternating group $A_{p}$ of order $p!/2$. That this is a group follows simply by the fact that the subgroup of all even permutations is closed and contains the identity $\pi_{p}(0)$. The inverse of each element is also there since the inverse of an even permutation can be trivially shown to be an even permutation as well. Finally, composition of permutations is always associative. Note that the group is generally non-commutative. For all $p \ge 5$, alternating groups are also known to be \emph{simple}: they have no normal subgroups. We now construct the infinite family of alternating groups.

\begin{theor} The alternating group of $p$ symbols $A_{p}$, is formed by all $p!/2$ even permutations $\pi_{p}(m)$ obtained from Eq. (\ref{permutiel}) under the operation Eq. (\ref{simetron}) and the further constraint
\begin{equation}
\sigma_{\pi_{p}(m)}=1
\label{simetronal}
\end{equation}
for all $m$ integer $\in [0, p!/2-1]$.
\end{theor}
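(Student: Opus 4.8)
The plan is to show that imposing $\sigma_{\pi_p(m)}=1$ carves out of the already-constructed symmetric group $\text{Sym}_p$ a subgroup of order $p!/2$, which is by definition the alternating group $A_p$. Since Eq.~(\ref{signaide}) gives $\sigma_{\pi_p(0)}=1$ for the identity and Eq.~(\ref{signatra}) gives $\sigma=-1$ for a single transposition, the number produced by the signature formula~(\ref{signa}) is $+1$ exactly on the even permutations and $-1$ on the odd ones, so the constraint $\sigma_{\pi_p(m)}=1$ retains precisely the even permutations. I would note in passing that these even permutations are \emph{not} the ones indexed by $m\in[0,p!/2-1]$ (for $p=3$ they sit at $m=0,2,4$, as one reads off the $\text{Sym}_3$ table above), so the statement should be read as ``those $\pi_p(m)$, $m\in[0,p!-1]$, with $\sigma_{\pi_p(m)}=1$,'' after which they may be relabelled by an index running over $[0,p!/2-1]$.

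The crux of the argument, and the step I expect to be the main obstacle, is the \emph{multiplicativity} of the signature, namely
\[
\sigma_{\pi_p(m)\circ\pi_p(n)}=\sigma_{\pi_p(m)}\,\sigma_{\pi_p(n)}.
\]
Writing $\pi(k)\equiv\mathbf{d}_p(k,\pi_p(m))$ and $\tau(k)\equiv\mathbf{d}_p(k,\pi_p(n))$, so that by Eq.~(\ref{simetron}) the composition obeys $(\pi\circ\tau)(k)=\pi(\tau(k))$, I would start from Eq.~(\ref{signa}) and insert a telescoping factor,
\[
\sigma_{\pi\circ\tau}=\prod_{j=1}^{p-1}\prod_{k=0}^{j-1}\frac{\pi(\tau(j))-\pi(\tau(k))}{j-k}
=\left(\prod_{j>k}\frac{\pi(\tau(j))-\pi(\tau(k))}{\tau(j)-\tau(k)}\right)\left(\prod_{j>k}\frac{\tau(j)-\tau(k)}{j-k}\right).
\]
The second factor is exactly $\sigma_{\tau}$. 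In the first factor I would use that $\tau$ is a bijection (guaranteed by the construction Eq.~(\ref{permutiel}) together with the inverse Eq.~(\ref{invep})), so that $\{k,j\}\mapsto\{\tau(k),\tau(j)\}$ permutes the set of unordered pairs, while each factor $\frac{\pi(b)-\pi(a)}{b-a}$ is invariant under swapping $a\leftrightarrow b$; reindexing the product by $a=\tau(k)$, $b=\tau(j)$ then turns the first factor into $\prod_{b>a}\frac{\pi(b)-\pi(a)}{b-a}=\sigma_{\pi}$. Verifying that the sign changes in numerator and denominator cancel, so that the product is genuinely over \emph{unordered} pairs, is the delicate bookkeeping that drives the proof.

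With multiplicativity in hand the group axioms are immediate and largely inherited from $\text{Sym}_p$, which was already shown to be a group under Eq.~(\ref{simetron}). Closure holds because $1\cdot 1=1$; the neutral element $\pi_p(0)$ lies in the set by Eq.~(\ref{signaide}); associativity is inherited from composition of functions; and applying multiplicativity to $\pi_p(m)\circ\pi_p^{-1}(m)=\pi_p(0)$ gives $\sigma_{\pi_p^{-1}(m)}\,\sigma_{\pi_p(m)}=1$, so $\sigma_{\pi_p(m)}=1$ forces $\sigma_{\pi_p^{-1}(m)}=1$ and the inverse Eq.~(\ref{invep}) again lies in the set.

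Finally, to pin down the order $p!/2$ I would exhibit a parity-reversing bijection between the even and odd permutations: left composition with a fixed transposition $t=(0\,1)$ sends $\pi\mapsto t\circ\pi$, which by multiplicativity flips $\sigma$ between $+1$ and $-1$ and is an involution on $\text{Sym}_p$. Hence the even and odd permutations are equinumerous, and since together they exhaust all $p!$ elements of $\text{Sym}_p$ each class has exactly $p!/2$ members. Equivalently, $\pi\mapsto\sigma_\pi$ is a surjective homomorphism onto $\{+1,-1\}\cong C_2$ whose kernel — the even permutations — is a normal subgroup of index $2$, and therefore of order $p!/2$. This identifies the constrained set with the alternating group $A_p$ and completes the proof.
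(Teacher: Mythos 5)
Your proof is correct, and it is substantially more complete than what the paper itself offers. The paper gives no formal proof of this theorem at all: it relies on the prose paragraph preceding the statement, which asserts that the even permutations are closed under composition, contain the identity $\pi_{p}(0)$, contain inverses (``trivially shown''), and inherit associativity, and it simply states the order $p!/2$ without argument. The outline is the same as yours --- carve the even permutations out of $\text{Sym}_{p}$ as a subgroup --- but the paper leaves implicit exactly the two points you identify as the crux. First, since the signature is \emph{defined} in the paper by the product formula (\ref{signa}) rather than by counting transpositions, closure of the set $\{\sigma_{\pi_{p}(m)}=1\}$ under the operation (\ref{simetron}) is not immediate; it requires the multiplicativity $\sigma_{\pi\circ\tau}=\sigma_{\pi}\sigma_{\tau}$, which you prove by the telescoping-and-reindexing argument (the bijectivity of $\tau$ on unordered pairs and the sign-invariance of each factor under swapping the pair are precisely the needed bookkeeping). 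Second, the count $p!/2$ needs the equinumerosity of even and odd permutations, which you obtain from the involution $\pi\mapsto t\circ\pi$ (equivalently, from $\sigma$ being a surjective homomorphism onto $C_{2}$ with kernel of index $2$); the paper asserts the order with no justification. You also correctly flag two imprecisions in the paper's statement that your reading repairs: the even permutations are \emph{not} those indexed by $m\in[0,p!/2-1]$ (for $p=3$ they sit at $m=0,2,4$ in the paper's own $\text{Sym}_{3}$ table), and the signature formula (\ref{signa}) takes values $\pm 1$, not the $\{1,0\}$ claimed in Theorem \ref{signa}'s preamble, so the constraint must be read as $\sigma=+1$ versus $\sigma=-1$. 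In short: same strategy, but your write-up supplies the lemma (multiplicativity) and the counting argument on which the theorem actually rests.
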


The first interesting alternating group is $A_{4}$. It contains the rotational symmetries of the tetrahedron. $A_{4}$ plays an important role as model in flavor physics for understanding quark and neutrino mixing angles \cite{He}. The Cayley table of the group, as provided by Eqs. (\ref{permutiel}), (\ref{simetron}) and (\ref{simetronal}) is given in terms of the word representation of the permutations as 
~\\

\begin{center}
\begin{tabular}{c|cccccccccccc}
$\  A_{4}  \ $ & $\ 0123\ $ & $ \ 1203\ $ & $ \ 2013\ $ & $ \ 0231\ $ & $ \ 2130\ $ & $ \ 1032\ $ & $ \ 2301\ $ & $ \ 0312\ $ & $ \ 1320\ $ & $ \ 3102\ $ & $ \ 3021\ $ & $ \ 3210 \  $ 
\\
\hline
$\  0123  \ $ & $\ 0123\ $ & $ \ 1203\ $ & $ \ 2013\ $ & $ \ 0231\ $ & $ \ 2130\ $ & $ \ 1032\ $ & $ \ 2301\ $ & $ \ 0312\ $ & $ \ 1320\ $ & $ \ 3102\ $ & $ \ 3021\ $ & $ \ 3210 \  $ 
\\
$\  1203  \ $ & $\ 1203\ $ & $ \ 2013\ $ & $ \ 0123\ $ & $ \ 1032\ $ & $ \ 0231\ $ & $ \ 2130\ $ & $ \ 0312\ $ & $ \ 1320\ $ & $ \ 2301\ $ & $ \ 3210\ $ & $ \ 3102\ $ & $ \ 3021 \  $ 
\\
$\  2013  \ $ & $\ 2013\ $ & $ \ 0123\ $ & $ \ 1203\ $ & $ \ 2130\ $ & $ \ 1032\ $ & $ \ 0231\ $ & $ \ 1320\ $ & $ \ 2301\ $ & $ \ 0312\ $ & $ \ 3021\ $ & $ \ 3210\ $ & $ \ 3102 \  $ 
\\
$\  0231  \ $ & $\ 0231\ $ & $ \ 2301\ $ & $ \ 3021\ $ & $ \ 0312\ $ & $ \ 3210\ $ & $ \ 2013\ $ & $ \ 3102\ $ & $ \ 0123\ $ & $ \ 2130\ $ & $ \ 1203\ $ & $ \ 1032\ $ & $ \ 1320 \  $ 
\\
$\  2130  \ $ & $ \ 2130\ $ & $ \ 1320\ $ & $ \ 3210\ $ & $ \ 2301\ $ & $ \ 3102\ $ & $ \ 1203\ $ & $ \ 3021\ $ & $ \ 2013\ $ & $ \ 1032\ $ & $ \ 0123\ $ & $ \ 0231\ $ & $ \ 0312 \  $ 
\\
$\ 1032  \ $ & $\ 1032\ $ & $ \ 0312\ $ & $ \ 3102\ $ & $ \ 1320\ $ & $ \ 3021\ $ & $ \ 0123\ $ & $ \ 3210\ $ & $ \ 1203\ $ & $ \ 0231\ $ & $ \ 2013\ $ & $ \ 2130\ $ & $ \ 2301 \  $ 
\\
$\ 2301  \ $ & $\ 2301\ $ & $ \ 3021\ $ & $ \ 0231\ $ & $ \ 2013\ $ & $ \ 0312\ $ & $ \ 3210\ $ & $ \ 0123\ $ & $ \ 2130\ $ & $ \ 3102\ $ & $ \ 1320\ $ & $ \ 1203\ $ & $ \ 1032 \  $ 
\\
$\ 0312  \ $ & $\ 0312\ $ & $ \ 3102\ $ & $ \ 1032\ $ & $ \ 0123\ $ & $ \ 1320\ $ & $ \ 3021\ $ & $ \ 1203\ $ & $ \ 0231\ $ & $ \ 3210\ $ & $ \ 2301\ $ & $ \ 2013\ $ & $ \ 2130 \  $ 
\\
$\ 1320  \ $ & $\ 1320\ $ & $ \ 3210\ $ & $ \ 2130\ $ & $ \ 1203\ $ & $ \ 2301\ $ & $ \ 3102\ $ & $ \ 2013\ $ & $ \ 1032\ $ & $ \ 3021\ $ & $ \ 0312\ $ & $ \ 0123\ $ & $ \ 0231 \  $ 
\\
$\  3102  \ $ & $\ 3102\ $ & $ \ 1032\ $ & $ \ 0312\ $ & $ \ 3021\ $ & $ \ 0123\ $ & $ \ 1320\ $ & $ \ 0231\ $ & $ \ 3210\ $ & $ \ 1203\ $ & $ \ 2130\ $ & $ \ 2301\ $ & $ \ 2013 \  $ 
\\
$\  3021  \ $ & $\ 3021\ $ & $ \ 0231\ $ & $ \ 2301\ $ & $ \ 3210\ $ & $ \ 2013\ $ & $ \ 0312\ $ & $ \ 2130\ $ & $ \ 3102\ $ & $ \ 0123\ $ & $ \ 1032\ $ & $ \ 1320\ $ & $ \ 1203 \  $ 
\\
$\  3210  \ $ & $\ 3210\ $ & $ \ 2130\ $ & $ \ 1320\ $ & $ \ 3102\ $ & $ \ 1203\ $ & $ \ 2301\ $ & $ \ 1032\ $ & $ \ 3021\ $ & $ \ 2013\ $ & $ \ 0231\ $ & $ \ 0312\ $ & $ \ 0123 \  $ 
\\
\end{tabular}
\end{center}

\begin{figure*} 
\includegraphics[width=0.3 \textwidth]{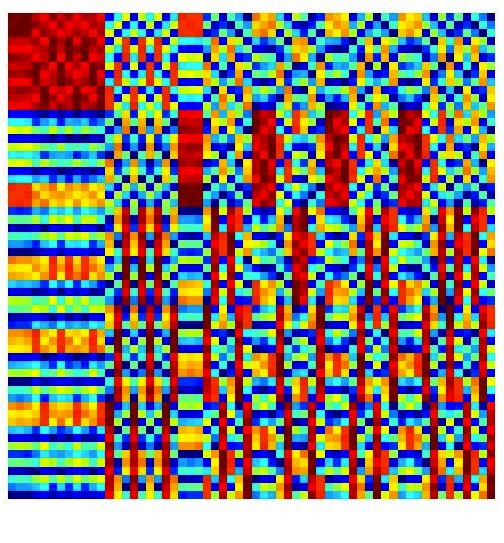}
\includegraphics[width=1.0 \textwidth]{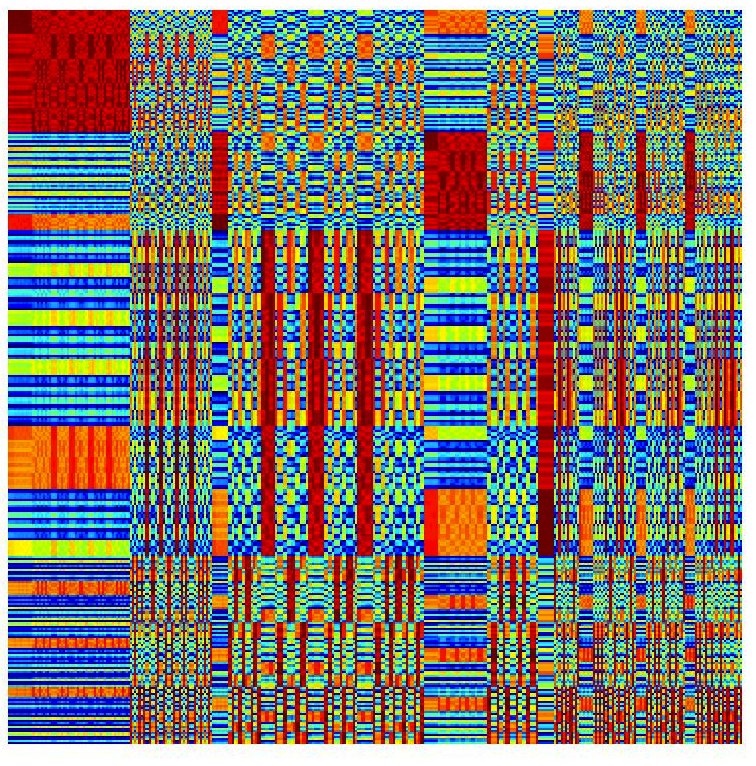}
\caption{\scriptsize{The Cayley table of the alternating group $A_{5}$ of $p=5$ symbols and order $60$ (top) and $A_{6}$ of $p=6$ symbols and order $360$, obtained from Eqs. (\ref{permutiel}), (\ref{simetron}) and (\ref{simetronal}).}} \label{altern5}
\end{figure*}


The Cayley tables of $A_{5}$ and $A_{6}$ (both are simple groups \cite{ConwayATLAS}) are displayed in Fig. \ref{altern5}. The alternating group $A_{5}$ is the group of rotations of an icosahedron or a dodecahedron \cite{Ma}. Together with $A_{4}$ and $\text{Sym}_{4}$ derived above, $A_{5}$ constitute the polyhedral groups, which are the groups of rotations that map the Platonic solids onto themselves \cite{Hamermesh, Ma}. The polyhedral groups and the dihedral groups $D_{2q}$, whose Cayley tables have also been obtained in this article, are all subgroups of the special unitary group SU(3) \cite{Fairbairn, Ludl, Hamermesh}.

\section{Conclusion}

In this article, explicit expressions have been found for the Cayley tables of infinite families of finite groups including cyclic groups and their direct sums (i.e. \emph{all} abelian finite groups), dihedral, dicyclic and the whole family of metacyclic groups where they are contained and, finally, the symmetric and alternating groups. The interest of having explicit expressions for the Cayley tables is motivated by recent results \cite{arxiv3} where fractal discontinuous curves and surfaces have been constructed so that their ordinary addition is equal everywhere to a prescribed function (that can be continuous and differentiable). All these mathematical methods, although still in a rather abstract stage of development, are hoped to find applications in theoretical particle physics, statistical mechanics and chemical physics.

We note that the Cayley table contains all information of a given group. From the Cayley table it is immediate to determine certain subgroups, as e.g. the center of the group, and regular representations can automatically be constructed. The mathematical expressions also provide new routes to find the conjugacy classes, irreducible representations and character tables, all them derivable from the Cayley tables and from the so-called 'great orthogonality theorem' as well as the 'celebrated theorem' of group theory \cite{Tinkham}. 

Our approach is an alternative to the classical ones (which focus on the group generators and the algebraic equations that they satisfy \cite{Coxeter2}). It makes use of a digit function, whose mathematical properties have been also further investigated here, following our previous works \cite{arxiv2, arxiv3}. The digit function is a central concept in our recent formulation of quantum mechanics \cite{QUANTUM} and we have shown how naturally does it relate to group theory.

\bibliography{biblos}{}
\bibliographystyle{h-physrev3.bst}

\end{document}